\title{On the Shape Containment Problem within the Amoebot Model with Reconfigurable Circuits}
\titlerunning{On the Shape Containment Problem within the Amoebot Model}
\author{Matthias Artmann}{Paderborn University, Germany}{matthias.artmann@uni-paderborn.de}{https://orcid.org/0009-0006-4530-2303}{}
\author{Andreas Padalkin}{Paderborn University, Germany}{andreas.padalkin@upb.de}{https://orcid.org/0000-0002-4601-9597}{}
\author{Christian Scheideler}{Paderborn University, Germany}{scheideler@upb.de}{https://orcid.org/0000-0002-5278-528X}{}
\authorrunning{M. Artmann, A. Padalkin, and C. Scheideler}
\keywords{programmable matter, amoebot model, reconfigurable circuits, shape containment}
\newcommand*{\eg}{e.\,g.}
\newcommand*{\ie}{i.\,e.}
\newcommand*{\whp}{w.h.p.}
\newcommand*{\etc}{etc.}
\newcommand*{\Wlog}{w.l.o.g.}
\newcommand{\cb}{,\allowbreak}
\newcommand*{\PASC}{PASC}
\newcommand*{\Nats}{\mathbb{N}}
\newcommand*{\Ints}{\mathbb{Z}}
\newcommand*{\Reals}{\mathbb{R}}
\newcommand{\Set}[1]{\{ #1 \}}
\newcommand{\SetBar}[2]{\{ #1 \mid #2 \}}
\newcommand{\SetBarL}[2]{\left\{ #1 ~\middle|~ #2 \right\}}
\newcommand{\Abs}[1]{| #1 |}
\newcommand{\Ceil}[1]{\lceil #1 \rceil}
\newcommand{\Floor}[1]{\lfloor #1 \rfloor}
\newcommand{\BigO}[1]{\mathcal{O}\,( #1 )}
\newcommand{\BigOmega}[1]{\Omega\,( #1 )}
\newcommand{\BigTheta}[1]{\Theta\,( #1 )}
\newcommand{\LittleO}[1]{\mathrm{o}\,( #1 )}
\newcommand{\Directions}{\mathcal{D}}
\newcommand{\E}{\textsc{E}}
\newcommand{\NNE}{\textsc{NE}}
\newcommand{\NNW}{\textsc{NW}}
\newcommand{\W}{\textsc{W}}
\newcommand{\SSW}{\textsc{SW}}
\newcommand{\SSE}{\textsc{SE}}
\newcommand{\UVec}[1]{\bm{u}_{ #1 }}
\newcommand{\Side}[1]{\bm{\mathrm{ #1 }}}
\newcommand{\Geqt}{G_{\Delta}}
\newcommand{\Veqt}{V_{\Delta}}
\newcommand{\Eeqt}{E_{\Delta}}
\newcommand{\PinCfg}{\mathcal{Q}}
\newcommand{\kmax}{k_{\mathrm{max}}}
\newcommand{\VP}{\mathcal{V}}
\newcommand{\Line}{\mathrm{L}}
\newcommand{\Tri}{\mathrm{T}}
\newcommand{\Path}{\Pi}
\newcommand{\Map}{\Phi}
\newcommand{\NodeType}{\tau}
\newcommand{\NodeDir}{d}
\newcommand{\NodeLen}{\ell}
\newcommand{\TLine}{\Line}
\newcommand{\TTri}{\Tri}
\newcommand{\TUnion}{\cup}
\newcommand{\TSum}{\oplus}
\newcommand{\TShift}{+}
\begin{document}

\maketitle

\begin{abstract}
In \emph{programmable matter}, we consider a large number of tiny, primitive computational entities called \emph{particles} that run distributed algorithms to control global properties of the particle structure.
\emph{Shape formation} problems, where the particles have to reorganize themselves into a desired shape using basic movement abilities, are particularly interesting.
In the related \emph{shape containment} problem, the particles are given the description of a shape \(S\) and have to find maximally scaled representations of \(S\) within the initial configuration, without movements.
While the shape formation problem is being studied extensively, no attention has been given to the shape containment problem, which may have additional uses beside shape formation, such as detection of structural flaws.

In this paper, we consider the shape containment problem within the \emph{geometric amoebot model} for programmable matter, using its \emph{reconfigurable circuit extension} to enable the instantaneous transmission of primitive signals on connected subsets of particles.
We first prove a lower runtime bound of \(\BigOmega{\sqrt{n}}\) synchronous rounds for the general problem, where \(n\) is the number of particles.
Then, we construct the class of \emph{snowflake} shapes and its subclass of \emph{star convex} shapes, and present solutions for both.
Let \(k\) be the maximum scale of the considered shape in a given amoebot structure.
If the shape is star convex, we solve it within \(\BigO{\log^2 k}\) rounds.
If it is a snowflake but not star convex, we solve it within \(\BigO{\sqrt{n} \log n}\) rounds.
\end{abstract}



\section{Introduction}
\label{sec:introduction}

Programmable matter envisions a material that can change its physical properties in a programmable fashion~\cite{toffoli1993programmable} and act based on sensory information from its environment.
It is typically viewed as a system of many identical micro-scale computational entities called \emph{particles}.
Potential application areas include minimally invasive surgery, maintenance, exploration and manufacturing.
While physical realizations of this concept are on the horizon, with significant progress in the field of micro-scale robotics~\cite{yang2022survey,chafik2024conventional}, the fundamental capabilities and limitations of such systems have been studied in theory using various models~\cite{thalamy2019survey}.

In the \emph{amoebot model} of programmable matter, the particles are called \emph{amoebots} and are placed on a connected subset of nodes in a graph.
The \emph{geometric} variant of the model specifically considers the infinite regular triangular grid graph.
This model has been used to study various problems such as leader election, object coating, convex hull formation and shape formation~\cite{derakhshandeh2014brief,derakhshandeh2015leader,daymude2018runtime,diluna2020shape} (also see~\cite{daymude2023canonical} and the references therein).

To circumvent the natural lower bound of \(\BigOmega{D}\) for many problems in the amoebot model, where \(D\) is the diameter of the structure, we consider the \emph{reconfigurable circuit extension} to the model.
In this extension, the amoebots are able to construct simple communication networks called \emph{circuits} on connected subgraphs and broadcast primitive signals on these circuits instantaneously.
This has been shown to accelerate amoebot algorithms significantly, allowing polylogarithmic solutions for problems like leader election, consensus, shape recognition and shortest path forest construction~\cite{feldmann2022coordinating,padalkin2022structural,padalkin2024polylogarithmic}.

The \emph{shape formation problem}, where the initial structure of amoebots has to reconfigure itself into a given target shape, is a standard problem of particular interest~\cite{thalamy2019survey}.
As a consequence, related problems that can lead to improved shape formation solutions are interesting as well.
In this paper, we study the related \emph{shape containment problem}:
Given the description of a shape \(S\), the amoebots have to determine the maximum scale at which \(S\) can be placed within their structure and identify all valid placements at this scale.
A solution to this problem can be extended into a shape formation algorithm by \emph{self-disassembly}, \ie{}, disconnecting all amoebots that are not part of a selected placement of the shape from the structure~\cite{gilpin2012shape,gauci2018programmable}.
The problem can also be interpreted as a discrete variant of the \emph{polygon containment problem} in classical computational geometry, which has been studied extensively in various forms~\cite{chazelle1983polygon,sharir1994extremal}.
However, to our best knowledge, there are no distributed solutions where no single computing unit has the capacity to store both polygons in memory.


\subsection{Geometric Amoebot Model}
\label{subsec:amoebot_model}

We use the \emph{geometric amoebot model} for programmable matter, as proposed in~\cite{derakhshandeh2014brief}.
Using the terminology from the recent \emph{canonical} model description~\cite{daymude2023canonical}, we assume \emph{common direction} and \emph{chirality}, \emph{constant-size memory} and a fully synchronous scheduler, making it \emph{strongly fair}.
We describe the model in sufficient detail here and refer to~\cite{derakhshandeh2014brief,daymude2023canonical} for more information.

The geometric amoebot model places \(n\) particles called \emph{amoebots} on the infinite regular triangular grid graph \(\Geqt = (\Veqt, \Eeqt)\) (see Fig.~\ref{subfig:amoebots:amoebots}).
Each amoebot occupies one node and each node is occupied by at most one amoebot.
We identify each amoebot with the grid node it occupies to simplify the notation.
Thereby, we define the \emph{amoebot structure} \(A \subset \Veqt\) as the subset of occupied nodes.
We assume that \(A\) is finite and its induced subgraph \(G_A := \Geqt|_A = (A, E_A)\) is connected.
Two amoebots are \emph{neighbors} if they occupy adjacent nodes.
Due to the structure of the grid, each amoebot has at most six neighbors.

Each amoebot has a local \emph{compass} identifying one of its incident grid edges as the East direction and a \emph{chirality} defining its local sense of rotation.
We assume that both are shared by all amoebots.
This is not a very restrictive assumption because a common compass and chirality can be established efficiently using circuits~\cite{feldmann2022coordinating}.
Computationally, the amoebots are equivalent to (randomized) finite state machines with a \emph{constant number of states}.
In particular, the amount of memory per amoebot is constant and independent of the number of amoebots in the structure.
This means that, for example, unique identifiers for all amoebots cannot be stored.
All amoebots are identical and start in the same state.
The computation proceeds in \emph{fully synchronous rounds}.
In each round, all amoebots act and change their states simultaneously based on their current state and their received signals (see Section~\ref{subsec:circuit_extension}).
The execution of an algorithm terminates once all amoebots reach a terminal state in which they do not perform any further actions or state transitions.
We measure the time complexity of an algorithm by the number of rounds it requires to terminate.
If the algorithm is randomized, \ie{}, the amoebots can make probabilistic decisions, a standard goal is to find runtime bounds that hold \emph{with high probability} (\whp{})\footnote{An event holds \emph{with high probability} (\whp{}) if it holds with probability at least \(1 - n^{-c}\), where the constant \(c\) can be made arbitrarily large.}.
The algorithms we present in this paper are not randomized.

\begin{figure}[tb]
    \centering
    \begin{minipage}[t]{0.3\textwidth}
        \centering
        \includegraphics[width=\textwidth]{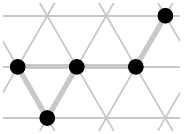}
        \subcaption{Amoebot structure.}
        \label{subfig:amoebots:amoebots}
    \end{minipage}
    \begin{minipage}[t]{0.5\textwidth}
        \centering
        \includegraphics[width=\textwidth]{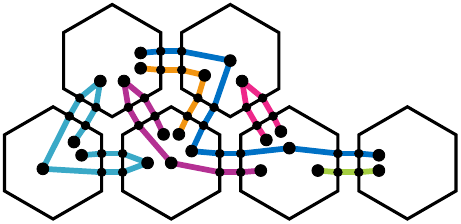}
        \subcaption{Reconfigurable circuit extension.}
        \label{subfig:amoebots:circuits}
    \end{minipage}
    \begin{minipage}[t]{0.5\textwidth}
        \centering
        \includegraphics[width=\textwidth]{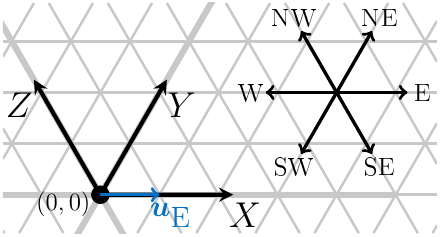}
        \subcaption{Grid axes and cardinal directions.}
        \label{subfig:amoebots:directions}
    \end{minipage}
    \caption{(\subref{subfig:amoebots:amoebots}) shows an amoebot structure in the triangular grid.
        Amoebots are represented by black nodes and neighboring amoebots are connected by thick edges.
        (\subref{subfig:amoebots:circuits}) illustrates the reconfigurable circuit extension.
        Amoebots are drawn as hexagons, pins are black circles on their borders and partition sets are drawn as black circles inside the hexagons.
        The partition sets are connected to the pins they contain.
        Partition sets in the same circuit have lines of the same color.
        (\subref{subfig:amoebots:directions}) shows the axes and cardinal directions in the triangular grid and the unit vector in the East direction.}
    \label{fig:amoebots}
\end{figure}


\subsection{Reconfigurable Circuit Extension}
\label{subsec:circuit_extension}

The \emph{reconfigurable circuit extension}~\cite{feldmann2022coordinating} models the communication between amoebots by placing \(k\) \emph{external links} on each edge connecting two neighbors \(u, v \in A\).
Each external link acts as a communication channel between \(u\) and \(v\), with each amoebot owning one end point of the link.
We call these end points \emph{pins} and assume that the amoebots have a common labeling of their pins and incident links.
The design parameter \(k\) can be chosen arbitrarily but is constant for an algorithm.
\(k = 2\) is sufficient for all algorithms in this paper.

Let \(P(u)\) denote the set of pins belonging to amoebot \(u \in A\).
The state of each amoebot now contains a \emph{pin configuration} \(\PinCfg(u)\), which is a disjoint partitioning of \(P(u)\), \ie{}, the elements \(Q \in \PinCfg(u)\) are pairwise disjoint subsets of pins such that \(\bigcup_{Q \in \PinCfg(u)} Q = P(u)\).
We call the elements \emph{partition sets} and say that two partition sets \(Q \in \PinCfg(u), Q' \in \PinCfg(v)\) of neighbors \(u, v \in A\) are \emph{connected} if there is an external link with one pin in \(Q\) and one pin in \(Q'\).
Let \(\PinCfg := \bigcup_{u \in A} \PinCfg(u)\) be the set of all partition sets in the amoebot structure and let \(E_\PinCfg := \SetBar{\Set{Q, Q'}}{Q \text{~and~} Q' \text{~are connected}}\) be the set of their connections.
Then, we call each connected component \(C\) of the graph \(G_\PinCfg := (\PinCfg, E_\PinCfg)\) a \emph{circuit} (see Fig.~\ref{subfig:amoebots:circuits}).
An amoebot \(u\) is part of a circuit \(C\) if \(C\) contains at least one partition set of \(u\).
Note that multiple partition sets of an amoebot \(u\) may be contained in the same circuit without \(u\) being aware of this due to its lack of global information.
Also observe that if every partition set in \(\PinCfg\) is a singleton, \ie{}, only contains a single pin, then each circuit in \(G_\PinCfg\) only connects two neighboring amoebots, allowing them to exchange information locally.

During its activation, each amoebot can modify its pin configuration arbitrarily and send a primitive signal called a \emph{beep} on any selection of its partition sets.
A beep is broadcast to the circuit containing the partition set it was sent on.
It is available to all partition sets in that circuit in the next round.
An amoebot can tell which of its partition sets have received a beep but it has no information on the identity, location or number of beep origins.


\subsection{Problem Statement}
\label{subsec:problem_statement}

Consider the embedding of the triangular grid graph into \(\Reals^2\) such that the grid's faces form equilateral triangles of unit side length, one grid node is placed at the plane's \emph{origin} \((0, 0) \in \Reals^2\) and one grid axis aligns with the \(x\)-axis.
We define this axis as the grid's \(X\) axis and call its positive direction the East (\(\E\)) direction.
Turning in counter-clockwise direction, we define the other grid axes as the \(Y\) and \(Z\) axes and identify their positive directions as the North-East (\(\NNE\)) and North-West (\(\NNW\)) directions, respectively.
Let the resulting set of directions be the \emph{cardinal directions} \(\Directions = \Set{\E\cb \NNE\cb \NNW\cb \W\cb \SSW\cb \SSE}\).
We denote the unit vector in direction \(d \in \Directions\) by \(\UVec{d}\).
See Fig.~\ref{subfig:amoebots:directions} for illustration.

A \emph{shape} \(S \subset \Reals^2\) is defined as the finite union of some of the embedded grid's nodes, edges and triangular faces (see Fig.~\ref{fig:shape_def}).
An edge contains its two end points and a face contains its three enclosing edges.
Shapes must be connected subsets of \(\Reals^2\) but we allow them to have \emph{holes}, \ie{}, \(\Reals^2 \setminus S\) might not be connected.
This shape definition matches the one used in~\cite{diluna2020shape} for shape formation and extends the definition used in~\cite{feldmann2022coordinating} for shape recognition.

Two shapes are \emph{equivalent} if one can be obtained from the other by a rigid motion, \ie{}, a composition of a translation and a rotation.
Only rotations by multiples of \(60^\circ\) and translations by integer distances along the grid axes yield valid shapes because the shape's faces and edges must align with the grid.
We denote rotated versions of a shape \(S\) by \(S^{(r)}\), where \(r \in \Ints\) is the number of counter-clockwise \(60^\circ\) rotations around the origin.
Note that \(r \in \Set{0, \ldots, 5}\) is sufficient to represent all distinct rotations.
For \(t \in \Reals^2\), we denote \(S\) translated by \(t\) by \(S + t := \SetBar{p + t}{p \in S}\).
This is a valid shape if and only if \(t\) is the position of a grid node.
Let \(S\) be a shape and \(k \in \Reals\) be a \emph{scale factor}, then we define \(k \cdot S := \SetBar{k \cdot p}{p \in S}\) to be the shape \(S\) scaled by \(k\).
We only consider positive integer scale factors to ensure that the resulting set is a valid shape.
If \(S\) is \emph{minimal}, \ie{}, there is no scale factor \(0 < k' < 1\) such that \(k' \cdot S\) is a valid shape, then the integer scale factors cover all possible scales of \(S\) that produce valid shapes (see Lemma~1 in~\cite{diluna2020shape}).

\begin{figure}
	\centering
	\begin{minipage}{0.4\textwidth}
		\centering
		\includegraphics[width=\textwidth]{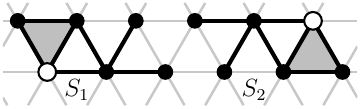}
	\end{minipage}
	\begin{minipage}{0.55\textwidth}
		\centering
		\includegraphics[width=\textwidth]{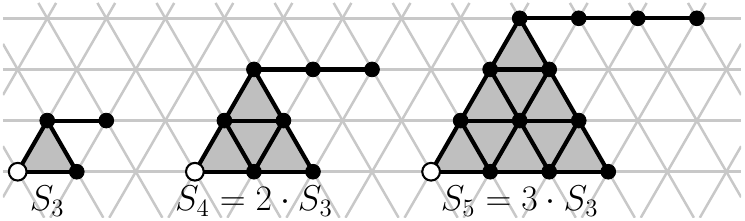}
	\end{minipage}
	\caption{Examples of equivalent and scaled shapes.
		Each shape is identified by the grid nodes, edges and faces it contains.
		The origin of each shape is highlighted in white (we place the shapes at different locations for convenience).
		\(S_1\) and \(S_2\) are equivalent and each contains one face and one hole.
		The shapes \(S_3\), \(S_4\) and \(S_5\) illustrate the scaling operation.}
	\label{fig:shape_def}
\end{figure}

Let \(V(S) \subset \Veqt\) denote the set of grid nodes covered by \(S\).
For convenience, we assume that all shapes contain the origin node, which ensures that a shape does not move relative to the origin when it is rotated or scaled and that the union \(S_1 \cup S_2\) of shapes is always connected.
Let \(A\) be an amoebot structure and \(S\) a shape containing the origin.
We say that an amoebot \(p \in A\) represents a \emph{valid placement} of \(S\) in \(A\) if \(V(S + p) \subseteq A\), where we abuse the notation further to let \(p\) denote the vector in \(\Reals^2\) pointing to amoebot (or node) \(p\).
Let \(\VP(S, A) \subseteq A\) denote the set of valid placements of \(S\) in \(A\).
The \emph{maximum scale} of \(S\) in \(A\) is the largest scale \(k \in \Nats_0\) such that there is a valid placement of \(k \cdot S^{(r)}\) in \(A\) for some \(r \in \Ints\):
\[
    \kmax = \kmax(S, A) := \sup\SetBarL{k \in \Nats_0}{\exists~r \in \Ints: \VP(k \cdot S^{(r)}, A) \neq \emptyset}
\]
\(\kmax\) is well-defined because \(0 \cdot S\) is a single node for every shape \(S\), which fits into any non-empty amoebot structure \(A\).
We obtain \(\kmax = \infty\) if and only if every \(k \in \Nats_0\) has a valid placement.
This only happens for trivial shapes, \ie{}, the empty shape and the shape that is only a single node, which we will not consider further.

We define the \emph{shape containment problem} as follows:
Let \(S\) be a shape (containing the origin).
An algorithm solves the shape containment problem instance \((S, A)\) for amoebot structure \(A\) if it terminates eventually and at the end, either
\begin{enumerate}
	\item all amoebots know that the maximum scale is \(0\) if this is the case, or
	\item for every \(r \in \Set{0, \ldots, 5}\), each amoebot knows whether it is contained in \(\VP(\kmax \cdot S^{(r)}, A)\).
\end{enumerate}
The algorithm solves the shape containment problem for \(S\) if it solves the shape containment instances \((S', A)\) for all finite connected amoebot structures \(A\), where \(S'\) is equivalent to \(S\), contains the origin and is the same for all instances.

There are two key challenges in solving the shape containment problem.
First, the amoebots have to find the maximum scale \(\kmax\).
We approach this problem by testing individual scale factors for valid placements until \(\kmax\) is fixed.
We call this part of an algorithm the \emph{scale factor search}.
Second, for a given scale \(k\) and a rotation \(r\), the valid placements of \(k \cdot S^{(r)}\) have to be identified.
In our approach, we initially view all amoebots as \emph{placement candidates} and then \emph{eliminate} candidates that can be ruled out as valid placements.
To safely eliminate a candidate \(p\), a proof of an unoccupied node that prevents the placement at \(p\) has to be delivered to \(p\).
This information always originates at the boundaries of the structure, \ie{}, amoebots with less than six neighbors.
A \emph{valid placement search} procedure transfers this information from the boundaries to the rest of the structure.
It has to ensure that an amoebot is eliminated if and only if it does not represent a valid placement.

In this paper, we develop a class of shapes for which the shape containment problem can be solved in sublinear time using circuits.
First, as a motivation, we prove a lower bound for a simple example shape that holds even if the maximum scale is already known, demonstrating a bottleneck for the transfer of elimination proofs.
We then introduce scale factor search methods, solutions for basic line and triangle shapes, and primitives for the efficient transfer of more structured information.
Our main result is a sublinear time algorithm solving the shape containment problem for the class of \emph{snowflake} shapes, which we develop based on these primitives.
We also show that for the subclass of \emph{star convex} shapes, there is even a polylogarithmic solution.


\subsection{Related Work}
\label{subsec:related_work}





The authors of~\cite{feldmann2022coordinating} demonstrated the potential of their reconfigurable circuit extension with algorithms solving the leader election, compass alignment and chirality agreement problems within \(\BigO{\log n}\) rounds, \whp{}
They also presented efficient solutions for some exact shape recognition problems:
Given common chirality, an amoebot structure can determine whether it matches a scaled version of a given shape composed of edge-connected faces in \(\BigO{1}\) rounds.
Without common chirality, convex shapes can be detected in \(\BigO{1}\) rounds and parallelograms with linear or polynomial side ratios can be detected in \(\BigTheta{\log n}\) rounds, \whp{}

The \PASC{} algorithm was introduced in~\cite{feldmann2022coordinating} and refined in~\cite{padalkin2022structural}, and it allows amoebots to compute distances along chains.
It has become a central primitive in the reconfigurable circuit extension, as it was used to construct spanning trees, detect symmetry and identify centers and axes of symmetry in polylogarithmic time, \whp{}~\cite{padalkin2022structural}.
The authors in~\cite{padalkin2024polylogarithmic} used it to solve the single- and multi-source shortest path problems, requiring \(\BigO{\log \ell}\) rounds for a single source and \(\ell\) destinations and \(\BigO{\log n \log^2 k}\) rounds for \(k\) sources and any number of destinations.
The \PASC{} algorithm also plays a crucial role in this paper (see Sec.~\ref{subsubsec:pasc}).

The authors in~\cite{emek2024power} studied the capabilities of a generalized circuit communication model that directly extends the reconfigurable circuit model to general graphs.
They provided polylogarithmic time algorithms for various common graph construction (minimum spanning tree, spanner) and verification problems (minimum spanning tree, cut, Hamiltonian cycle \etc{}).
Additionally, they presented a generic framework for translating a type of lower bound proofs from the widely used CONGEST model into the circuit model, demonstrating that some problems are hard in both models while others can be solved much faster with circuits.
For example, checking whether a graph contains a \(5\)-cycle takes \(\BigOmega{n / \log n}\) rounds in general graphs, even with circuits, while the verification of a connected spanning subgraph can be done with circuits in \(\BigO{\log n}\) rounds \whp{}, which is below the lower bound shown in~\cite{sarma2012distributed}.


In the context of computational geometry, the basic polygon containment problem was studied in~\cite{chazelle1983polygon}, focusing on the case where only translation and rotation are allowed.
The problem of finding the largest copy of a convex polygon inside some other polygon was discussed in~\cite{sharir1994extremal} and~\cite{agarwal1998largest}, for example.
An example for the problem of placing multiple polygons inside another without any polygons intersecting each other is given by~\cite{martins2010simulated}.
More recently, the authors in~\cite{kunnemann2022polygon} showed lower bounds for several polygon placement cases under the \(k\)SUM conjecture.
For example, assuming the \(5\)SUM conjecture, there is no \(\BigO{(p+q)^{3 - \varepsilon}}\)-time algorithm for any \(\varepsilon > 0\) that finds a largest copy of a simple polygon \(P\) with \(p\) vertices that fits into a simple polygon \(Q\) with \(q\) vertices under translation and rotation.
Perhaps more closely related to our setting (albeit centralized) is an algorithm that solves the problem of finding the largest area parallelogram inside of an object in the triangular grid, where the object is a set of edge-connected faces~\cite{alaman2022largest}.


\section{Preliminaries}
\label{sec:preliminaries}

This section introduces elementary algorithms for the circuit extension from previous work.


\subsection{Coordination and Synchronization}
\label{subsec:coordination_synchronization}

As mentioned before, we assume that all amoebots share a common compass direction and chirality.
This is a reasonable assumption because the authors of~\cite{feldmann2022coordinating} have presented randomized algorithms establishing both in \(\BigO{\log n}\) rounds, \whp{}

We often want to synchronize amoebots, for example, when different parts of the structure run independent instances of an algorithm simultaneously.
For this, we can make use of a \emph{global circuit}:
Each amoebot connects all of its pins into a single partition set.
The resulting circuit spans the whole structure and allows the amoebots which are not yet finished with their procedure to inform all other amoebots by sending a beep.
When no beep is sent, all amoebots know that all instances of the procedure are finished.
Due to the fully synchronous scheduler, we can establish the global circuit periodically at predetermined intervals.


\subsection{Chains and Chain Primitives}
\label{subsec:chains}

A \emph{chain} of amoebots with length \(m-1\) is a sequence of \(m\) amoebots \(C = (p_0, \ldots, p_{m-1})\) where all subsequent pairs \(p_i, p_{i+1}\), \(0 \leq i < m-1\), are neighbors, each amoebot except \(p_0\) knows its predecessor and each amoebot except \(p_{m-1}\) knows its successor.
We only consider \emph{simple} chains without multiple occurrences of the same amoebot in this paper.
This makes it especially convenient to construct circuits along a chain, \eg{}, by letting each amoebot on the chain decide whether it connects its predecessor to its successor.

\subsubsection{Binary Operations}
\label{subsubsec:binary_operations}

The constant memory limitation of amoebots makes it difficult to deal with non-constant information, such as numbers that can grow with \(n\).
However, we can use amoebot chains to implement a distributed memory by letting each amoebot on the chain store one bit of a binary number, as demonstrated in~\cite{daymude2020convex,padalkin2022structural}.
Using circuits, we can implement efficient comparisons and arithmetic operations between two operands stored on the same chain.

\begin{lemma}
	\label{lem:binary_operations}
	Let \(C = (p_0, \ldots, p_{m-1})\) be an amoebot chain such that each amoebot \(p_i\) stores two bits \(a_i\) and \(b_i\) of the integers \(a\) and \(b\), where \(a = \sum\limits_{i = 0}^{m-1} a_i 2^i\) and \(b = \sum\limits_{i = 0}^{m-1} b_i 2^i\).
	Within \(\BigO{1}\) rounds, the amoebots on \(C\) can compare \(a\) to \(b\) and compute the first \(m\) bits of \(a + b\), \(a - b\) (if \(a \geq b\)), \(2 \cdot a\) and \(\Floor{a / 2}\) and store them on the chain.
	Within \(\BigO{m}\) rounds, the amoebots on \(C\) can compute the first \(m\) bits of \(a \cdot b\), \(\Floor{a / b}\) and \(a \bmod b\) and store them on the chain.
\end{lemma}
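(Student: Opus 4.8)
The plan is to dispose of the constant-time operations first and then bootstrap multiplication and division from them. The two shifts are immediate: since each amoebot knows its predecessor and successor, in one round $p_i$ forwards its bit $a_i$ to $p_{i+1}$ to realize $2 \cdot a$ (with $p_0$ receiving $0$), and to $p_{i-1}$ to realize $\Floor{a/2}$ (with $p_{m-1}$ receiving $0$). For the comparison I would exploit that the sign of $a - b$ is decided by the \emph{most significant} position $i$ at which $a_i \neq b_i$. I mark every such position as a cut in a chain circuit so that a single beep injected from the high-significance end $p_{m-1}$ travels downward and reaches exactly the segment down to the topmost differing position $p_j$. That amoebot then recognizes itself as decisive and broadcasts the outcome on a global circuit ($a > b$ iff $a_j = 1$, and $a = b$ if no cut exists). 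This needs $\BigO{1}$ rounds.

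The crux is addition in $\BigO{1}$ rounds, since naive carry propagation costs $\BigO{m}$. I would use a carry-lookahead scheme realized by circuits. Classify each position as \emph{generate} ($a_i = b_i = 1$), \emph{kill} ($a_i = b_i = 0$), or \emph{propagate} ($a_i \neq b_i$), and recall that the carry into position $i$ equals $1$ exactly when the nearest non-propagate position strictly below $i$ is a generate. I build a chain circuit in which each propagate amoebot connects its two neighbours transparently, while each generate or kill amoebot breaks this connection; a generate amoebot additionally beeps on its upward partition set. Then the beep sourced at a generate position $p_j$ fills precisely the run of propagate positions above it up to the next non-propagate position, so an amoebot's carry-in bit $c_i$ is simply whether it received a beep from its low-significance side. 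Each amoebot then outputs the sum bit $s_i = a_i \oplus b_i \oplus c_i$. Subtraction is symmetric, replacing carry by borrow: $a_i < b_i$ generates, $a_i > b_i$ kills, and $a_i = b_i$ propagates the borrow, so the identical circuit yields all borrows at once.

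For the $\BigO{m}$ operations I would reduce to repeated use of the constant-time primitives. Multiplication uses shift-and-add: I keep an accumulator and a shifted copy of $a$ on the chain (two extra bits per amoebot) and iterate $j = 0, \ldots, m-1$, in each step letting a token mark $p_j$ and broadcast $b_j$ on a global circuit; whenever $b_j = 1$ the accumulator is incremented by the current shifted copy, after which the copy is shifted left by one. Division and remainder use binary long division: processing $j = m-1$ down to $0$, I shift the running remainder left and insert $a_j$, compare it against $b$, and when the remainder is $\geq b$ I set the quotient bit and subtract $b$. Every iteration is one comparison, shift, and conditional addition, each costing $\BigO{1}$, so the whole procedure costs $\BigO{m}$; at termination the quotient bits hold $\Floor{a/b}$ and the leftover remainder holds $a \bmod b$.

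The main obstacle is exactly the constant-round addition. Everything else is either a trivial one-round shift, reduces to the most-significant-differing-bit trick for comparison, or is an $\BigO{m}$ loop over these primitives; only addition genuinely requires resolving a carry that may span the entire chain in $\BigO{1}$ rounds, which is what the generate/kill/propagate circuit accomplishes by turning carry propagation into a single parallel beep.
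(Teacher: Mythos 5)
Your proposal is correct and matches the paper's proof in all essentials: the same one-round shifts, the same most-significant-differing-bit comparison via circuit cuts, and crucially the same constant-round carry/borrow resolution using circuits that are transparent at propagate positions and sourced by beeps at generate positions, with multiplication and division realized as \(\BigO{m}\) loops over these constant-time primitives. The only cosmetic differences are the reversed beep direction in the comparison (you inject one beep from the top rather than having each differing position beep downward) and that your long division shifts the running remainder and brings down dividend bits, whereas the paper aligns and then shifts the divisor --- both standard, equivalent variants.
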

\begin{proof}
    Consider a chain \(C = (p_0, \ldots, p_{m-1})\) storing the two integers \(a = \sum\limits_{i=0}^{m-1} a_i 2^i\) and \(b = \sum\limits_{i=0}^{m-1} b_i 2^i\) such that \(p_i\) holds \(a_i\) and \(b_i\).
	Using singleton circuits, we can compute \(2 \cdot a\) and \(\Floor{a / 2}\) by shifting all bits of \(a\) by one position forwards (towards the successor) or backwards (towards the predecessor) along the chain, which only takes a single round.
	
	Next, as a preparation, we find the \emph{most significant bit} of each number, \ie{}, the largest \(i\) such that \(a_i\) (resp.~\(b_i\)) is \(1\).
	To do this, each amoebot \(p_i\) with \(a_i = 0\) connects its predecessor and successor with a partition set and each amoebot with \(a_i = 1\) sends a beep towards its predecessor.
	This establishes circuits which connect the amoebots storing \(1\)s.
	If amoebot \(p_i\) with \(a_i = 1\) does not receive a beep from its successor, it marks itself as the most significant bit since there is no amoebot \(p_j\) with \(j > i\) and \(a_j = 1\).
	If there is no amoebot storing a \(1\), then \(p_0\) will not receive a beep and can mark itself as the most significant bit.
	We repeat the same procedure for \(b\).
	Both finish after just two rounds.
	Let \(i^\ast\) and \(j^\ast\) be the positions of the most significant bits of \(a\) and \(b\), respectively.
	
	\subparagraph{Comparison}
	To compare \(a\) and \(b\), observe that the largest \(i\) with \(a_i \neq b_i\) uniquely determines whether \(a > b\) or \(a < b\), if it exists.
	The amoebots establish circuits where all \(p_i\) with \(a_i = b_i\) connect their predecessor to their successor and the \(p_i\) with \(a_i \neq b_i\) send a beep towards their predecessor.
	If \(a = b\), no amoebot will send or receive a beep, which is easily recognized.
	Otherwise, let \(k\) be the largest index with \(a_k \neq b_k\).
	Then, \(p_k\) will not receive a beep from its successor but all preceding amoebots will.
	\(p_k\) now locally compares \(a_k\) to \(b_k\) and transmits the result on a circuit spanning the whole chain, \eg{}, by beeping in the next round for \(a > b\) and beeping in the round after that for \(a < b\).
	This only takes two rounds.
	
	\subparagraph{Addition}
	To compute \(c = a + b\), consider the standard written algorithm for integer addition.
	In this algorithm, we traverse the two operands from \(i = 0\) to \(i = m-1\).
	In each step, we compute bit \(c_i\) as the sum of \(a_i\), \(b_i\) and a \emph{carry} bit \(d_i\) originating from the previous operation.
	More precisely, we set \(c_i = (a_i + b_i + d_i) \bmod 2\) and compute \(d_{i+1} = \Floor{(a_i + b_i + d_i) / 2}\).
	Initially, the carry is \(d_0 = 0\).
	Each amoebot \(p_i\) can compute \(c_i\) and \(d_{i+1}\) locally when given \(d_i\).
	Observe the following rules for \(d_{i+1}\):
	If \(a_i = b_i = 0\), we always get \(d_{i+1} = 0\).
	For \(a_i = b_i = 1\), we always get \(d_{i+1} = 1\).
	And finally, for \(a_i \neq b_i\), we get \(d_{i+1} = d_i\).
	These rules allow us to compute all carry bits in a single round:
	All amoebots \(p_i\) with \(a_i \neq b_i\) connect their predecessor to their successor, allowing the carry bit to be forwarded directly through the circuit.
	All other amoebots do not connect their neighbors.
	Now, the amoebots with \(a_i = b_i = 1\) send a beep to their successor.
	All amoebots \(p_i\) with \(d_i = 1\) receive a beep from their predecessor while the other amoebots do not receive such a beep.
	After receiving the carry bits this way, each amoebot computes \(c_i\) locally.
	This procedure requires only two rounds.
	Observe that if \(a + b\) requires more than \(m\) bits, we have \(d_m = 1\), which can be recognized by \(p_{m-1}\).

	\subparagraph{Subtraction}
	To subtract \(b\) from \(a\), we apply the same algorithm as for addition, but with slightly different rules.
	Using the notation from above, the bits of \(c = a - b\) are again computed as \(c_i = (a_i + b_i + d_i) \bmod 2\).
	The rules for computing the carry differ as follows:
	For \(a_i > b_i\), we always get \(d_{i+1} = 0\).
	For \(a_i < b_i\), we always get \(d_{i+1} = 1\).
	Finally, for \(a_i = b_i\), we get \(d_{i+1} = d_i\).
	This is because the carry bit must be subtracted from the local difference rather than added.
	Since the carry bits can be determined just as before, the amoebots can compute \(a - b\) in only two rounds.
	In the case that \(a < b\), \(p_{m-1}\) will recognize \(d_m = 1\) again.
	
	\subparagraph{Multiplication}
	The product \(c = a \cdot b\) can be written as
	\[
	a \cdot b = \sum\limits_{i=0}^{m-1} a_i \cdot 2^i \cdot b = \sum\limits_{i:~ a_i = 1} 2^i \cdot b.
	\]
	We implement this operation by repeated addition.
	Initially, we set \(c = 0\) by letting \(c_i = 0\) for each amoebot \(p_i\).
	In the first step, amoebot \(p_0\) sends a beep on a circuit spanning the whole chain if and only if \(a_0 = 1\).
	In this case, we perform the binary addition of \(c + a_0 \cdot b \cdot 2^0 = c + b\) and store the result in \(c\).
	Otherwise, we keep \(c\) as it is.
	In each following iteration, we move a marker that starts at \(p_0\) one step forward in the chain.
	Before each addition, the amoebot \(p_i\) that holds the marker beeps on the chain circuit if and only if \(a_i = 1\).
	If no beep is sent, the addition is skipped.
	Otherwise, we update \(c \gets c + b'\), where \(b'\) is initialized to \(b\) and its bits are moved one step forward in each iteration.
	The sequence of values of \(b'\) obtained by this is \(b, 2b, 2^2 b, \ldots, 2^{m-1} b\), but limited to the first \(m\) bits.
	Since the higher bits of \(b'\) do not affect the first \(m\) bits of the result, we obtain the first \(m\) bits of \(a \cdot b\).
	Because each iteration only requires a constant number of rounds, the procedure finishes in \(\BigO{m}\) rounds.
	Note that we can already stop after reaching \(a_{i^\ast}\) because all following bits of \(a\) are \(0\), which may improve the runtime if \(i^\ast\) is significantly smaller than \(m\) (\eg{}, constant).
	
	\subparagraph{Division}
	We implement the standard written algorithm for integer division with remainder by repeated subtraction.
	For this, we maintain the division result \(c\), the current divisor \(b'\) and the current remainder \(a'\) as binary counters.
	\(c\) is initialized to \(0\) and \(a'\) and \(b'\) are initialized to \(a\) and \(b\), respectively.
	We start by shifting \(b'\) forward until its most significant bit aligns with that of \(a'\).
	For \(a \geq b\), this succeeds within \(\BigO{m}\) rounds; In case \(a < b\), we can terminate immediately.
	Let \(j\) be the number of steps that were necessary for the alignment.
	After this, each iteration \(i = j, \ldots, 0\) works as follows:
	First, we compare \(a'\) to \(b'\).
	If \(a' < b'\), we keep the bit \(c_i = 0\).
	Otherwise, we record \(c_i = 1\) and compute \(a' \gets a' - b'\).
	At the end of the iteration, we shift \(b'\) back by one step.
	After iteration \(i = 0\), \(c\) contains \(\Floor{a / b}\) and \(a'\) contains the remainder \(a \bmod b\).
	The correctness follows because at the end, \(a' < 2^0 b = b\) and \(a = a' + \sum\limits_{i=0}^j c_i \cdot 2^i \cdot b = a' + c \cdot b\) hold, since in iteration \(i\), \(b'\) is equal to \(2^i b\).
	Because each iteration takes a constant number of rounds and the number of iterations is \(\BigO{m}\), the runtime follows.
\end{proof}

Lemma~\ref{lem:binary_operations} is in fact a minor improvement over the algorithms presented in~\cite{padalkin2022structural}.
Additionally, individual amoebots can execute simple binary operations online on \emph{streams} of bits:

\begin{lemma}
	\label{lem:binary_stream_operations}
	Let \(p\) be an amoebot that receives two numbers \(a, b\) as \emph{bit streams}, \ie{}, it receives the bits \(a_i\) and \(b_i\) in the \(i\)-th iteration of some procedure, for \(i = 0, \ldots, m\).
	Then, \(p\) can compute bit \(c_i\) of \(c = a + b\) or \(c = a - b\) (if \(a \geq b\)) in the \(i\)-th iteration and determine the comparison result between \(a\) and \(b\) by iteration \(m\), with only constant overhead per iteration.
\end{lemma}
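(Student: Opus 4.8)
The plan is to exploit the fact that the schoolbook algorithms for addition and subtraction process the operands from the least significant bit to the most significant bit, which is exactly the order in which the bits arrive in the stream. Consequently, both the local per-bit computation and the propagation of the carry align with the stream, so only a single carry bit has to be retained in the amoebot's memory between iterations, keeping the state constant-size.

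First I would handle addition. Following the analysis in the proof of Lemma~\ref{lem:binary_operations}, I maintain a carry bit \(d\) initialized to \(0\). In iteration \(i\), upon receiving \(a_i\) and \(b_i\), the amoebot outputs \(c_i = (a_i + b_i + d) \bmod 2\) and then updates \(d \gets \Floor{(a_i + b_i + d) / 2}\). Since this reads only the two current bits and the stored carry, it requires constant work and a single bit of state. Subtraction \(c = a - b\) is handled identically, except that the carry (now a borrow) is updated by the rules from Lemma~\ref{lem:binary_operations}: set \(d \gets 0\) if \(a_i > b_i\), set \(d \gets 1\) if \(a_i < b_i\), and leave \(d\) unchanged if \(a_i = b_i\); the output bit is again \(c_i = (a_i + b_i + d) \bmod 2\). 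When \(a \geq b\), the borrow out of the top position is \(0\), so the first \(m\) bits of \(c\) are correct.

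The comparison is the one part that does not obviously fit the stream order, since the relation between \(a\) and \(b\) is decided by their most significant \emph{differing} bit, whereas the stream delivers bits in increasing significance. I would resolve this by maintaining a ternary state \(s \in \{<, =, >\}\) initialized to \(=\). In iteration \(i\), the amoebot overwrites \(s\) with the relation of \(a_i\) to \(b_i\) whenever \(a_i \neq b_i\), and leaves \(s\) unchanged when \(a_i = b_i\). Because every later bit is more significant than all earlier ones, the final overwrite corresponds precisely to the most significant differing position, so after iteration \(m\) the state \(s\) holds the correct comparison result; if no overwrite ever occurs, the state remains at \(=\) and correctly reports \(a = b\). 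This state is constant-size, so the overhead per iteration stays constant.

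The main obstacle is conceptual rather than technical: it is recognizing that comparison, though naturally a high-bit-first decision, can be computed in a low-bit-first stream by letting more significant bits override less significant ones. Once this observation is made, the correctness of all three operations follows directly from the per-bit invariants already established in the proof of Lemma~\ref{lem:binary_operations}.
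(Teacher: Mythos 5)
Your proposal is correct and follows essentially the same approach as the paper's proof: run the schoolbook addition/subtraction sequentially, retaining only the carry (borrow) bit between iterations, and for the comparison maintain a ternary state initialized to \(=\) that is overwritten whenever \(a_i \neq b_i\), so the last overwrite corresponds to the most significant differing bit. The only point to state carefully is that in the subtraction the output bit \(c_i\) must be computed with the \emph{old} borrow before it is updated, which your phrase ``handled identically'' (to the addition, where you output first and then update) already implies.
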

\begin{proof}
	Let \(p\) be an amoebot that receives the bits \(a_i\) and \(b_i\) in the \(i\)-th iteration of some procedure, for \(i = 0, \ldots, m\).
	To compute the bits of \(a + b\) and \(a - b\), \(p\) runs the standard written algorithm described above, but sequentially.
	Starting with \(d_0 = 0\), \(p\) only needs access to \(d_i\), \(a_i\) and \(b_i\) to compute \(c_i\) and \(d_{i+1}\) in a single round.
	Because the values from previous iterations do not need to be stored, constant memory is sufficient for this.
	To compare \(a\) and \(b\), \(p\) initializes an intermediate result to "\(=\)" and updates it to "\(<\)" or "\(>\)" whenever \(a_i < b_i\) or \(a_i > b_i\) occurs, respectively.
	Since the relation between \(a\) and \(b\) depends only on the highest value bits that are different, the result will be correct after iteration \(m\).
\end{proof}

\subsubsection{The \PASC{} Algorithm}
\label{subsubsec:pasc}

A particularly useful algorithm in the reconfigurable circuit extension is the \emph{Primary-And-Secondary-Circuit} (\PASC{}) algorithm, first introduced in~\cite{feldmann2022coordinating}.
We omit the details of the algorithm and only outline its relevant properties.
Please refer to~\cite{padalkin2022structural} for details.

\begin{lemma}[\cite{feldmann2022coordinating,padalkin2022structural}]
	\label{lem:pasc}
	Let \(C = (p_0, \ldots, p_{m-1})\) be a chain of \(m\) amoebots.
	The \PASC{} algorithm, executed on \(C\) with start point \(p_0\), performs \(\Ceil{\log m}\) iterations within \(\BigO{\log m}\) rounds.
	In iteration \(j = 0, \ldots, \Ceil{\log m} - 1\), each amoebot \(p_i\) computes the \(j\)-th bit of its distance \(i\) to the start of the chain, \ie{}, \(p_i\) computes \(i\) as a bit stream.
\end{lemma}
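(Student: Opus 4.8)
The plan is to give a self-contained argument for this known result by describing the two-circuit halving mechanism that gives the \PASC{} algorithm its name and then verifying it by induction over the iterations. I would set up two circuits running along the chain, a \emph{primary} and a \emph{secondary} one, realized with the two pins per edge that the paper already assumes (\(k = 2\)). Every amoebot is initially \emph{active}. In each iteration, an active amoebot swaps the roles of the two circuits as they pass through it (connecting the incoming primary to the outgoing secondary and vice versa), while a passive amoebot forwards both circuits straight through. The start point \(p_0\) then beeps on the primary circuit, and each amoebot records one bit according to whether the beep reaches it on its primary or its secondary side. Since the only role-swaps occur at active amoebots, the side on which the beep arrives at \(p_i\) is determined by the \emph{parity of the number of active amoebots preceding} \(p_i\) on the chain.

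The heart of the proof is a loop invariant, established by induction on \(j\): after iteration \(j\), (i) every amoebot \(p_i\) has correctly recorded the bits \(b_0, \ldots, b_j\) of its distance \(i\), and (ii) the active amoebots are exactly those \(p_i\) with \(2^{j+1} \mid i\), so that the number of active amoebots strictly preceding an active \(p_i\) equals \(\Floor{i / 2^{j+1}}\). For the base case \(j = 0\), all amoebots are active, so \(p_i\) is preceded by exactly \(i\) active amoebots; the beep arrives on the primary side iff \(i\) is even, which is precisely \(b_0 = i \bmod 2\). Deactivating every amoebot that read a \(1\) leaves the even-distance amoebots active and re-establishes (ii). For the inductive step, by (ii) the active-predecessor count of an active \(p_i\) with \(i = 2^{j+1}\ell\) is \(\ell\), so the recorded bit equals \(\ell \bmod 2\), which is \(b_j\) of \(i\); deactivating the active amoebots that read \(1\) again halves the active set and restores the invariant.

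The main obstacle is the bookkeeping for the \emph{passive} amoebots, which must keep learning their higher-order bits after being deactivated. The key observation is that a passive amoebot \(p_i\) shares all bits \(b_j, b_{j+1}, \ldots\) above its deactivation level with its nearest active predecessor \(p_{i'}\) (its \emph{group leader}), because \(i\) and \(i'\) differ only in the low-order bits that are already fixed; for instance, \(p_{2\ell + 1}\) inherits every higher bit from \(p_{2\ell}\) since \(\Floor{(2\ell + 1)/2} = \Floor{2\ell / 2} = \ell\). I would argue that, because no active amoebot lies between a passive amoebot and its leader, the two always sit on the same circuit segment, so the passive amoebot can copy the leader's bit in each subsequent iteration (up to a fixed offset introduced by the leader's own swap, which it corrects locally). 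Verifying this consistency at the pin level, rather than the clean picture for active amoebots, is the fiddly part of the argument.

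Finally, for the runtime, the active set is \(\SetBar{p_i}{2^{j+1} \mid i}\) and hence shrinks by a factor of two each iteration, so after \(\Ceil{\log m}\) iterations only \(p_0\) remains active and all bits of every distance \(i \leq m - 1\) have been produced; this matches the \(\Ceil{\log m}\) bits needed to represent \(m - 1\). Each iteration consists of a constant number of phases — reconfiguring the two circuits, one beep, and a local state update — so every iteration takes \(\BigO{1}\) rounds and the total is \(\BigO{\log m}\) rounds, as claimed.
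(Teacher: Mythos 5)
First, note that the paper itself does not prove this lemma: it is imported from \cite{feldmann2022coordinating,padalkin2022structural}, and the text explicitly defers all details of the \PASC{} algorithm to that prior work. So your proposal has to be measured against the construction in those papers, which it does reconstruct in outline: two circuits running along the chain that cross at active amoebots, a beep from \(p_0\), the halving invariant on the active set, and \(\BigO{1}\) rounds per iteration. That skeleton and the runtime argument are correct.

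The gap is exactly the part you flag as ``fiddly'': the passive amoebots. Your invariant ties the received bit to the parity of the number of active amoebots \emph{strictly preceding} \(p_i\) (so including \(p_0\) and excluding \(p_i\) itself), and your inductive step verifies the bit only for active amoebots. For a passive \(p_i\) with \(2^{j+1}\ell < i < 2^{j+1}(\ell+1)\), the active strict predecessors are \(p_0, p_{2^{j+1}}, \ldots, p_{2^{j+1}\ell}\), \ie{}, \(\ell + 1\) of them, whereas bit \(j+1\) of \(i\) is \(\Floor{i/2^{j+1}} \bmod 2 = \ell \bmod 2\) --- so under your accounting every passive amoebot reads the \emph{complement} of its bit, and invariant (i) fails as stated. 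Your proposed repair (copying the bit of the nearest active predecessor, up to an offset) can be made to work, but it is never verified, and its justification is off: the two \PASC{} circuits are not ``segments'' --- they span the entire chain, crossing at active amoebots --- so a leader-to-group copy would require additional, separately constructed circuits. The standard argument avoids the case distinction entirely by placing the swap so that an amoebot's own reading reflects the parity of the active amoebots in \(\Set{p_1, \ldots, p_i}\), \ie{}, \emph{including itself and excluding} \(p_0\). With the invariant that after iteration \(j\) the active amoebots are exactly those with \(2^{j+1} \mid i\), this count equals \(\Floor{i / 2^{j+1}}\) for \emph{every} amoebot, active or passive, and its parity is precisely bit \(j+1\) of \(i\); no leader, no correction, and no pin-level special case is needed. (Also a small index slip: the bit recorded in iteration \(j+1\) by an active \(p_i = 2^{j+1}\ell\) is \(\ell \bmod 2 = b_{j+1}\), not \(b_j\).)
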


The \PASC{} algorithm is especially useful with binary counters.
It allows us to compute the length of a chain, which is received by the last amoebot in the chain and can be stored in binary on the chain itself.
Furthermore, given some binary counter storing a distance \(d\) and some amoebot chain \(C = (p_0, \ldots, p_{m-1})\), each amoebot \(p_i\) can compare \(i\) to \(d\) by receiving the bits of \(d\) on a global circuit in sync with the iterations of the \PASC{} algorithm on \(C\).

\begin{lemma}
	\label{lem:pasc_cutoff}
	Let \(C = (p_0, \ldots, p_{m-1})\) be a chain in an amoebot structure \(A\) and let a value \(d \in \Nats_0\) be stored in some binary counter of \(A\).
	Within \(\BigO{\log \min \Set{d, m}}\) rounds, every amoebot \(p_i\) can compare \(i\) to \(d\).
	The procedure can run simultaneously on any set of edge-disjoint chains with length \(\leq m-1\).
\end{lemma}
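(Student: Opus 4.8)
The plan is to run the \PASC{} algorithm on \(C\) while streaming the bits of \(d\) to every amoebot on a global circuit, so that each \(p_i\) compares \(i\) and \(d\) bit-by-bit with the online routine of Lemma~\ref{lem:binary_stream_operations}. The straightforward synchronization sketched just before the lemma would cost \(\BigO{\log m + \log d}\) rounds, since \PASC{} needs \(\Ceil{\log m}\) iterations to reveal all bits of \(i\) and we must broadcast all \(\Ceil{\log(d+1)}\) bits of \(d\). The whole point of the lemma is to replace this by the \emph{minimum} of the two logarithms, which I would obtain by letting each amoebot finalize its comparison as soon as \emph{either} operand runs out of significant bits.

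Concretely, I would first let the counter holding \(d\) locate its most significant bit in \(\BigO{1}\) rounds using the routine from the proof of Lemma~\ref{lem:binary_operations}. The procedure then runs in \(\BigO{1}\)-round \emph{super-iterations} indexed by \(j = 0, 1, 2, \ldots\). In super-iteration \(j\) I advance one iteration of \PASC{} on \(C\), so each \(p_i\) learns bit \(j\) of \(i\) by Lemma~\ref{lem:pasc}, and on a global circuit I broadcast bit \(j\) of \(d\) together with a single \emph{exhausted} flag that turns true exactly once the broadcast has passed the most significant bit of \(d\), \ie{}, once \(d < 2^{j+1}\). A marker walking one position per super-iteration along the counter chain emits these bits in the required order. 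Because \PASC{} iterations, marker steps and global-circuit broadcasts each cost \(\BigO{1}\) rounds, they can be time-multiplexed inside one super-iteration (with \(k = 2\) pins per edge one can alternate between the chain circuits of \PASC{} and the global circuit), and each \(p_i\) feeds its bit of \(i\) and the received bit of \(d\) into the streaming comparison.

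The crux is the termination criterion. I would invoke the standard \PASC{} property (\cf{}~\cite{feldmann2022coordinating,padalkin2022structural}) that \(p_i\) becomes \emph{inactive} after super-iteration \(j\) precisely when \(i < 2^{j+1}\), and that it detects this event. Then, after super-iteration \(j\), an amoebot can finalize its result whenever \(i < 2^{j+1}\) or \(d < 2^{j+1}\): if \(p_i\) is inactive and \(d\) is exhausted, both operands are fully known and the streaming comparison is already correct; if \(p_i\) is inactive but \(d\) is not exhausted, then \(i < 2^{j+1} \leq d\), so \(p_i\) concludes \(i < d\); and if \(d\) is exhausted while \(p_i\) is still active, then \(i \geq 2^{j+1} > d\), so \(p_i\) concludes \(i > d\). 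Hence \(p_i\) is finished once \(\min\Set{i, d} < 2^{j+1}\), and every amoebot, the extreme case being \(p_{m-1}\), is finished once \(2^{j+1} > \min\Set{m-1, d}\), \ie{}, after \(\BigO{\log \min\Set{d, m}}\) super-iterations. Each super-iteration additionally performs one global-circuit termination check in which every unfinished amoebot beeps; the procedure halts at the first super-iteration without a beep, yielding the claimed \(\BigO{\log \min\Set{d, m}}\) rounds (a constant number when \(\min\Set{d,m}\) is \(0\) or \(1\)).

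For the simultaneous version on edge-disjoint chains of length at most \(m-1\), the global circuit carrying the bits of \(d\), the exhausted flag and the termination check reaches all amoebots at once, while the \PASC{} instances occupy pairwise edge-disjoint circuits and thus run in parallel without interference; the shared global circuit forces all chains onto the same super-iteration schedule, so the analysis applies verbatim with \(m\) the maximal chain length. The step I expect to be the main obstacle is precisely establishing the sharp \(\min\)-bound rather than a \(\max\)-bound: it rests on the double early-termination argument above and in particular on the \PASC{} inactivity detection, which lets an amoebot on the long-index side of the comparison stop the moment the small operand is exhausted, instead of waiting for \PASC{} to reveal all \(\Ceil{\log m}\) of its own bits.
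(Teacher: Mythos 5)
Your scaffolding matches the paper's proof: run \PASC{} on \(C\) (Lemma~\ref{lem:pasc}) while a marker streams the bits of \(d\) over a global circuit, let each \(p_i\) compare the two bit streams online via Lemma~\ref{lem:binary_stream_operations}, and stop early once the shorter operand runs out of bits. The gap sits exactly at the step you yourself flag as the crux: the ``standard \PASC{} property'' you invoke does not exist. In \PASC{}, an amoebot \(p_i\) becomes passive in the iteration indexed by the \emph{least significant} \(1\)-bit of \(i\), i.e., the first time a \(1\) appears in its bit stream; after iteration \(j\), the amoebots that are still active are exactly those with \(2^{j+1} \mid i\). Passivity therefore says nothing about the unrevealed higher-order bits and is not equivalent to \(i < 2^{j+1}\): for instance, \(p_6\) is passive after iteration \(1\) even though \(6 \geq 2^2\). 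With the actual semantics, your case analysis outputs wrong results. Take \(i = 6\), \(d = 3\): after iteration \(1\), \(p_6\) is passive (``inactive'' in your terminology) and \(d < 4\) is exhausted, so your first rule has \(p_6\) finalize the streaming comparison of the low-order bits, which reads \(2 < 3\), and wrongly conclude \(i < d\); similarly, for \(d = 5\) your second rule would have \(p_6\) conclude \(6 < 5\). No standard mechanism in \PASC{} lets \(p_i\) locally detect that all of its remaining bits are zero — Lemma~\ref{lem:pasc} only delivers the bit stream.

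This is precisely the hole the paper closes by a different mechanism, with no per-amoebot exhaustion detection at all. The paper only detects \emph{globally} which procedure finishes first. If \PASC{} finishes first while \(d\) still has non-zero bits, then \(d > m-1\) and every amoebot outputs \(i < d\). If the traversal of \(d\) finishes first, the amoebots cut a circuit along \(C\) at every \(p_i\) whose \emph{current} streaming state is ``\(i = d\)'' and let \(p_0\) beep towards its successor: the cut closest to \(p_0\) sits exactly at \(p_d\) (any later amoebot in state ``\(=\)'' merely agrees with \(d\) on its low-order bits and actually satisfies \(i > d\)), so receiving the beep is equivalent to \(i \leq d\), which overrides the misleading streaming states of the far amoebots. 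Your approach could in principle be repaired by building an explicit detector for \(i < 2^{j+1}\) — e.g., one extra round per iteration in which the amoebots still active in \PASC{} (the multiples of \(2^{j+1}\)) cut a separate chain circuit and \(p_0\) beeps, so that exactly the amoebots strictly between \(p_0\) and the first active one hear it — but that mechanism must be constructed and proven; it cannot be cited as a known property of \PASC{}.
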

\begin{proof}
	Consider some chain \(C = (p_0, \ldots, p_{m-1})\) and let \(d \in \Nats_0\) be stored in some binary counter.
	First, the amoebots find the most significant bit of \(d\), which takes only a constant number of rounds.
	In the degenerate case \(d = 0\), the amoebot at the start of the counter sends a beep on a global circuit and each amoebot \(p_i\) locally compares \(i\) to \(0\), which it can do by checking the existence of its predecessor (only \(p_0\) has no predecessor).
	This takes a constant number of rounds.
	
	For \(d > 0\), the amoebots then run the \PASC{} algorithm on \(C\), using \(p_0\) as the start point, which allows each amoebot \(p_i\) to obtain the bits of \(i\) as a bit stream by Lemma~\ref{lem:pasc}.
	Simultaneously, they transmit the bits of \(d\) on a global circuit by moving a marker along the counter on which \(d\) is stored and letting it beep on the global circuit whenever its current bit is \(1\).
	The two procedures are synchronized such that after each \PASC{} iteration, one bit of \(d\) is transmitted.
	Thus, each amoebot \(p_i\) receives two bit streams, one for \(i\) and one for \(d\).
	By Lemma~\ref{lem:binary_stream_operations}, this already allows \(p_i\) to compare \(i\) to \(d\), if we let the procedure run for \(\Floor{\log \max \Set{d, m-1}} + 1\) iterations.
	
	If \(d\) and \(m-1\) have the same number of bits, we are done already.
	Otherwise, either the \PASC{} algorithm or the traversal of \(d\) will finish first.
	The amoebots can recognize all three cases by establishing the global circuit for two additional rounds per iteration and letting the amoebots involved in the unfinished procedures beep, using one round for the \PASC{} algorithm and the other round for the traversal of \(d\).
	Now, if the \PASC{} algorithm finishes first but there is still at least one non-zero bit of \(d\) left, then we must have \(d > m-1\), so the comparison result is simply \(i < d\) for all \(p_i\).
	Conversely, if the traversal of \(d\) finishes first, the amoebots establish a circuit along \(C\) by letting all \(p_i\) connect their predecessor and successor except the ones whose \emph{current} comparison result is \(i = d\) (note that there may be more than one such amoebot).
	The closest such amoebot to \(p_0\) on the chain will be the one with \(i = d\); it has already received all non-zero bits of \(i\) because \(i\) has just as many bits as \(d\).
	The start of the chain, \(p_0\), now sends a beep towards its successor, which will reach all amoebots \(p_i\) with \(i \leq d\).
	Thereby, all amoebots \(p_i\) on the chain know whether \(i \leq d\) (beep received), \(i = d\) (beep received and comparison is equal), or \(i > d\) (no beep received).
	
	In both cases, we only require a constant number of rounds after finishing the first procedure, implying the runtime of \(\BigO{\log \min \Set{d, m}}\) rounds.
	Finally, consider a set of chains with maximum length \(m-1\) where no two chains share an edge.
	Because no edge is shared and the \PASC{} algorithm only uses edges on its chain, all chains can run the \PASC{} algorithm simultaneously without interference.
	The same holds for the chain circuits used for the case \(d < m-1\).
	In the synchronization rounds, a beep is now sent on the global circuit whenever \emph{any} of the \PASC{} executions is not finished yet.
	If all chains require the same number of \PASC{} iterations, there is no difference to the case with a single chain.
	If any chain finishes its \PASC{} execution earlier, it can already finish its own procedure with the result \(i < d\) for all its amoebots \(p_i\) without influencing the other chains.
\end{proof}


\section{A Simple Lower Bound}
\label{sec:lower_bound}

We first show a lower bound that demonstrates a central difficulty arising in the shape containment problem.
For a simple example shape (see Fig.~\ref{fig:lower_bound_shape}), we show that even if the maximum scale is known, identifying all valid placements of the target shape can require \(\BigOmega{\sqrt{n}}\) rounds due to communication bottlenecks.

\begin{theorem}
	\label{theo:lower_bound}
	There exists a shape \(S\) such that for any choice of origin and every amoebot algorithm \(\mathcal{A}\) that terminates after \(\LittleO{\sqrt{n}}\) rounds, there exists an amoebot structure \(A\) for which the algorithm does not compute \(\VP(\kmax(S, A) \cdot S, A)\), even if \(\kmax\) is known.
\end{theorem}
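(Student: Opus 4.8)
The plan is to prove the bound by an information-theoretic bottleneck argument, in the spirit of the CONGEST-to-circuit translations mentioned in Section~\ref{subsec:related_work}. I would exhibit a single fixed shape \(S\) together with a family of amoebot structures that are pairwise indistinguishable at a large set of candidate amoebots for \(\LittleO{\sqrt{n}}\) rounds, yet force pairwise different outputs. The key quantitative regime is an interface (cut) of width \(w = \BigTheta{\sqrt{n}}\) across which \(m = \BigTheta{n}\) bits of genuinely independent information must travel; since \(k = \BigO{1}\) external links sit on each edge, only \(\BigO{w} = \BigO{\sqrt{n}}\) bits can cross per round, so \(\BigOmega{m/w} = \BigOmega{\sqrt{n}}\) rounds are needed.

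\emph{Construction.} First I would fix \(S\) to be a (non-star-convex) shape whose scaled copy \(\kmax \cdot S\) decomposes into a compact \emph{body} at the origin, a long \emph{probe} whose tip sits at a fixed offset \(v\) with \(\Abs{v} = \BigTheta{\sqrt{n}}\), and a thin \emph{connector} joining body and tip. I then build a family \(\Set{A_x}_{x \in \Set{0,1}^m}\) with \(m = \BigTheta{n}\) that share a common, always-occupied \emph{scaffold} of \(\BigTheta{n}\) amoebots and differ only in \(m\) pendant \emph{toggle} nodes in a \emph{source region} \(Q\). The scaffold is designed so that every connector lies entirely inside the scaffold except for its single tip, so that a designated set of \(m\) \emph{candidate} amoebots in a region \(C\) (disjoint from \(Q\)) satisfies: \(p_i\) represents a valid placement of \(\kmax \cdot S\) in \(A_x\) if and only if \(x_i = 1\); and so that \(\kmax\) is the \emph{same} value for every \(x\) (hence may be assumed known without leaking information). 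By placing the separating cut of \(w = \BigTheta{\sqrt{n}}\) grid edges entirely within the always-occupied scaffold, the cut itself and the whole \(C\)-side become identical across the entire family.

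\emph{Core argument.} Because at most \(\BigO{w}\) crossing links exist, the beep pattern that \(Q\) can impose on \(C\) in a single round is one of \(2^{\BigO{w}}\) possibilities; define the \emph{transcript} as the sequence of these crossing-beep vectors over \(t\) rounds, of which there are at most \(2^{\BigO{t w}}\). A straightforward induction on rounds shows that the entire behavior of the \(C\)-side (states, pin configurations, emitted beeps, and final per-candidate output) is a deterministic function of the common \(C\)-side input and the transcript. If \(t = \LittleO{\sqrt{n}}\), then \(t w = \LittleO{n} = \LittleO{m}\), so \(2^{\BigO{t w}} < 2^{m}\), and by pigeonhole two distinct inputs \(x \neq x'\) induce the same transcript and therefore identical outputs on every candidate. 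Since \(x \neq x'\), some candidate's correct membership in \(\VP(\kmax \cdot S, A)\) differs between \(A_x\) and \(A_{x'}\), so \(\mathcal{A}\) errs on at least one of them; as the origin of \(S\) never enters this reasoning, the bound holds for any choice of origin.

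\emph{Main obstacle.} The counting step is routine once the family exists; the delicate, construction-heavy part is the geometric realization. I must design one fixed shape and one scaffold so that \(\BigTheta{n}\) candidates each read off an \emph{independent} toggle bit through a fixed-offset probe, while simultaneously guaranteeing that (a) every connector avoids all toggle nodes except its own tip (handled by making the toggles pendant nodes sticking out of the always-occupied bulk through which all connectors route), (b) the \(C\)--\(Q\) interface stays \(\BigTheta{\sqrt{n}}\) wide even though \(\BigTheta{n}\) connectors of length \(\BigTheta{\sqrt{n}}\) overlap inside it, and (c) \(\kmax\) is invariant over the family. Verifying that such a shape and scaffold can be packed in the triangular grid — so that the probe tips form a two-dimensional source array while the connectors overlap only inside scaffold cells that are occupied regardless of \(x\) — is the step I expect to require the most care.
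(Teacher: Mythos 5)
Your transcript-and-pigeonhole core is sound (it is essentially the same counting the paper does), but the construction you defer as ``the step requiring the most care'' is not merely delicate --- in the parameter regime you fix, it is impossible, and the obstruction is your own independence requirement (a). Since \(S\) is a fixed shape, a probe tip at offset \(\BigTheta{\sqrt{n}}\) inside \(\kmax \cdot S\) forces \(\kmax = \BigTheta{\sqrt{n}}\). Let \(u\) be the unit vector along the probe's last edge, pointing from the tip into the shape, and let \(q_i = p_i + v\) be the toggles. Every placement at \(p_i\) covers its approach segment \(\SetBar{q_i + j u}{1 \leq j \leq \kmax}\), the \(\kmax\) nodes of the scaled probe edge preceding the tip. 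Requirement (a) --- placement \(p_j\) must never cover \(q_i\) for \(i \neq j\), else \(x_i = 0\) would invalidate \(p_j\) --- says exactly that \(p_i - p_j \neq j' u\) for all \(1 \leq \Abs{j'} \leq \kmax\), i.e.\ two candidates on the same \(u\)-line must be more than \(\kmax\) apart. Consequently the \(m\) approach segments are pairwise disjoint; they consist of non-toggle nodes, hence lie in the always-occupied scaffold, giving \(n \geq m \cdot \kmax = m \cdot \BigTheta{\sqrt{n}}\) and thus \(m = \BigO{\sqrt{n}}\). So a two-dimensional array of \(\BigTheta{n}\) independently readable toggles cannot exist; making the toggles ``pendant'' does not evade this, because the connectors are translates of one another, so one candidate's connector lands exactly on a neighboring candidate's tip whenever the two candidates differ by a small multiple of \(u\).

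With \(m = \BigO{\sqrt{n}}\), your count gives only \(\BigOmega{m/w} = \BigOmega{1}\) across a cut of width \(w = \BigTheta{\sqrt{n}}\) --- no bound at all. In fact your framework is intrinsically capped: each candidate lies within distance \(\BigO{\kmax}\) of the cut (its placement is connected, has diameter \(\BigO{\kmax}\), and crosses the cut), and by the disjointness above a ball of radius \(\BigO{\kmax}\) contains \(\BigO{\kmax}\) candidates, so \(m = \BigO{w \cdot \kmax}\); together with \(m \leq n / \kmax\) this yields \(m / w = \BigO{\sqrt{n/w}}\), so \(\BigOmega{\sqrt{n}}\) is achievable only when \(w = \BigO{1}\). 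That degenerate case is exactly the paper's proof: a one-dimensional column of \(k = \BigTheta{\sqrt{n}}\) candidates whose probes all funnel through a \emph{single} bottleneck edge \(e\) into a second block carrying the \(k\) toggles, giving \(2^k - 1\) structures that must induce pairwise distinct signal sequences over \(e\), hence \(\BigOmega{k} = \BigOmega{\sqrt{n}}\) rounds. So the fix is to abandon your target (b), shrink the candidate family to \(\BigTheta{\sqrt{n}}\), and make the interface constant width; your remaining ingredients (invariance of \(\kmax\) over the family, the transcript induction, and the treatment of the origin) then go through essentially as in the paper, though the origin case does need the paper's explicit role-swap of the two blocks rather than being automatic.
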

\begin{proof}
    \begin{figure}
        \centering
        \includegraphics[width=0.35\linewidth]{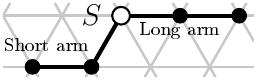}
        \caption{An example shape for which the valid placement search is bounded below by \(\BigOmega{\sqrt{n}}\).}
        \label{fig:lower_bound_shape}
    \end{figure}

	We use the shape \(S\) with a long arm and a short arm connected by a diagonal edge, as depicted in Fig.~\ref{fig:lower_bound_shape}.
	Let \(\mathcal{A}\) be an amoebot algorithm that terminates in \(\LittleO{\sqrt{n}}\) rounds.
	For every \(k \in \Nats\), we will construct a set \(A_k\) of amoebot structures such that \(\kmax(S, A) = k\) for all \(A \in A_k\) and only one rotation matches at this scale.
	Let \(k \in \Nats\) be arbitrary, then we construct \(A_k\) as follows (see Fig.~\ref{fig:lower_bound_structure} for reference):
	
	First, we place a parallelogram of width \(2k\) and height \(k-1\) with its lower left corner at the origin and call this the \emph{first block}.
	The first block contains \((2k + 1) k = 2k^2 + k\) amoebots and is shared by all \(A \in A_k\).
	Let \(p_0, \ldots, p_{k-1}\) be the nodes occupied by the left side of the parallelogram, ordered from bottom to top.
	Next, we place a second parallelogram with width and height \(k-1\) such that its right side extends the first block's left side below the origin.
	This second block contains \(k^2\) amoebots and is also the same for all structures.
	It is only connected to the first block by a single edge, \(e\).
	Let \(q_0, \ldots, q_{k-1}\) be the nodes one step to the left of the second block, again ordered from bottom to top.

	\begin{figure}
		\centering
		\includegraphics[width=0.9\linewidth]{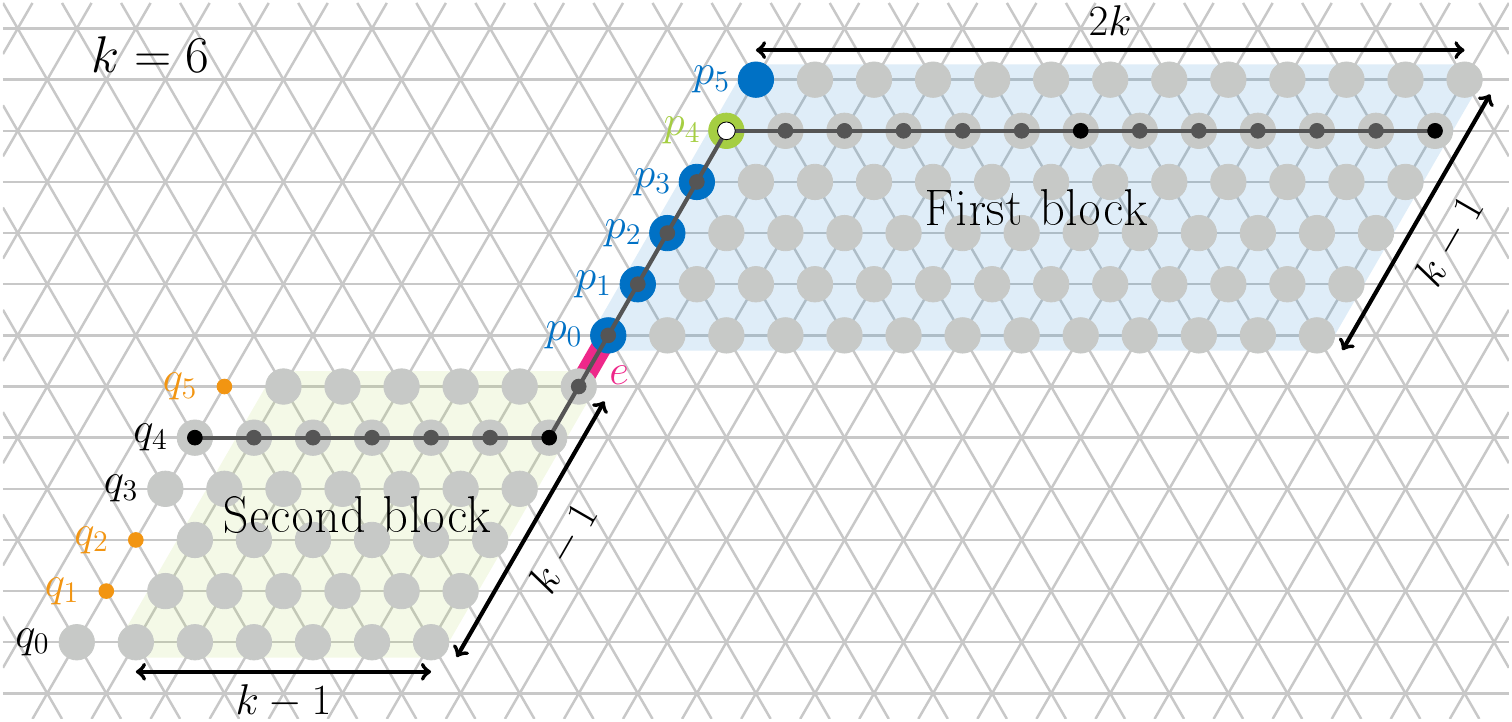}
		\caption{Overview of the amoebot system construction for scale \(k = 6\).
			The first block is shaded blue and the second block is shaded green.
			The nodes \(q_i\) that are not contained in the structure are colored orange.
			Amoebot \(p_4\) is a valid placement of \(k \cdot S\) because \(q_4\) is part of the structure.}
		\label{fig:lower_bound_structure}
	\end{figure}

	We define \(A_k\) as the set of amoebot structures that consist of these two blocks and \(m\) additional amoebots on the positions \(q_0, \ldots, q_{k-1}\), where \(1 \leq m \leq k\).
	Thus, \(A_k\) contains \(2^k - 1\) distinct structures.
	Now, consider placements of \(S\) with maximum scale in any structure \(A \in A_k\).
	For \(m = k\), there are exactly \(k\) valid placements at scale \(k\), represented by the amoebots \(p_0, \ldots, p_{k-1}\).
	The longest continuous lines of amoebots in \(A\) have length \(2k\) and form the first block.
	In every valid placement, the longer arm of \(k \cdot S\) must occupy one of these lines, so no larger scales or other rotations are possible.
	If \(q_i\) is not occupied for some \(0 \leq i \leq k-1\), then \(p_i\) is not a valid placement because the end of the shorter arm of \(k \cdot S\) would be placed on \(q_i\).
	At least one \(q_i\) is always occupied, so the maximum scale of \(S\) is \(k\) for every \(A \in A_k\).
	Observe that every structure \(A \in A_k\) has a unique configuration of valid placements of \(k \cdot S\):
	\(p \in \VP(k \cdot S, A)\) if and only if \(p = p_i\) and \(q_i \in A\) for some \(0 \leq i \leq k-1\).
	
	Next, consider the size of the structures in \(A_k\).
	The maximum number of amoebots is \(2k^2 + k + k^2 + k = 3k^2 + 2k\), obtained for \(m = k\).
	This means we have \(n \leq 3k^2 + 2k \leq 4k^2\) for large enough \(k\), \ie{}, \(k \geq \sqrt{n}/2\) for all \(k \geq 2\) and all \(A \in A_k\).
	
	Let \(A \in A_k\) be arbitrary and consider the final states of \(p_0, \ldots, p_{k-1}\) after \(\mathcal{A}\) has been executed on \(A\).
	Each amoebot must be categorized as either a valid or an invalid placement of \(k \cdot S\).
	We can assume that this categorization is independent of any randomized decisions because otherwise, there would be a non-zero probability of false categorizations.
	Thus, the final state depends only on the structure \(A\) itself.
	Recall that structures in \(A_k\) only differ in the positions \(q_0, \ldots, q_{k-1}\) and every path between \(q_i\) (or an occupied neighbor) and \(p_i\) must traverse the single edge \(e\) connecting the two blocks.
	We can assume that all communication happens via circuits (see Sec.~\ref{subsec:circuit_extension}).
	Since the first block is the same in all structures, the final states of \(p_0, \ldots, p_{k-1}\) only depend on the sequence of signals sent from the second block to the first block through \(e\).
	In order to compute the correct set of valid placements, each amoebot structure in \(A_k\) therefore has to produce a unique sequence of signals:
	If for any two configurations, the same sequence of signals is sent through \(e\), the final states of \(p_0, \ldots, p_{k-1}\) will be identical, so at least one will be categorized incorrectly.
	
	Let \(c\) be the number of pins used by \(\mathcal{A}\).
	Then, the number of different signals that can be sent via one edge in one round is \(2^c = \BigO{1}\) and the number of signal sequences that can be sent in \(r\) rounds is \(2^{rc}\).
	Therefore, to produce at least \(2^k - 1\) different sequences of signals, we require \(r = \BigOmega{k/c} = \BigOmega{\sqrt{n}}\) rounds.
	By the assumption that \(\mathcal{A}\) terminates after \(\LittleO{\sqrt{n}}\) rounds, \(\mathcal{A}\) will produce at least one false result for sufficiently large \(k\).
	
	It remains to be shown that the same arguments hold for all equivalent versions of \(S\) that contain the origin.
	If the origin is placed on another node of the longer arm, the valid placement candidates \(p_0, \ldots, p_{k-1}\) are simply shifted to the right by \(k\) or \(2k\) steps, respectively, everything else remains the same.
	If the origin is placed on the shorter arm of the shape, we switch the roles of the first and the second block.
	We place amoebots on all positions \(q_0, \ldots, q_{k-1}\) and use the right side of the first block as the controlling positions instead.
	The number and size of the resulting amoebot structures remain the same, so the same arguments hold as before.
\end{proof}


\section{Helper Procedures}
\label{sec:helpers}

In this section, we introduce the basic primitives we will use to construct shapes for which our valid placement search procedures get below the lower bound.


\subsection{Scale Factor Search}
\label{subsec:scale_factor_search}

As outlined earlier, our shape containment algorithms consist of two search procedures.
The first is a \emph{scale factor search} that determines which scales have to be checked in order to find the maximum scale, and the second procedure is a \emph{valid placement search} that identifies all valid placements of \(k \cdot S^{(r)}\) for all \(r \in \Set{0, \ldots, 5}\) and the scale \(k\), given in a binary counter.

Consider some shape \(S\) and an amoebot structure \(A\) with a binary counter that stores an upper bound \(K \geq \kmax(S, A)\).
The simple \emph{linear search} procedure runs valid placement checks for the scales \(K, K-1, \ldots, 1\) and accepts when the first valid placement is found.
If no placement is found in any iteration, we have \(\kmax = 0\).

\begin{lemma}
	\label{lem:linear_search}
	Let \(S\) be a shape and \(A\) an amoebot structure with a binary counter storing an upper bound \(K \geq \kmax(S, A)\).
	Given a valid placement search procedure for \(S\), the amoebots compute \(\kmax(S, A)\) in at most \(K\) iterations, running the placement search for scales \(K, K-1, \ldots, \kmax\) and with constant overhead per iteration.
\end{lemma}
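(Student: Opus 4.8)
The plan is to specify the linear search as a loop driven by the given binary counter and then argue correctness and the iteration count separately. First I would initialize the counter to the upper bound \(K\) and, in each iteration, invoke the assumed valid placement search for the current counter value \(k\) across all rotations \(r \in \Set{0, \ldots, 5}\). After the search, I would use a global circuit (as in Section~\ref{subsec:coordination_synchronization}) to aggregate the outcome: every amoebot that the search marked as a valid placement beeps, so that all amoebots learn in a single round whether \(\VP(k \cdot S^{(r)}, A) \neq \emptyset\) for some \(r\). If a beep is heard, the search terminates and reports the current \(k\); otherwise the amoebots decrement the counter by one using the constant-round subtraction from Lemma~\ref{lem:binary_operations} and repeat.

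For correctness, I would argue that because \(K \geq \kmax\), the tested scales \(K, K-1, \ldots\) include \(\kmax\), and since we test in strictly decreasing order, the first scale \(k\) for which the placement search reports a non-empty placement set is exactly \(\kmax\). Indeed, by definition of \(\kmax\) no scale larger than \(\kmax\) admits a placement, so no earlier iteration can have found one, while \(k = \kmax\) does admit a placement and is therefore detected. If every scale down to \(1\) is tested without any valid placement being found, then no positive scale has a placement, so \(\kmax = 0\); since \(0 \cdot S\) always fits a non-empty structure this is correct, and all amoebots recognize it from the absence of any beep in the final iteration.

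For the iteration count, when \(\kmax \geq 1\) the procedure halts at the iteration testing \(\kmax\), having run the placement search for the scales \(K, K-1, \ldots, \kmax\), which is \(K - \kmax + 1 \leq K\) iterations. When \(\kmax = 0\), all scales from \(K\) down to \(1\) are tested, giving exactly \(K\) iterations; in both cases the bound of at most \(K\) iterations holds. The per-iteration overhead outside the placement search consists only of the single global-circuit round for result aggregation and the constant-round decrement from Lemma~\ref{lem:binary_operations}, so it is \(\BigO{1}\).

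The only point requiring care is the aggregation and termination signaling: the global circuit must carry the "placement found" bit reliably in constant time so that all amoebots agree on whether to halt or to decrement and start the next search synchronously. Given the fully synchronous scheduler and the global-circuit primitive this is routine, so I expect no real obstacle; the substance of the lemma is simply that searching downward from a valid upper bound returns the maximum feasible scale.
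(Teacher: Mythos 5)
Your proposal is correct and follows essentially the same approach as the paper: a downward linear search from \(K\), running the placement search for all six rotations at each scale, terminating on the first success (which must be \(\kmax\) since larger scales admit no placement), decrementing via Lemma~\ref{lem:binary_operations}, and concluding \(\kmax = 0\) if the counter reaches zero without success. Your explicit global-circuit beep for aggregating the "placement found" bit is a detail the paper leaves implicit here (it appears in the binary-search proof), but it is the intended mechanism and does not change the argument.
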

\begin{proof}
	Let \(S\) be a shape, \(A\) an amoebot structure storing \(K \geq \kmax(S, A)\) in a binary counter and let a valid placement search procedure for \(S\) be given.
	In the first iteration, the amoebots run this procedure to compute \(\VP(K \cdot S^{(r)}, A)\) for all \(r \in \Set{0, \ldots, 5}\).
	If any of these sets is not empty, we have \(\kmax(S, A) = K\) and the procedure terminates.
	Otherwise, by Lemma~\ref{lem:binary_operations}, the amoebots can compute \(K - 1\) and compare it to \(0\) in a constant number of rounds.
	If it is \(0\), we have \(\kmax = 0\), and otherwise, we repeat the above steps for \(K - 1\).
	Since \(K\) is reduced by \(1\) in each step, this takes at most \(K\) iterations overall.
\end{proof}

When using the linear search method, finding a small upper bound \(K\) is essential for reducing the runtime.
However, some shapes permit a faster search method based on an inclusion relation between different scales.

\begin{definition}
	\label{def:scale_shapes}
	We call a shape \(S\) \emph{self-contained} if for all scales \(k < k'\), there exist a translation \(t \in \Reals^2\) and a rotation \(r \in \Set{0, \ldots, 5}\) such that \(k \cdot S^{(r)} + t \subseteq k' \cdot S\).
\end{definition}
For self-contained shapes, finding no valid placements at scale \(k\) immediately implies \(\kmax(S, A) < k\), which allows us to apply a \emph{binary search}.

\begin{lemma}
	\label{lem:binary_search}
	Let \(S\) be a self-contained shape and let \(A\) be an amoebot structure with a binary counter large enough to store \(\kmax = \kmax(S, A)\).
	Given a valid placement search procedure for \(S\), the amoebots can compute \(\kmax\) within \(\BigO{\log \kmax}\) iterations such that each iteration runs the valid placement search once for some scale \(k \leq 2 \cdot \kmax\) and has constant overhead.
\end{lemma}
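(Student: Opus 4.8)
The plan is to establish the \emph{monotonicity} of feasibility in the scale parameter and then exploit it with a doubling phase followed by an ordinary binary search. The doubling phase is needed because, unlike in Lemma~\ref{lem:linear_search}, no a priori upper bound on \(\kmax\) is supplied, and it is also the reason the tested scales can reach up to \(2\kmax\).

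First I would prove the key structural fact that for a self-contained shape the set of feasible scales is the prefix \(\Set{0, 1, \ldots, \kmax}\). Concretely, suppose \(k < k'\) and some rotation \(k' \cdot S^{(r')}\) has a valid placement at a grid node \(p'\), so that \(V(k' \cdot S^{(r')} + p') \subseteq A\). By Definition~\ref{def:scale_shapes} there are a rotation \(r_0\) and a translation \(t\) with \(k \cdot S^{(r_0)} + t \subseteq k' \cdot S\). Applying the rigid motion that rotates by \(r'\) and then translates by \(p'\) to both sides preserves the inclusion, and the left-hand side becomes \(k \cdot S^{(r_0 + r')} + q\) for a grid node \(q\); since a real-set inclusion of valid shapes carries over to the grid nodes they cover, we get \(V(k \cdot S^{(r_0 + r')} + q) \subseteq V(k' \cdot S^{(r')} + p') \subseteq A\), i.e.\ \(q\) is a valid placement of \(k \cdot S^{(r_0 + r')}\). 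Hence any failure to find a placement at scale \(k\) forces \(\kmax < k\), which is exactly the property binary search relies on. The delicate point is grid alignment: I would argue that the copy guaranteed by self-containment can be taken grid-aligned (equivalently, that \(t\) may be chosen as an integer grid translation), so that it corresponds to a genuine valid placement rather than a mere real-set inclusion; rotations by multiples of \(60^\circ\) and integer translations then keep \(q\) on the grid.

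Next comes the doubling phase. I would test the scales \(2^0, 2^1, 2^2, \ldots\) in turn, running the given valid placement search for all six rotations at each \(2^j\) and maintaining \(2^j\) in the binary counter, where doubling is a single shift by Lemma~\ref{lem:binary_operations}. If scale \(1\) already fails, monotonicity gives \(\kmax = 0\) and we stop. Otherwise let \(J \geq 1\) be the first exponent at which the search fails; then \(2^{J-1} \leq \kmax < 2^J \leq 2\kmax\), so every scale tested so far is at most \(2\kmax\) and the number of iterations is \(J + 1 = \BigO{\log \kmax}\). This brackets \(\kmax\) in the integer interval \([2^{J-1}, 2^J - 1]\) of width \(2^{J-1}\).

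Finally I would run a standard binary search on this interval, maintaining the invariant \(a \leq \kmax \leq b\) with initial endpoints \(a = 2^{J-1}\) and \(b = 2^J - 1\). In each step I compute the midpoint \(c\) from \(a\) and \(b\) using addition and a halving shift (constant rounds by Lemma~\ref{lem:binary_operations}), run the placement search at scale \(c \leq b < 2^J \leq 2\kmax\), and set \(a \gets c\) if a placement exists and \(b \gets c - 1\) otherwise, each update justified by monotonicity. The interval halves each step, so after \(\BigO{\log 2^{J-1}} = \BigO{\log \kmax}\) iterations we reach \(a = b = \kmax\). Summing the two phases gives \(\BigO{\log \kmax}\) iterations overall, each performing one valid placement search at a scale \(\leq 2\kmax\) with only constant arithmetic overhead; the only bookkeeping detail is that the counter may need one extra bit to hold values up to \(2^J - 1 < 2\kmax\), which is available since the structure is large enough to store \(\kmax\). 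I expect the monotonicity step, and in particular the grid-alignment argument within it, to be the main obstacle; the two search phases are routine once monotonicity is in hand.
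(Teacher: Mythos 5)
Your overall route is the same as the paper's: monotonicity of feasibility via self-containment, a doubling phase to bracket \(\kmax\), then a binary search; your treatment of the rotation \(r'\) in the monotonicity step is even slightly more explicit than the paper's. However, there is one genuine gap, and it sits exactly at the point you flag as ``the delicate point'' and then skip: the claim that the translation in Definition~\ref{def:scale_shapes} can be taken to be a grid node. The definition only guarantees some \(t \in \Reals^2\) with \(k \cdot S^{(r_0)} + t \subseteq k' \cdot S\); this \(t\) could, for instance, be the center of a face, in which case the point \(q\) in your argument is not a grid node and ``\(q\) is a valid placement'' is not even well formed. Writing that you ``would argue that the copy guaranteed by self-containment can be taken grid-aligned'' is precisely the statement that needs proof, and it is not routine: the paper devotes a separate geometric lemma to it (Lemma~\ref{lem:grid_translation}), showing that any grid node \(t'\) nearest to \(t\) (at distance at most \(\sqrt{3}/3\)) still satisfies \(S + t' \subseteq S'\). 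Its proof is a case analysis over the elements of the smaller shape: every covered node of \(S+t\) lies on a node, edge or face of \(S'\), and its nearest grid translate is still covered; edges require an argument involving parallel edges, spanned parallelograms and adjacent faces; and faces use the fact that two non-intersecting faces of the same orientation have centers at distance at least \(\sqrt{3}/2 > \sqrt{3}/3\). Without this lemma, monotonicity is not established, and with it the correctness of both of your search phases; so the proposal is incomplete at its core step.

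A second, minor point: with your closed-interval invariant \(a \leq \kmax \leq b\) and updates \(a \gets c\) (feasible) and \(b \gets c-1\) (infeasible), the midpoint must be rounded up; with \(c = \Floor{(a+b)/2}\) the interval \([a, a+1]\) stalls forever, since the test at \(c = a\) always succeeds. The paper sidesteps this by maintaining the half-open invariant \(L \leq \kmax < U\) with \(c = \Floor{(L+U)/2}\) and terminating when \(U = L + 1\).
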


To prove Lemma~\ref{lem:binary_search}, we first show the following result, which guarantees the existence of valid placements for self-contained shapes if there is already a valid placement at a larger scale.

\begin{lemma}
	\label{lem:grid_translation}
	Let \(S\) and \(S'\) be arbitrary shapes for which there is a translation \(t \in \Reals^2\) such that \(S + t \subseteq S'\).
	Then, every \(t' \in \Veqt\) with minimal Euclidean distance to \(t\) satisfies \(S + t' \subseteq S'\).
\end{lemma}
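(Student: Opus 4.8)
The plan is to reduce the statement to a purely local claim about single grid cells and then verify it for the three cell types. Write $\delta := t - t'$. Since $t'$ is a grid node of minimal distance to $t$, the vector $\delta$ lies in the Voronoi cell of the origin in the triangular lattice, so $\Abs{\delta} \le 1/\sqrt{3}$, the covering radius of the grid (the circumradius of a unit grid triangle). A shape is a finite union of closed grid cells (nodes, edges and faces), so it suffices to prove the following: for every closed grid cell $c$ and every $\delta$ with $\Abs{\delta} \le 1/\sqrt{3}$, if $c + \delta \subseteq S'$ then $c \subseteq S'$. Granting this, every cell $c$ of $S$ yields a grid cell $c + t'$ (a lattice translate of $c$), and since $(c + t') + \delta = c + t \subseteq S + t \subseteq S'$, the local claim gives $c + t' \subseteq S'$; taking the union over all cells of $S$ gives $S + t' \subseteq S'$, as desired.

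For the local claim I argue by contradiction, exploiting that $S'$ is a union of \emph{closed} cells: if a cell $c$ is not contained in $S'$, then a specific open region around $c$ must be disjoint from $S'$, and I show that $c + \delta$ necessarily meets this region. If $c$ is a face, then $c \not\subseteq S'$ means its open triangle is disjoint from $S'$; since $\Abs{\delta} \le 1/\sqrt{3} < \sqrt{3}/2$ and $\sqrt{3}/2$ is the minimal width of a unit triangle, no line separates $c$ from $c + \delta$, so their interiors overlap and $c + \delta$ contains a point outside $S'$. If $c = \{v\}$ is a node, then $v \notin S'$ forces every cell incident to $v$ to be absent from $S'$, so the open hexagonal star of $v$ is disjoint from $S'$; this hexagon has apothem $\sqrt{3}/2 > 1/\sqrt{3} \ge \Abs{\delta}$, hence $v + \delta$ lies inside it and thus outside $S'$, a contradiction.

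The main obstacle is the edge case, where the overlap argument fails because a segment and its perpendicular offset need not intersect. Here, if $c = e$ is an edge with $e \not\subseteq S'$, then both faces adjacent to $e$ as well as $e$ itself are absent from $S'$, so the open rhombus $R$ spanned by the two adjacent triangles is disjoint from $S'$. The key point is that the translated segment $e + \delta$ still meets $R$. Up to a rotation of the grid I may take $e$ to be the horizontal unit segment, so that $e + \delta$ is the horizontal segment at height $\delta_y$ with $x$-range from $\delta_x$ to $1 + \delta_x$, while the cross-section of $R$ at that height is the open interval of $x$-values strictly between $\Abs{\delta_y}/\sqrt{3}$ and $1 - \Abs{\delta_y}/\sqrt{3}$ (nonempty as $\Abs{\delta_y} \le \Abs{\delta} < \sqrt{3}/2$). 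Overlap then reduces to the two inequalities $\delta_x + \Abs{\delta_y}/\sqrt{3} < 1$ and $-\delta_x + \Abs{\delta_y}/\sqrt{3} < 1$, each of which follows from Cauchy--Schwarz: $\pm\delta_x + \Abs{\delta_y}/\sqrt{3} \le \Abs{\delta}\sqrt{1 + 1/3} \le (1/\sqrt{3})(2/\sqrt{3}) = 2/3 < 1$. Hence $e + \delta$ contains a point of $R \subseteq \Reals^2 \setminus S'$, contradicting $e + \delta \subseteq S'$ and settling the edge case, which completes the reduction.
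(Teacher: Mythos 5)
Your proof is correct, and while it shares with the paper's proof the unavoidable decomposition into the three cell types (nodes, edges, faces), its core mechanism is genuinely different. The paper argues in the forward direction: since each translated cell $c + t$ lies in $S'$, it lies in specific cells of $S'$, and the paper deduces which nearby grid cells this forces into $S'$ (nearest grid nodes of a point in a cell are vertices of that cell; for edges, a positional case analysis on whether the translated endpoints sit on parallel edges, on non-parallel edges spanning a parallelogram, or in faces sharing a corner; for faces, a bound on the distance between centers of same-orientation faces). You instead argue the contrapositive with a ``forbidden open region'' argument: if the snapped cell $c + t'$ is \emph{not} contained in $S'$, then, because $S'$ is a union of closed cells, an explicit open region around $c + t'$ --- the open star of a node, the open rhombus of an edge, the open interior of a face --- is entirely disjoint from $S'$, and metric estimates (hexagon apothem, minimal triangle width, and an interval-overlap computation via Cauchy--Schwarz) show that $c + t = (c + t') + \delta$ must poke into that region, contradicting $c + t \subseteq S'$. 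What your route buys is uniformity and rigor precisely at the trickiest spot: the paper's edge case rests on a somewhat informal analysis of where the two endpoints can land (``$S'$ must contain the parallelogram spanned by those edges''), whereas your open-rhombus argument is a single quantitative computation covering all positions of the translated edge at once. The paper's route, in exchange, stays closer to grid combinatorics and needs no coordinates. One presentational note: your disjointness claims (open star/rhombus/triangle missing $S'$) implicitly use the standard fact that a point in the relative interior of a grid cell $\sigma$ can lie in a closed grid cell $\tau$ only if $\sigma$ is a face of $\tau$; a one-line remark to that effect would make the contrapositive airtight.
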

\begin{proof}
	Consider arbitrary shapes \(S, S'\) with a translation \(t \in \Reals^2\) such that \(S + t \subseteq S'\).
	Let \(t' \in \Veqt\) be a grid node with minimum Euclidean distance to \(t\) and suppose \(t \neq t'\) (otherwise we are done).
	The distance between \(t\) and \(t'\) is at most \(\sqrt{3}/3\) because this is the distance between all of a grid face's corners and its center.
	Now, consider any grid node \(p \in S\).
	Since \(p + t\) is contained in \(S'\), \(p + t\) must be located on an edge or face of \(S'\).
	In each case, \(p + t'\) is a closest node to \(p + t\) and this node must be occupied by \(S'\) since it must belong to that edge or face.
	
	Next, consider some edge \(e \subseteq S\) and let its two end points be \(p\) and \(q\).
	If \(p + t\) and \(q + t\) lie on edges parallel to \(e\) in \(S'\), then \(e + t'\) clearly coincides with one of these edges.
	If \(p + t\) and \(q + t\) both lie on edges not parallel to \(e\), \(S'\) must contain the parallelogram spanned by those edges and \(e + t'\) will lie on one of the sides of the parallelogram.
	Otherwise, \(p + t\) and \(q + t\) must lie in two faces of \(S'\) which have the same orientation and share one corner while \(e + t\) crosses the face between them.
	Since \(e + t'\) will lie on a side of one of these faces and \(S'\) must contain all of them, \(e + t'\) will be contained as well.
	
	
	Finally, let \(f \subseteq S\) be some face and let \(p\) be its center.
	Observe that the minimal distance to the center of a face in \(S'\) with similar orientation that does not intersect \(f\) is the face height \(\sqrt{3}/2\), which is greater than \(\sqrt{3}/3\).
	Thus, \(f + t\) must already intersect the face \(f + t'\), which therefore has to be contained in \(S'\).
\end{proof}

In particular, Lemma~\ref{lem:grid_translation} implies that a shape \(S\) is self-contained if and only if for all scales \(k < k'\), there are a rotation \(r\) and a \emph{grid node} \(t \in \Veqt\) such that \(k \cdot S^{(r)} + t \subseteq k' \cdot S\).

\begin{proof}[Proof of Lemma~\ref{lem:binary_search}]
	Let \(S\) be a self-contained shape.
	Consider an amoebot structure \(A\) with a binary counter that can store \(\kmax = \kmax(S, A)\) and suppose there is a valid placement search procedure for \(S\).
	We run a \emph{binary search} as follows:
	
	First, the amoebots run the valid placement search for scale \(k = 1\) and all valid placements (for any rotation) beep on a global circuit.
	If no beep is sent, the maximum scale must be \(0\) and the procedure terminates.
	Otherwise, the amoebots compute \(k \gets 2 \cdot k\) on the binary counter and run the valid placement search again for the new scale.
	They repeat this until no valid placement is found for the current scale \(k\), at which point an upper bound \(U := k > \kmax\) has been found.
	Observe that \(U \leq 2 \cdot \kmax\), so the counter requires at most one more bit to store \(U\) than for \(\kmax\), which can be handled by the last amoebot in the counter by simulating its successor.
	
	We now maintain the upper bound \(U > \kmax\) and the lower bound \(L := 1 \leq \kmax\) as a loop invariant during the following binary search.
	In each iteration, the amoebots compute \(k = \Floor{(L + U) / 2}\) and run the valid placement search procedure for scale \(k\).
	If a valid placement is found, we update \(L \gets k\), otherwise we update \(U \gets k\).
	We repeat this until \(U = L + 1\), at which point we have \(L = \kmax\).
	Each of these two phases takes \(\BigO{\log \kmax}\) iterations, as is commonly known for binary search algorithms, and we run only one valid placement search in each iteration.
	
	To show the correctness, let \(k\) be some scale factor.
	If there is a valid placement for scale \(k\), then we clearly have \(\kmax(S, A) \geq k\).
	If there are no valid placements for scale \(k\), consider any scale \(k' > k\) and a translation \(t \in \Reals^2\) and rotation \(r \in \Set{0, \ldots, 5}\) such that \(k \cdot S^{(r)} + t \subseteq k' \cdot S\), which exist because \(S\) is self-contained.
	By Lemma~\ref{lem:grid_translation}, \(t\) can always be chosen as a grid node position so that \(k \cdot S^{(r)} + t\) aligns with the grid.
	Then, for any \(p \in \VP(k' \cdot S, A)\), the amoebot at location \(p + t\) is a valid placement of \(k \cdot S^{(r)}\) since \(V(k \cdot S^{(r)} + p + t) \subseteq V(k' \cdot S + p) \subseteq A\).
	By our assumption that there are no valid placements for scale \(k\), we have \(\VP(k' \cdot S, A) = \emptyset\) for any choice of \(k'\), implying \(\kmax(S, A) < k\).
	Therefore, the invariants \(U > \kmax\) and \(L \leq \kmax\) are established in the first phase and are maintained during the binary search in the second phase.
	As a consequence, \(L = \kmax\) holds when \(U = L + 1\) is reached.
	Additionally, since \(k \leq U\) and \(U \leq 2 \cdot \kmax\) for every checked scale \(k\), the valid placement search is only executed for scales at most \(2 \cdot \kmax\).
\end{proof}


\subsection{Primitive Shapes}
\label{subsec:primitive_shapes}

\begin{definition}
	\label{def:line_shape}
	A \emph{line shape} \(\Line(d, \ell)\) is a shape consisting of \(\ell \in \Nats_0\) consecutive edges extending in direction \(d\) from the origin.
	For \(\ell = 0\), the shape contains only the origin point.
\end{definition}
\begin{definition}
	\label{def:triangle_shape}
	Let \(\Tri(d, 1)\) be the shape consisting of the triangular face spanned by the unit vectors \(\UVec{d}\) and \(\UVec{d'}\), where \(d'\) is obtained from \(d\) by one \(60^\circ\) counter-clockwise rotation.
	We define general \emph{triangle shapes} as \(\Tri(d, \ell) := \ell \cdot \Tri(d, 1)\) for \(\ell \in \Nats_{> 1}\) and call \(\ell\) the \emph{side length} or \emph{size} of \(\Tri(d, \ell)\).
\end{definition}
Lines and triangles are important primitive shapes which we will use to construct more complex shapes.
In this subsection, we introduce placement search procedures allowing amoebots to identify valid placements of these shapes when their size is given in a binary counter.
The procedures rely heavily on the \PASC{} algorithm combined with binary operations on bit streams.
A simple and natural way to establish the required chains is using \emph{segments}:

\begin{definition}
	\label{def:segment}
	Let \(W \in \Set{X, Y, Z}\) be a grid axis.
	A \emph{(\(W\))-segment} is a connected set of nodes on a line parallel to \(W\).
	Let \(C \subseteq \Veqt\), then a \emph{maximal \(W\)-segment of \(C\)} is a finite \(W\)-segment \(M \subseteq C\) that cannot be extended with nodes from \(C\) on either end.
	The \emph{length} of a finite segment \(M\) is \(\Abs{M} - 1\).
\end{definition}
For example, chains on maximal segments of the amoebot structure \(A\) can be constructed easily once a direction has been agreed upon:
All amoebots on a segment identify their chain predecessor and successor by checking the existence of neighbors on the direction's axis.
The start and end points of the segment are the unique amoebots lacking a neighbor in one or both directions.

Our placement search procedure for lines essentially runs the \PASC{} algorithm to measure the length of amoebot segments and compares them to the given scale.
We construct the procedure in several steps.
First, running the \PASC{} algorithm on maximal amoebot segments allows the amoebots to compute their distance to a boundary:

\begin{lemma}
    \label{lem:max_line_length}
    Let \(A\) be an amoebot structure and \(d \in \Directions\) be a cardinal direction known by the amoebots.
    Within \(\BigO{\log n}\) rounds, each amoebot \(p \in A\) can compute its own distance to the nearest boundary in direction \(d\) as a sequence of bits.
\end{lemma}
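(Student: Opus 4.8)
The plan is to reduce the statement to a direct application of the \PASC{} algorithm along maximal segments. The key observation is that the distance of an amoebot \(p\) to the nearest boundary in direction \(d\) is exactly its distance, along the grid axis \(W\) of \(d\), to the endpoint of its maximal \(W\)-segment lying in direction \(d\): moving from \(p\) in direction \(d\), we pass through occupied nodes until we reach the first amoebot without an occupied neighbor in direction \(d\), which is precisely that endpoint and which is a boundary amoebot of \(A\). Since \(A\) is finite, every maximal \(W\)-segment (Definition~\ref{def:segment}) is finite and therefore has such an endpoint, so the quantity is always well-defined.

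First I would let every amoebot locally identify its maximal \(W\)-segment by checking for occupied neighbors along the axis of \(d\): its chain predecessor is its neighbor in direction \(d\) and its chain successor is its neighbor in the direction opposite to \(d\), so that the endpoint in direction \(d\) becomes the chain start \(p_0\). This start amoebot is exactly the one lacking an occupied neighbor in direction \(d\), which it detects locally. The chain then extends away from the boundary, so that the index of each amoebot along the chain equals its distance to the boundary in direction \(d\).

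Next I would run the \PASC{} algorithm on each such chain with start point \(p_0\). By Lemma~\ref{lem:pasc}, after \(\Ceil{\log m}\) iterations within \(\BigO{\log m}\) rounds (where \(m\) is the number of amoebots on the segment), each amoebot \(p_i\) obtains the bits of its index \(i\) as a bit stream, and by the orientation chosen above this index is precisely the sought distance. The amoebots may either store these bits or feed them directly into a stream operation (\eg{}, via Lemma~\ref{lem:binary_stream_operations}), which is why the output is most naturally produced as a sequence of bits rather than as a value held in constant memory.

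Finally, for the runtime and for running all segments in parallel, I would argue that distinct maximal \(W\)-segments are edge-disjoint: they either lie on different \(W\)-lines or, on the same line, are separated by at least one unoccupied node, so no edge along the \(W\) axis is shared. Hence all \PASC{} executions proceed simultaneously without interference, exactly as in the multi-chain setting of Lemma~\ref{lem:pasc_cutoff}. Since any segment contains at most \(n\) amoebots, each chain runs \PASC{} for \(\BigO{\log n}\) rounds, giving the claimed bound. The main subtlety I expect is purely a matter of bookkeeping: fixing the chain orientation so that \(p_0\) is the boundary in direction \(d\) (and not the opposite endpoint) and verifying the edge-disjointness that legitimizes the parallel execution; the measurement itself is then immediate from \PASC{}.
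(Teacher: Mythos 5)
Your proposal is correct and follows essentially the same route as the paper's proof: orienting chains so that the amoebot lacking a neighbor in direction \(d\) becomes the \PASC{} start point, running \PASC{} on all maximal segments in parallel (justified by their disjointness), and reading off each amoebot's index as the boundary distance via Lemma~\ref{lem:pasc} within \(\BigO{\log n}\) rounds. The only difference is that you spell out the edge-disjointness and the identification of segment index with boundary distance, which the paper leaves implicit.
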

Observe that this boundary distance is the largest \(\ell \in \Nats_0\) such that \(p \in \VP(\Line(d, \ell), A)\).
If a desired line length is given, the amoebots can use this procedure to determine the valid placements of the line:
\begin{lemma}
    \label{lem:line_detection}
    Let \(L = \Line(d, \ell)\) be a line shape and let \(A\) be an amoebot structure that knows \(d\) and stores \(\ell\) in some binary counter.
    Within \(\BigO{\log \min \Set{\ell, n}}\) rounds, the amoebots can compute \(\VP(L, A)\).
\end{lemma}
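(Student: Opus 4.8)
The plan is to reduce line detection to the boundary-distance computation of Lemma~\ref{lem:max_line_length} combined with the cutoff comparison of Lemma~\ref{lem:pasc_cutoff}. Recall from the remark after Lemma~\ref{lem:max_line_length} that the boundary distance of an amoebot $p$ in direction $d$ is exactly the largest $\ell'$ with $p \in \VP(\Line(d, \ell'), A)$. Hence $p \in \VP(\Line(d, \ell), A)$ if and only if $p$'s boundary distance in direction $d$ is at least $\ell$. The whole task is therefore to let each amoebot compare its own boundary distance to the given value $\ell$, and the goal runtime $\BigO{\log \min\Set{\ell, n}}$ suggests we must avoid computing the full boundary distance (which costs $\BigO{\log n}$) when $\ell$ is small.

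First I would construct, for direction $d$, the maximal $W$-segments of $A$ along the axis of $d$ using Definition~\ref{def:segment}: every amoebot checks for occupied neighbors in the $d$ and $-d$ directions, so the endpoints of each segment are recognized locally and the chain structure is established in $\BigO{1}$ rounds. On each such segment, orient the chain so that $p_0$ is the amoebot \emph{farthest} in direction $d$ (the boundary in direction $d$), so that the distance of $p_i$ to $p_0$ along the chain equals its boundary distance in direction $d$. I would then invoke Lemma~\ref{lem:pasc_cutoff} directly: with the value $\ell$ stored in the given binary counter and each segment acting as one chain, every amoebot $p_i$ can compare its chain-index $i$ (its boundary distance) to $\ell$ within $\BigO{\log \min\Set{\ell, m}}$ rounds, where $m-1$ is the segment length. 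An amoebot marks itself a valid placement exactly when the comparison yields $i \geq \ell$.

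The runtime and simultaneity follow cleanly from Lemma~\ref{lem:pasc_cutoff}: since distinct maximal $W$-segments along a fixed axis are pairwise edge-disjoint, its final clause permits running the comparison on all segments at once, and the bound it provides is $\BigO{\log \min\Set{\ell, m}} = \BigO{\log \min\Set{\ell, n}}$ because every segment has length at most $n-1$. Correctness then reduces to the equivalence noted above, namely that $p_i \in \VP(\Line(d,\ell),A)$ iff its boundary distance $i$ satisfies $i \geq \ell$, which in turn rests on the observation that the $\ell$ consecutive edges of $\Line(d,\ell)+p_i$ lie on the segment precisely when the segment extends at least $\ell$ further in direction $-d$ from $p_i$.

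The main obstacle I anticipate is not the comparison itself but the bookkeeping at segment boundaries and the exact indexing convention. One must be careful that the chain is oriented with its start at the $d$-boundary so that \PASC{} measures the correct distance, and that amoebots lacking a neighbor in direction $d$ (the segment starts, with $i=0$) are handled consistently — these are valid placements only for $\ell = 0$. The degenerate case $\ell = 0$, where $\Line(d,0)$ is just the origin and every amoebot is a valid placement, should be checked separately, but Lemma~\ref{lem:pasc_cutoff} already treats $d=0$ as a constant-round special case, so this folds in without extra work.
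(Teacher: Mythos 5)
Your proposal is correct and follows essentially the same route as the paper: the paper likewise orients chains along maximal segments with the start at the amoebot furthest in direction \(d\), runs the \PASC{} algorithm on all (edge-disjoint) segments simultaneously while streaming the bits of \(\ell\) on a global circuit, and invokes Lemma~\ref{lem:pasc_cutoff} so that each amoebot compares its boundary distance to \(\ell\), accepting iff the distance is at least \(\ell\). Your handling of the orientation convention and the \(\ell = 0\) degenerate case matches the paper's treatment (the latter being absorbed into the special case inside Lemma~\ref{lem:pasc_cutoff}).
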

\begin{proof}[Proof of Lemmas \ref{lem:max_line_length} and \ref{lem:line_detection}]
	Let \(A\) be an amoebot structure and \(d \in \Directions\) a direction known by the amoebots.
	First, the amoebots establish chains along all maximal segments in the opposite direction of \(d\), such that on each segment, the amoebot furthest in direction \(d\) is the start of the chain.
	This can be done in one round since each amoebot simply chooses its neighbor in direction \(d\) as its predecessor and the neighbor in the opposite direction as its successor.
	Next, the amoebots run the \PASC{} algorithm on all segments simultaneously, synchronized using a global circuit.
	This allows each amoebot to compute the distance to its segment's end point in direction \(d\) as a bit sequence by Lemma~\ref{lem:pasc}.
	Because the length of each segment is bounded by \(n\), the \PASC{} algorithm terminates within \(\BigO{\log n}\) rounds.
	
	Now, suppose a length \(\ell\) is stored in some binary counter in \(A\).
	We modify the procedure such that in each iteration of the \PASC{} algorithm, we transmit one bit of \(\ell\) on the global circuit.
	By Lemma~\ref{lem:pasc_cutoff}, this allows each amoebot to compare its distance to the boundary in direction \(d\) to \(\ell\), since the segments are disjoint (and therefore edge-disjoint in particular).
	We have \(p \in \VP(\Line(d, \ell), A)\) if and only if the distance of amoebot \(p\) to the boundary in direction \(d\) is at least \(\ell\).
	Thus, each amoebot can immediately decide whether it is in \(\VP(\Line(d, \ell), A)\) after the comparison, which takes \(\BigO{\log \min \Set{\ell, n}}\) rounds.
\end{proof}

Next, consider the problem of finding all longest segments in the amoebot structure \(A\).
This is equivalent to solving the shape containment problem for any base shape \(\Line(d, 1)\) with \(d \in \Directions\).

\begin{lemma}
    \label{lem:longest_segments}
    For any direction \(d \in \Directions\), the shape containment problem for the line shape \(\Line(d, 1)\) can be solved in \(\BigO{\log k}\) rounds, where \(k = \kmax(\Line(d, 1), A)\).
\end{lemma}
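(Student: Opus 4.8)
The plan is to reduce the problem to computing the maximum scale \(\kmax = k\) as a single binary counter and then invoking the line detection of Lemma~\ref{lem:line_detection}. Observe that the rotations \(\Line(d,1)^{(r)}\) are exactly the line shapes in the six cardinal directions, which lie on the three grid axes, so \(\kmax\) equals the length of a longest maximal segment of \(A\) over all three axes (cf.\ Definition~\ref{def:segment}). Moreover, for a fixed direction \(d'\), an amoebot \(p\) lies in \(\VP(\kmax \cdot \Line(d',1), A) = \VP(\Line(d', \kmax), A)\) precisely if its boundary distance in direction \(d'\) equals \(\kmax\), and these placements are delivered by Lemma~\ref{lem:line_detection} once \(\kmax\) is available in a counter. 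Thus it suffices to compute \(\kmax\) into a counter within \(\BigO{\log k}\) rounds and then run Lemma~\ref{lem:line_detection} for each of the six directions, the latter costing \(\BigO{\log \min\Set{\kmax, n}} = \BigO{\log k}\) rounds in total. The degenerate case \(\kmax = 0\) occurs only for \(n = 1\) and is detected immediately by the absence of neighbors.

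First I would measure all segment lengths. For each of the three axes in turn, the amoebots form chains along the maximal segments of that axis and run the \PASC{} algorithm on all of them simultaneously; the maximal segments of one axis are vertex-disjoint and hence cause no interference, as in Lemma~\ref{lem:pasc_cutoff}. By Lemma~\ref{lem:pasc} and the remark that a chain's length can be stored in binary on the chain itself, each segment ends up holding its own length as a counter laid out along the segment with its least significant bit at the segment's start. Since every maximal segment has length at most \(\kmax = k\), each \PASC{} execution finishes within \(\BigO{\log k}\) rounds, and processing the three axes one after another keeps the total at \(\BigO{\log k}\).

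The crux is to extract the global maximum \(\kmax\) of these stored lengths in a single \(\BigO{\log k}\) pass. A generic exponential or binary search over the scale using Lemma~\ref{lem:line_detection} would cost \(\BigO{\log^2 k}\), because each threshold comparison consumes a fresh \PASC{} stream; I avoid this by exploiting that the lengths are now persisted and can therefore be read most-significant-bit-first. Let \(b_{\max}\) be the maximum bit length over all segment counters; this equals the maximum number of \PASC{} iterations used in the previous step and is recognized on the synchronizing global circuit. A segment is a \emph{candidate} if its \PASC{} execution used exactly \(b_{\max}\) iterations, equivalently if its counter has \(b_{\max}\) bits; these are flagged during the measurement step, while all other segments are strictly shorter than \(\kmax\) and are discarded. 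Each candidate marks the most significant bit of its counter (found via the most-significant-bit routine of Lemma~\ref{lem:binary_operations}); since all candidates share the bit length \(b_{\max}\), these markers sit at the same position and can be advanced toward the segment starts in lockstep. In each of the \(b_{\max} = \BigO{\log k}\) steps, the surviving candidates whose marked bit is \(1\) beep on a global circuit, and if any beep occurs the candidates whose marked bit is \(0\) drop out. After the markers reach the least significant bit, the survivors are exactly the longest segments, and the counter of any survivor stores \(\kmax\).

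Finally, I would feed \(\kmax\) from a surviving segment's counter into Lemma~\ref{lem:line_detection} and run it for all six cardinal directions, so that every amoebot learns, for each rotation \(r\), whether it lies in \(\VP(\kmax \cdot \Line(d,1)^{(r)}, A)\); this completes the shape containment problem for \(\Line(d,1)\) in \(\BigO{\log k}\) rounds. The main obstacle is the single-pass maximum computation of the third paragraph: getting below the \(\BigO{\log^2 k}\) of repeated comparisons forces me to persist the measured lengths as readable counters and to run one synchronized most-significant-bit-first elimination. The delicate points there are aligning the candidate markers (which relies on all longest segments having the same bit length \(b_{\max}\)), discarding the shorter segments up front, and ensuring that the simultaneous \PASC{} runs, marker movements and elimination beeps do not interfere across the three axes, which I handle through edge-disjointness within an axis and by serializing the axes when the pin budget is tight.
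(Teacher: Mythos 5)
Your proposal follows essentially the same route as the paper's own proof: run the \PASC{} algorithm on all maximal segments to measure their lengths, store each length as a binary counter laid out on its own segment, retire segments that finish early, and then perform a synchronized most-significant-bit-first elimination over a global circuit to isolate the longest segments. Your extra final pass through Lemma~\ref{lem:line_detection} is a valid (if slightly redundant) way to finish; the paper instead observes that once the longest segments are identified, the containment problem is already solved, since on each surviving segment the two endpoint amoebots recognize themselves locally as the unique valid placements for the two directions of that axis.

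There is, however, one concrete flaw in your write-up: you serialize the three axes, but your candidate-flagging step does not survive that choice. You declare that a segment is a candidate ``if its \PASC{} execution used exactly \(b_{\max}\) iterations'' and that this is ``flagged during the measurement step'' --- yet under serialization, a segment processed in the \(X\)-phase cannot know during that phase whether the later \(Y\)- or \(Z\)-phases will run more iterations, and no amoebot can simply count elapsed iterations for a later comparison, since \(b_{\max} = \Theta(\log k)\) and storing such a count needs \(\Theta(\log \log k)\) bits, exceeding constant memory. The paper sidesteps this entirely by running the \PASC{} executions of all three axes \emph{simultaneously} (this is safe: the three axes use pairwise disjoint edge sets, and each amoebot stores at most three bits, one per axis), so that ``finished but still hearing beeps on the global circuit'' retires every non-longest segment across all axes at once. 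If you want to keep your serialization, you need one additional synchronized phase after all measurements: every within-axis candidate advances a marker from the least significant bit of its stored counter toward its most significant bit, one position per round, beeping on the global circuit while still en route; any candidate whose marker has arrived but who still hears beeps retires. This costs \(\BigO{\log k}\) rounds, leaves exactly the segments of bit length \(b_{\max}\) with their markers aligned at the common most significant bit, and only then does your elimination argument go through.
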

\begin{proof}
	Consider some amoebot structure \(A\) and let \(m\) be the maximum length of a segment in \(A\).
	The amoebots first establish chains along all maximal \(X\)-, \(Y\)- and \(Z\)-segments and run the \PASC{} algorithm on them to compute their lengths.
	On each segment, the end point transmits the received bits on a circuit spanning the whole segment so that it can be stored in the segment itself, using it as a counter.
	This works simultaneously because segments belonging to the same axis are disjoint and because each amoebot stores at most three bits (one for each axis).
	We use a global circuit for synchronization and let each segment beep as long as it has not finished computing its length.
	Any segment that is already finished but receives a beep on the global circuit marks itself as retired since it cannot have maximal length.
	At the end of this step, the lengths of all non-retired segments are stored on the segments and share the same number of bits, which is equal to \(\Floor{\log m} + 1\).
	Next, each segment places a marker on its highest-value bit and moves it backwards along the chain, one step per iteration.
	For each bit, the segment beeps on a global circuit if the bit's value is \(1\).
	If the bit's value is \(0\) but a beep was received on the global circuit, the segment retires since the other segment that sent the beep must have a greater length.
	This is true because at this point, all previous (higher-value) bits of the two lengths must have been equal, so the current bit is the first (and therefore highest value) position where the two numbers differ.
	At the end of the procedure, all segments with length less than \(m\) have retired.
	Because the segments of length \(m\) have not retired in any iteration and since \(\kmax(\Line(d, 1), A) = m\), the algorithm solves the containment problem for \(\Line(d, 1)\).
	The runtime follows directly from the runtime of the \PASC{} algorithm (Lemma~\ref{lem:pasc}).
\end{proof}

Using Lemmas~\ref{lem:line_detection} and \ref{lem:longest_segments}, we obtain a simple solution for lines of arbitrary base lengths:

\begin{corollary}
    \label{cor:longest_lines}
    For any direction \(d \in \Directions\) and length \(\ell \in \Nats\), the shape containment problem for the line shape \(\Line(d, \ell)\) can be solved in \(\BigO{\log m}\) rounds, where \(m\) is the length of a longest segment in \(A\).
\end{corollary}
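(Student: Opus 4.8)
\textbf{} The plan is to reduce containment for $\Line(d,\ell)$ to a longest-segment computation followed by one single-length placement search per rotation. The key observation is that $k \cdot \Line(d,\ell)^{(r)} = \Line(d^{(r)}, k\ell)$, where $d^{(r)}$ denotes $d$ rotated by $r$ steps of $60^\circ$; as $r$ ranges over $\Set{0,\ldots,5}$, the direction $d^{(r)}$ ranges over all six cardinal directions. Consequently, a placement of $k \cdot \Line(d,\ell)^{(r)}$ exists precisely when $A$ contains a segment of length at least $k\ell$ on the corresponding axis, so that $\kmax(\Line(d,\ell), A) = \Floor{m/\ell}$, where $m$ is the length of a longest segment of $A$ taken over all three axes.

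First I would run the longest-segment procedure of Lemma~\ref{lem:longest_segments}, which measures all maximal $X$-, $Y$- and $Z$-segments simultaneously and leaves every longest segment storing its length $m$ in a binary counter, within $\BigO{\log m}$ rounds. Since $\ell$ is a fixed constant of the shape that every amoebot knows, each such counter can then compute $L := \Floor{m/\ell}\cdot\ell = \kmax\cdot\ell$ by one division and one multiplication by $\ell$ via Lemma~\ref{lem:binary_operations}; as the counter holds only $\BigO{\log m}$ bits, this costs another $\BigO{\log m}$ rounds. If $L = 0$, i.e.\ $m < \ell$, then $\kmax = 0$, which the counters announce on a global circuit before terminating.

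Otherwise, every surviving longest segment now holds the identical value $L$, and I would feed $L$ into the single-length line detection of Lemma~\ref{lem:line_detection}, run once for each of the six directions $d^{(r)}$. By Lemma~\ref{lem:line_detection}, each run labels exactly the amoebots $p$ with $p \in \VP(\Line(d^{(r)}, L), A) = \VP(\kmax\cdot\Line(d,\ell)^{(r)}, A)$ in $\BigO{\log\min\Set{L,n}} = \BigO{\log m}$ rounds, and six runs are a constant number, giving the claimed $\BigO{\log m}$ bound. The main point to handle carefully is the broadcast of the single global value $L$ when several longest segments coexist: rather than electing one of them (which would cost $\BigO{\log n}$), I rely on the fact that all of them store the \emph{identical} $L$, so that when Lemma~\ref{lem:line_detection} streams the bits of $L$ on the global circuit their markers move in lockstep and their beeps coincide, leaving the transmitted bit pattern correct. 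A related subtlety worth stating is that valid placements of the scaled line generally lie on segments strictly shorter than $m$, so they cannot be read off from the longest-segment step directly and genuinely require the separate detection pass at length $L$.
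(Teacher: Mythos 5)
Your proposal is correct and follows essentially the same route as the paper's proof: run Lemma~\ref{lem:longest_segments} to obtain the longest segment length \(m\) in binary counters, compute \(\Floor{m/\ell}\) (times \(\ell\)) via Lemma~\ref{lem:binary_operations}, and then invoke the line detection of Lemma~\ref{lem:line_detection} at the scaled length once per rotation, all within \(\BigO{\log m}\) rounds. Your additional remarks (the explicit \(\kmax = 0\) case and the observation that multiple longest segments storing the identical value beep in lockstep on the global circuit) are correct refinements of details the paper leaves implicit, not a different approach.
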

\begin{proof}
	By Lemma~\ref{lem:longest_segments}, the amoebots can find the maximal segment length \(m\) in \(A\) and write it into binary counters within \(\BigO{\log m}\) rounds, establishing counters on the longest segments.
	After that, on each counter storing \(m\), they can compute \(k := \Floor{m / \ell}\) in \(\BigO{\log m}\) rounds by Lemma~\ref{lem:binary_operations} and since \(\ell\) is a constant with a known binary representation.
	The maximum scale for \(\Line(d, \ell)\) is \(k\) since for \(k + 1\), we would require segments of length \((k + 1) \cdot \ell > m\) in \(A\).
	Finally, using Lemma~\ref{lem:line_detection}, the amoebots find all valid placements of \(\Line(d, k \cdot \ell)\) in \(\BigO{\log \min \Set{m, n}} = \BigO{\log m}\) rounds.
	The procedure can be repeated a constant number of times for the other rotations of the line.
\end{proof}

Moving on, our \emph{triangle primitive} constructs valid placements of triangles.

\begin{lemma}
    \label{lem:triangle_primitive}
    Let \(T = \Tri(d, \ell)\) be a triangle shape and let \(A\) be an amoebot structure that knows \(d\) and stores \(\ell\) in some binary counter.
    The amoebots can compute \(\VP(T, A)\) within \(\BigO{\log \min\Set{\ell, n}}\) rounds.
\end{lemma}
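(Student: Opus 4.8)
My plan is to reduce the triangle to a staircase of line constraints along the segment in direction \(d'\) (where \(d'\) is \(d\) rotated by \(60^\circ\) counter-clockwise), and then to evaluate these constraints with a single doubling scan built on the \PASC{} machinery. The starting observation is that \(V(\Tri(d,\ell))\) consists exactly of the nodes \(a\UVec{d} + b\UVec{d'}\) with \(a, b \in \Nats_0\) and \(a + b \leq \ell\), so fixing the row index \(b = i\) decomposes it into rows parallel to \(d\): \(V(\Tri(d,\ell)) = \bigcup_{i=0}^{\ell} V(\Line(d,\ell-i) + i\UVec{d'})\). Since a placement is valid precisely when all these nodes lie in \(A\), an amoebot \(p\) is in \(\VP(\Tri(d,\ell),A)\) if and only if for every \(i \in \Set{0,\ldots,\ell}\) the node \(q_i := p + i\UVec{d'}\) lies in \(A\) and admits a line of length \(\ell - i\) in direction \(d\), i.e.\ \(L_d(q_i) \geq \ell - i\), where \(L_d(q)\) denotes \(q\)'s distance to the boundary in direction \(d\) (Lemma~\ref{lem:max_line_length}). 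The nodes \(q_0, \ldots, q_\ell\) are consecutive on the maximal \(d'\)-segment through \(p\), and the requirement ``\(q_i \in A\) for all \(i\)'' is exactly \(p \in \VP(\Line(d',\ell),A)\), which Lemma~\ref{lem:line_detection} decides. What remains is to verify the staircase of constraints \(L_d(q_i) \geq \ell - i\).

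Only boundary distances up to \(\ell\) are relevant, so I would first compute the capped value \(a(q) := \min\Set{L_d(q),\ell}\) for every amoebot. This can be obtained by running the \PASC{} algorithm on the maximal \(d\)-segments exactly as in the proof of Lemma~\ref{lem:line_detection}, truncated after \(\Ceil{\log\ell}\) iterations and combined with the cutoff test of Lemma~\ref{lem:pasc_cutoff}: each amoebot then learns, as a bit stream, the low \(\Ceil{\log\ell}\) bits of its \(d\)-distance together with the flag whether that distance is at least \(\ell\), which together determine \(a(q)\). Because the \(d\)-segments are edge-disjoint and we stop after \(\Ceil{\log\ell}\) iterations, this costs \(\BigO{\log\min\Set{\ell,n}}\) rounds.

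Next I would re-express the staircase condition in absolute segment coordinates. Indexing the nodes of the \(d'\)-segment through \(p\) by their position \(j\) and setting \(\sigma_j := a(q_j) + j\), the condition \(L_d(q_i) \geq \ell - i\) for all \(i\) becomes \(\min_{j \geq j_p} \sigma_j \geq \ell + j_p\), where \(j_p\) is the index of \(p\); the suffix may be taken unbounded because every \(q_j\) with \(j - j_p \geq \ell\) already has \(\sigma_j \geq j \geq \ell + j_p\) and is non-binding. Thus \(p\) is a valid placement if and only if \(p \in \VP(\Line(d',\ell),A)\) and this suffix-minimum inequality holds. I would evaluate the suffix minimum by a doubling scan in the style of \PASC{}: combine the partial minima of blocks of length \(2^t\) into blocks of length \(2^{t+1}\) over \(\Ceil{\log\ell}\) levels, using circuits along the \(d'\)-segment to pair each node with the one \(2^t\) steps ahead, and finally compare the resulting window minimum with \(\ell + j_p\) and intersect with the \(\Line(d',\ell)\)-placements. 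Partial minima reaching past a segment end are treated as \(+\infty\), so that any placement whose \(d'\)-edge does not fit is rejected by the line test.

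The step I expect to be the main obstacle is carrying out this scan within the \(\BigO{\log\min\Set{\ell,n}}\) budget under the constant-memory restriction: a single amoebot cannot store a window minimum of \(\Theta(\log\ell)\) bits by itself, so the intermediate values must be distributed across the chain, and the per-level combine — comparing and selecting two \(\BigO{\log\ell}\)-bit numbers held \(2^t\) apart — must be realized in \(\BigO{1}\) amortized rounds per level by exploiting the same circuit folding that makes \PASC{} logarithmic, together with the streaming comparison of Lemma~\ref{lem:binary_stream_operations}. Once the scan is in place, the remaining work is routine: all \(d'\)-segments run simultaneously since they are edge-disjoint, the levels are synchronized on a global circuit, and short segments are handled by the boundary bookkeeping above.
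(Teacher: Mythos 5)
Your reduction is mathematically sound: the row decomposition \(V(\Tri(d,\ell)) = \bigcup_{i=0}^{\ell} V(\Line(d,\ell-i) + i\UVec{d'})\), the capping of boundary distances at \(\ell\), and the reformulation as the suffix-minimum condition \(\min_{j \geq j_p} \sigma_j \geq \ell + j_p\) (with early segment ends caught by the \(\Line(d',\ell)\) test) are all correct. However, the proof has a genuine gap at exactly the step you flag as the ``main obstacle'': you never show how the doubling scan can run in \(\BigO{1}\) amortized rounds per level, and this is not a routine extension of the \PASC{} machinery --- it is the entire difficulty of the lemma. Concretely, at level \(t\) every position \(j\) must learn the window minimum held \(2^t\) steps ahead; the (source, destination) pairs whose message must cross a fixed edge of the \(d'\)-segment at that level number \(2^t\), while within the segment-local scheme you describe each edge supports only \(\BigO{1}\) circuits (constant pins) and hence \(\BigO{1}\) bits per round. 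So the per-level combine cannot be realized by any direct routing along the segment, and the analogy with \PASC{} is misleading: \PASC{} delivers to each amoebot only its own distance bits, one per iteration, with the active set halving each round; it never transports distinct multi-bit values over distance \(2^t\) for all positions simultaneously. The natural repair --- storing block minima distributed across blocks and comparing neighboring blocks' minima by streaming bits over a circuit spanning both --- costs \(\BigTheta{\log \ell}\) rounds per level, and the obvious pipelining across levels fails because \(\BigOmega{\log \ell}\) streams would then have to cross a single constant-capacity edge concurrently. This yields \(\BigO{\log^2 \ell}\) at best, missing the claimed \(\BigO{\log \min\Set{\ell, n}}\) bound, which would also degrade the snowflake and star-convex runtimes that invoke this lemma.

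The underlying reason your route is hard is that the triangle's rows have varying lengths \(\ell - i\): each ``blocker'' \(q\) with boundary distance \(a(q) < \ell\) eliminates an interval of placements of length \(\ell - a(q)\) that varies with \(q\), and unions of variable-length intervals are precisely what the paper's primitives cannot compute in \(\BigO{\log}\) rounds. The paper's own proof sidesteps this entirely: it covers the triangle's node set by three parallelograms \(P_1 = \Line(d_1,\ell')\oplus\Line(d_2,\ell')\), \(P_2\) and \(P_3\) (with \(\ell' = \Floor{\ell/2}\), \(\ell'' = \ell - \ell'\)), each a Minkowski sum of lines, possibly shifted; for these, all elimination intervals have \emph{uniform} length, so Lemmas~\ref{lem:stretched_placements} and~\ref{lem:shifted_shapes} apply, and the three placement sets are intersected at the end (with a small parity case distinction between \(\ell'\) and \(\ell''\)). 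To rescue your approach you would need a genuinely new algorithm for the distributed suffix minimum (equivalently, a one-dimensional distance transform with variable offsets) running in \(\BigO{\log \ell}\) rounds under constant memory and constant pins; nothing in the paper's toolbox provides this.
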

The procedure runs the line primitive to find valid placements of lines and then applies techniques from the following subsections to transform and combine them into valid placements of triangles.
We defer the proof of Lemma~\ref{lem:triangle_primitive} until the relevant ideas have been explained (see Sec.~\ref{subsec:valid_placement_search}) since the approach will be useful for more shapes than triangles.

Observe that for any two shapes \(S, S'\) with \(S \subseteq S'\), \(\kmax(S, A)\) is an upper bound for \(\kmax(S', A)\).
Thus, the maximal scale of an edge \(\Line(d, 1)\) or face \(\Tri(d, 1)\) is a natural upper bound on the maximum scale of any shape containing an edge or face, respectively.
For this reason, any longest segment in the amoebot structure provides sufficient memory to store the scale values we have to consider.
To use this fact, we will establish binary counters on \emph{all} maximal amoebot segments (on all axes) and use them simultaneously, deactivating the ones whose memory is exceeded at any point.
Using Lemma~\ref{lem:triangle_primitive} therefore allows us not only to solve the shape containment problem for triangles but also to determine an upper bound on the scale of shapes that contain a triangle.

\begin{corollary}
    \label{cor:largest_triangles}
    Let \(T = \Tri(d, 1)\), let \(A\) be some amoebot structure, and \(k = \kmax(T, A)\).
    Within \(\BigO{\log^2 k}\) rounds, the amoebots can solve the shape containment problem for \(T\) and store \(k\) in some binary counter.
\end{corollary}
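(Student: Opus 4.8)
The plan is to use the triangle primitive (Lemma~\ref{lem:triangle_primitive}) as the valid placement search and drive it with the binary search of Lemma~\ref{lem:binary_search}. First I would observe that $\Tri(d,1)$ is \emph{self-contained} in the sense of Definition~\ref{def:scale_shapes}: for scales $k < k'$ we have $k \cdot \Tri(d,1) = \Tri(d,k) \subseteq \Tri(d,k') = k' \cdot \Tri(d,1)$ directly, taking $r = 0$ and $t = 0$, since both triangles share the corner at the origin. Hence Lemma~\ref{lem:binary_search} applies and yields $k = \kmax$ within $\BigO{\log k}$ iterations, each running one triangle primitive for a scale at most $2k$. As a scale $\leq 2k$ makes each such call cost $\BigO{\log \min\Set{2k, n}} = \BigO{\log k}$ rounds, the two logarithmic factors multiply to the desired $\BigO{\log^2 k}$ bound, provided the precondition of Lemma~\ref{lem:binary_search}, a binary counter large enough to store $\kmax$, is met.

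Supplying this counter is the main obstacle, and the naive route of first computing the longest segment length $m$ (via Lemma~\ref{lem:longest_segments} or Corollary~\ref{cor:longest_lines}) must be avoided: $m$ can be $\BigTheta{n}$ even when $k$ is constant (\eg{}, a long spike with a small triangle attached), so $\BigO{\log m}$ may dominate $\BigO{\log^2 k}$. Instead, following the discussion preceding the corollary, I would establish binary counters on \emph{all} maximal segments of $A$ (on all three axes) in $\BigO{1}$ rounds, initialize each to the value $1$, and run a single globally synchronized binary search on all of them in lockstep. Because the search is deterministic and every iteration's accept/reject decision is a global signal (whether the triangle primitive reports any valid placement, broadcast on a global circuit), all active counters always hold the same scale value; when they broadcast the bits of the current scale to drive the triangle primitive, their beeps coincide, so the broadcast is consistent. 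A counter \emph{deactivates} exactly when the next scale would exceed its storage. The key point is that at least one counter never deactivates: a valid placement of $\Tri(d,k)$ contains a $d$-segment of length at least $k$, i.e.\ a chain of at least $k + 1$ amoebots; since every tested scale is at most $2k$ and thus needs at most $\Floor{\log(2k)} + 1 \leq k + 1$ bits for $k \geq 1$, this counter never overflows and survives the whole search.

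For the runtime, the setup costs $\BigO{1}$ rounds, the doubling and bisection phases of Lemma~\ref{lem:binary_search} together use $\BigO{\log k}$ iterations, and each iteration performs one triangle primitive call for each of the constantly many rotations $\Tri(d,1)^{(r)}$ (ORing the results, so that the maximum over all rotations is detected) at cost $\BigO{\log k}$, giving $\BigO{\log^2 k}$ in total. The degenerate case $k = 0$, where no face fits anywhere, is recognized already at scale $1$ and handled in $\BigO{1}$ rounds. At termination the surviving counter(s) store $k = \kmax$, and one final run of the triangle primitive at scale $k$ marks $\VP(k \cdot \Tri(d,1)^{(r)}, A)$ for every $r$, solving the shape containment problem for $T$ within the claimed bound.
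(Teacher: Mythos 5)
Your proof is correct and takes essentially the same approach as the paper: establish that the triangle is self-contained, drive the triangle primitive (Lemma~\ref{lem:triangle_primitive}) with the binary search of Lemma~\ref{lem:binary_search}, and supply the required counter by running counters on \emph{all} maximal segments simultaneously with overflow-deactivation, at least one of which survives because any valid placement of \(\Tri(d,k)\) forces a segment of length at least \(k\) (the paper phrases this as \(k \leq \kmax(\Line(d,1), A)\) since \(T\) contains an edge). Your additional details (the explicit \(\Floor{\log(2k)}+1 \leq k+1\) bit bound, lockstep consistency of the counters, rotation handling, and the final placement run at scale \(\kmax\)) merely make explicit what the paper delegates to its cited lemmas.
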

\begin{proof}
	Because triangles are convex and therefore self-contained, we can apply a binary search for the maximum scale factor \(k = \kmax(T, A)\), which requires \(\BigO{\log k}\) iterations and only checks scales \(\ell \leq 2 \cdot k\) by Lemma~\ref{lem:binary_search}.
	To provide a binary counter of sufficient size, the amoebots can establish binary counters on \emph{all} maximal segments of \(A\) and use them all simultaneously, deactivating the counters whose memory is exceeded during some operation.
	At least one of these will have sufficient size to store \(k\) because \(T\) contains an edge, so \(k\) is bounded by \(\kmax(\Line(d, 1), A)\).
	By Lemma~\ref{lem:triangle_primitive}, the valid placement search for a triangle of size \(\ell\) only requires \(\BigO{\log \min\Set{\ell, n}}\) rounds, which already proves the runtime.
	The maximum scale is still stored on the binary counters after the scale factor search.
\end{proof}


\subsection{Stretched Shapes}
\label{subsec:stretched_shapes}

With the ability to compute valid placements of some basic shapes, we now consider operations on shapes that allow us to quickly determine the valid placements of a transformed shape.
The first, simple operation is the \emph{union} of shapes.
Given the valid placements \(C_1 = \VP(S_1, A)\) and \(C_2 = \VP(S_2, A)\) of two shapes \(S_1\) and \(S_2\), the amoebots in \(A\) can find the valid placements of \(S' = S_1 \cup S_2\) in a single round:
Due to the relation \(\VP(S', A) = C_1 \cap C_2\), each amoebot locally decides whether it is a valid placement of both shapes.

Next, we consider the \emph{Minkowski sum} of a shape with a line.
\begin{definition}
	\label{def:minkowski_sum}
	Let \(S_1, S_2\) be two shapes, then their \emph{Minkowski sum} is defined as
	\[
		S_1 \oplus S_2 := \{p_1 + p_2 \mid p_1 \in S_1, p_2 \in S_2\}.
	\]
\end{definition}
The resulting subset of \(\Reals^2\) is a valid shape and if both shapes contain the origin, then their sum also contains the origin.
Observe that for any shape \(S\) and any line \(\Line(d, \ell)\), we have
\[
	V(S \oplus \Line(d, \ell)) = \bigcup\limits_{i = 0}^\ell V(S + i \cdot \UVec{d}).
\]
Let \(S' = S \oplus \Line(d, \ell)\), then \(S'\) is a "stretched" version of \(S\).
Consider the valid placements \(C = \VP(S, A)\) of \(S\) in \(A\).
Now, if \(p \in A \setminus C\), then neither \(p\) nor the \(\ell\) positions in the opposite direction of \(d\) relative to \(p\) are valid placements of \(S'\) because placing \(S'\) at any of these positions would require a copy of \(S\) placed on \(p\).
Using the \PASC{} algorithm on the segment that starts at \(p\) and extends in the opposite direction of \(d\), we can therefore eliminate placement candidates of \(S'\).

\begin{lemma}
	\label{lem:stretched_placements}
	Let \(S\) be an arbitrary shape, \(L = \Line(d, \ell)\) a line and \(A\) an amoebot structure storing a scale \(k\) in some binary counter.
	Given \(d, \ell\) and the set \(C = \VP(k \cdot S, A)\), the amoebots can compute \(\VP(k \cdot (S \oplus L), A)\) within \(\BigO{\log \min\Set{k \cdot \ell, n}}\) rounds.
\end{lemma}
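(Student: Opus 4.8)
The plan is to reduce the Minkowski-sum placement condition to a one-dimensional run-length condition along the segments parallel to \(d\), and then evaluate that condition with a single cutoff-\PASC{} pass. I would first rewrite the target shape: since scaling commutes with the Minkowski sum, \(k \cdot (S \oplus L) = (k \cdot S) \oplus (k \cdot L)\) and \(k \cdot L = \Line(d, k\ell)\), so by the displayed identity for \(V(S \oplus \Line(d,\ell))\) preceding the lemma we obtain
\[
    V\bigl((k \cdot S) \oplus \Line(d, k\ell)\bigr) = \bigcup_{j=0}^{k\ell} V\bigl(k \cdot S + j \cdot \UVec{d}\bigr).
\]
Hence an amoebot \(q\) is a valid placement of \(k \cdot (S \oplus L)\) if and only if \(q + j\UVec{d} \in C\) for every \(j \in \Set{0, \ldots, k\ell}\), i.e.\ \(q\) together with the next \(k\ell\) grid nodes in direction \(d\) all lie in \(C = \VP(k \cdot S, A)\). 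In particular, since \(L\) contains the origin we have \(k \cdot S \subseteq k \cdot (S \oplus L)\), so \(\VP(k \cdot (S \oplus L), A) \subseteq C\) and every amoebot not in \(C\) is automatically a non-candidate.

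Next I would translate this into run lengths. Restricting to a maximal segment parallel to \(d\), the \(C\)-amoebots split into maximal runs separated by amoebots in \(A \setminus C\) and by the segment boundary, all of which act as \emph{barriers}. For a \(C\)-amoebot \(p\) at distance \(g(p)\), measured in direction \(d\), from the forward end of its run, the condition above holds exactly when \(g(p) \geq k\ell\). This is the same equivalence suggested by the barrier argument stated before the lemma, where \(p\) and the \(k\ell\) candidates behind each barrier are eliminated; measuring \(g(p)\) to the \emph{nearest} forward barrier handles the case of several barriers on one segment simultaneously.

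The remaining work is algorithmic. First I would compute \(k\ell\) in a binary counter: as \(\ell\) is a constant with known binary representation, multiplication by \(\ell\) reduces to a constant number of shifts and additions, so this takes \(\BigO{1}\) rounds by Lemma~\ref{lem:binary_operations}, extending the counter by \(\BigO{1}\) bits if needed via successor simulation. Then, in \(\BigO{1}\) rounds, each \(C\)-amoebot forms a chain with its \(C\)-neighbors along the \(d\)-axis, oriented so that the forward end of each run (the unique run amoebot whose \(d\)-neighbor is not in \(C\)) is the \PASC{} start point; then \(g(p)\) equals the \PASC{} distance \(i\) of \(p\). Since distinct runs share no amoebot, and hence no edge, they may be processed simultaneously, so Lemma~\ref{lem:pasc_cutoff} with stored value \(k\ell\) lets every run amoebot compare \(g(p)\) to \(k\ell\) in \(\BigO{\log \min\Set{k\ell, n}}\) rounds. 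An amoebot retains its candidacy iff \(g(p) \geq k\ell\); combined with \(\VP \subseteq C\), this yields exactly \(\VP(k \cdot (S \oplus L), A)\) within the claimed time.

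The main obstacle I expect is the runtime rather than the correctness. A naive \PASC{} run on a near-maximal-length run would cost \(\BigO{\log n}\) even when \(k\ell\) is tiny, so it is essential to invoke the cutoff version (Lemma~\ref{lem:pasc_cutoff}), which halts once the bits of \(k\ell\) are exhausted and therefore achieves \(\BigO{\log \min\Set{k\ell, n}}\); this also gracefully handles the case where \(k\ell\) exceeds every run length, in which case no amoebot survives, as it should. The only additional care is the bookkeeping at run ends and segment boundaries, ensuring a boundary barrier is treated identically to an amoebot in \(A \setminus C\).
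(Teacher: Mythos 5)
Your proposal is correct and follows essentially the same approach as the paper: both reduce the problem to comparing distances along \(d\)-parallel segments against the threshold \(k \cdot \ell\) using the cutoff \PASC{} primitive (Lemma~\ref{lem:pasc_cutoff}) on edge-disjoint chains delimited by membership in \(C\). The only difference is organizational — the paper runs a separate line-placement check (Lemma~\ref{lem:line_detection}) and then eliminates the \(k \cdot \ell\) positions behind each amoebot of \(A \setminus C\), whereas you fold both elimination causes into a single run-length condition by treating segment boundaries and non-\(C\) amoebots as identical barriers; the eliminated sets coincide.
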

\begin{proof}
	Let \(S\) and \(L = \Line(d, \ell)\) be arbitrary and consider an amoebot structure \(A\) storing \(k\) in a binary counter.
	Suppose every amoebot in \(A\) knows whether it is part of \(C = \VP(k \cdot S, A)\) and let \(S' = S \oplus L\).
	First, observe that the node set covered by \(k \cdot S'\) is the union of \(k \cdot \ell + 1\) copies of \(V(k \cdot S)\):
	\begin{equation}
		V(k \cdot S')
		= V((k \cdot S) \oplus \Line(d, k \cdot \ell))
		= \bigcup\limits_{i = 0}^{k \cdot \ell} V(k \cdot S + i \cdot \UVec{d})
		\label{eq:stretched_node_set}
	\end{equation}
	Additionally, since \(S\) contains the origin, we have \(k \cdot L \subseteq k \cdot S'\), implying \(\VP(k \cdot S', A) \subseteq \VP(k \cdot L, A)\).
	
	Let \(C' = A\) be the initial set of placement candidates for \(k \cdot S'\).
	The amoebots first run the line placement search for \(k \cdot L\), identifying \(\VP(k \cdot L, A)\) within \(\BigO{\log \min \Set{k \cdot \ell, n}}\) rounds by Lemma~\ref{lem:line_detection}.
	Since \(\ell\) is known by the amoebots, they can compute \(k \cdot \ell\) in constant time.
	The invalid placements of \(k \cdot L\) remove themselves from \(C'\) since they cannot be valid placements of \(k \cdot S'\).
	
	Now, consider some \(q \in A \setminus C\), then the node set \(M(q) = V(k \cdot S + q)\) is not fully covered by \(A\).
	However, for every \(0 \leq i \leq k \cdot \ell\), we have \(M(q) \subseteq V(k \cdot S' + q - i \cdot \UVec{d})\) due to \eqref{eq:stretched_node_set}.
	Thus, neither \(q\) nor any position \(q - i \cdot \UVec{d}\) for \(1 \leq i \leq k \cdot \ell\) is a valid placement of \(k \cdot S'\).
	Let \(Q(q) = \SetBar{q - i \cdot \UVec{d}}{0 \leq i \leq k \cdot \ell}\) be the set of these invalid placements and observe that \(Q(q)\) is a (not necessarily occupied) \(W\)-segment of length \(k \cdot \ell\) with one end point at \(q\), where \(W\) is the grid axis parallel to \(d\).
	Further, let \(N(q) \subseteq A\) be the maximal \(W\)-segment of \(A\) that contains \(q\).
	If \(q\) is the only amoebot in \(A \setminus C\) on \(N(q)\), then by Lemma~\ref{lem:pasc_cutoff}, the amoebots \(Q(q) \cap N(q)\) can identify themselves within \(\BigO{\log \min \Set{k \cdot \ell, n}}\) rounds, using \(q\) as the start of a chain on \(N(q)\) that extends in the opposite direction of \(d\) and transmitting \(k \cdot \ell\) on a global circuit.
	Those amoebots with distance at most \(k \cdot \ell\) to \(q\) on this chain are the ones in \(N(q) \cap Q(q)\) and they remove themselves from \(C'\).
	
	If there are multiple invalid placements of \(k \cdot S\) on \(N(q)\), the amoebots establish one such chain for each, extending in the opposite direction of \(d\) until the next invalid placement or the boundary of the structure.
	Now, if \(Q(q) \cap N(q)\) contains some amoebot \(q' = q - j \cdot \UVec{d} \in A \setminus C\), the remaining amoebots \(q - i \cdot \UVec{d}\) for \(j < i \leq k \cdot \ell\) are contained in \(Q(q')\) and will be identified (on \(N(q)\)).
	Because the chains on \(N(q)\) are disjoint by construction and the maximal \(W\)-segments of \(A\) are also disjoint, all amoebots in any set \(Q(q)\) with \(q \in A \setminus C\) can be determined within \(\BigO{\log \min \Set{k \cdot \ell, n}}\) rounds by Lemma~\ref{lem:pasc_cutoff}.
	
	All amoebots that are removed from \(C'\) by these two steps are invalid placements of \(k \cdot S'\).
	To show that \emph{all} invalid placements are removed, consider some \(q \in A \setminus \VP(k \cdot S', A)\).
	If \(q \notin \VP(k \cdot L, A)\), \(q\) will be removed by the line check.
	Otherwise, there must be a position \(0 \leq i \leq k \cdot \ell\) such that \(q + i \cdot \UVec{d} \notin C\) because otherwise, \(q\) would be a valid placement.
	Let \(q' = q + i \cdot \UVec{d}\) be such an amoebot with minimal \(i\), then \(q \in Q(q')\) and \(q \in N(q')\) (because \(q \in \VP(k \cdot L, A)\)).
	Thus, \(q'\) causes \(q\) to remove itself from \(C'\) in the second step.
	Overall, we obtain \(C' = \VP(k \cdot S', A)\) within \(\BigO{\log \min \Set{k \cdot \ell, n}}\) rounds.
\end{proof}

Observe that this already yields efficient valid placement search procedures for shapes like parallelograms \(\Line(d_1, \ell_1) \oplus \Line(d_2, \ell_2)\) and trapezoids \(\Tri(d_1, \ell_1) \oplus \Line(d_2, \ell_2)\), and unions thereof.


\subsection{Shifted Shapes}
\label{subsec:segment_shifting}

For Minkowski sums of shapes with lines, we turn individual invalid placements into segments of invalid placements.
Now, we introduce a procedure that moves information with this structure along the segments' axis efficiently, as long as the segments have a sufficient length.

\begin{definition}
	\label{def:segmented_set}
	Let \(A\) be an amoebot structure, \(C \subseteq A\) a subset of amoebots, \(W \in \Set{X, Y, Z}\) a grid axis and \(k \in \Nats\).
	We call \(C\) a \emph{\(k\)-segmented set (on \(W\))} if on every maximal \(W\)-segment \(M\) of \(A\), where the maximal segments of \(C \cap M\) are \(C_1, \ldots, C_m\), the interior segments \(C_2, \ldots, C_{m-1}\) have length \(\geq k\).
\end{definition}
If a subset \(C\) of amoebots is \(k\)-segmented on the axis \(W\), we can move the set along this axis very efficiently by moving only the start and end points of the segments by \(k\) positions with the \PASC{} algorithm.
The size of the segments ensures that there is sufficient space between the \PASC{} start points to avoid interference.

\begin{lemma}
	\label{lem:segment_shifting}
	Let \(C \subseteq A\) be a \(k\)-segmented set on the axis \(W\) parallel to the direction \(d \in \Directions\) and let \(k\) be stored on a binary counter in \(A\).
	Given \(C\) and \(d\), the amoebots can compute the shifted set \(M \cap ((C \cap M) + k \cdot \UVec{d})\) on every maximal \(W\)-segment \(M\) of \(A\) within \(\BigO{\log \min \Set{k, n}}\) rounds.
	Furthermore, the resulting set of amoebots is \(k\)-segmented on \(W\).
\end{lemma}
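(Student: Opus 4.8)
The plan is to realise the shift by transporting only the two endpoints of each maximal run of $C$ inside a segment, instead of every amoebot. I would process all maximal $W$-segments $M$ of $A$ simultaneously; since distinct maximal $W$-segments lie on distinct parallel lines, they share no edges, so the only possible interference between \PASC{} chains occurs within a single segment. On each $M$, the amoebots first identify, in $\BigO{1}$ rounds and purely locally, the maximal runs $C_1, \dots, C_m$ of $C \cap M$ (ordered in direction $d$) together with their start points $s_1, \dots, s_m$ and end points $e_1, \dots, e_m$: a run start is an amoebot of $C$ whose $d$-predecessor is not in $C \cap M$, and symmetrically for run ends.

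The shifted set on $M$ is the disjoint union of the intervals $[s_i, e_i] + k\UVec{d}$ restricted to $M$, so it suffices to locate, for every $i$, the amoebots $s_i + k\UVec{d}$ and $e_i + k\UVec{d}$. To move a single anchor $p$ by $k$, I would build a chain starting at $p$ and running in direction $d$ and invoke the cutoff procedure of Lemma~\ref{lem:pasc_cutoff} with the stored value $k$; the amoebot at chain distance exactly $k$ is then the target $p + k\UVec{d}$. If the chain reaches the $d$-boundary of $M$ before distance $k$, the lemma reports this, and we conclude that the target leaves $M$, so the corresponding shifted run is truncated at the boundary (for an end point) or dropped entirely (for a start point).

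The crux, and the point where $k$-segmentation is indispensable, is keeping these chains edge-disjoint so that Lemma~\ref{lem:pasc_cutoff} can run them in parallel. I would let each chain extend in direction $d$ only up to the next anchor of the same type (or to the $d$-boundary of $M$ for the last one), so that chains meet at most in shared endpoints and hence share no edge; for a chain to reach its target, it must moreover have length $\geq k$. Both requirements are met by the interior runs: consecutive starts $s_i, s_{i+1}$ are at distance $\geq k$ as soon as $C_i$ is interior, and consecutive ends $e_i, e_{i+1}$ are at distance $\geq k$ as soon as $C_{i+1}$ is interior. Since the interior runs are exactly $C_2, \dots, C_{m-1}$, the starts $s_2, \dots, s_m$ are pairwise at distance $\geq k$, as are the ends $e_1, \dots, e_{m-1}$; only $s_1$ relative to $s_2$ and $e_m$ relative to $e_{m-1}$ may be closer. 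I would therefore split the movement into four sequential phases: (1) the starts $s_2, \dots, s_m$, (2) the lone start $s_1$, (3) the ends $e_1, \dots, e_{m-1}$, and (4) the lone end $e_m$. In phases (1) and (3) the active anchors on each segment are pairwise at distance $\geq k$, and in phases (2) and (4) there is a single anchor per segment, so every phase uses edge-disjoint chains and runs in $\BigO{\log \min \Set{k, n}}$ rounds by Lemma~\ref{lem:pasc_cutoff}. This scheme also covers the degenerate cases $m \leq 2$, where there are no interior runs and each anchor is moved alone.

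Finally, with all surviving shifted starts and ends marked (and boundary truncations flagged), I would refill the shifted runs by a constant-round circuit operation along $M$: the shifted runs remain pairwise disjoint and keep the input order, so the $i$-th shifted start pairs with the $i$-th shifted end, and a circuit along $M$ that is interrupted at the shifted ends lets each amoebot lying between a shifted start and its end beep itself into the output. It remains to verify that the result is $k$-segmented. Its runs are precisely the surviving shifts of the input runs, and the gap between two consecutive surviving runs equals the corresponding input gap, which is still $\geq 1$. Every interior run of the output is the shift of an interior run of the input --- truncation affects only the endmost surviving run, which is not interior --- and hence still has length $\geq k$. This establishes the $k$-segmentation of the output and completes the argument.
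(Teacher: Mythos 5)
Your proposal is correct and follows essentially the same route as the paper's proof: transport only the run endpoints via the \PASC{}-based cutoff procedure of Lemma~\ref{lem:pasc_cutoff}, use the \(k\)-segmentation of the interior runs to keep the parallel chains edge-disjoint, handle the two (possibly short) outer runs in separate rounds, reconstruct the shifted runs with a constant-round circuit sweep, and observe that truncation can only affect the last surviving run so that interior runs keep length \(\geq k\). Your four-phase split (interior starts, lone first start, interior ends, lone last end) and the drop/truncate rule for chains hitting the boundary are just a slightly more systematic organization of the paper's case handling (which instead lets the boundary amoebot of \(M\) deduce its role by comparing its distances to \(p_1, q_1, p_2, q_2\)), with no substantive difference in the argument.
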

\begin{proof}
	It suffices to show the lemma for arbitrary, individual maximal segments of \(A\) since all required properties are local to these segments.
	We synchronize the procedure across all segments using a global circuit.
	Without loss of generality, consider a maximal \(X\)-segment \(M\) of an amoebot structure \(A\) and let \(d = \W\) be the shifting direction.
	Let \(C \subseteq M\) be a \(k\)-segmented subset with segments \(C_1, \ldots, C_m\), ordered from West to East.
	To start with, we assume that all segments of \(C\) have length at least \(k\) and that \(M\) is large enough to fit all of \(C + k \cdot \UVec{d}\).
	
	Let \(p_1, \ldots, p_m\) and \(q_1, \ldots, q_m\) be the segments' westernmost and easternmost amoebots, respectively.
	These amoebots can identify themselves by checking which of their neighbors are contained in \(C\).
	We will call \(p_1, \ldots, p_m\) the start points and \(q_1, \ldots, q_m\) the end points of the segments.
	The amoebots now run the \PASC{} algorithm on the regions between the segment start points in direction \(d\) while transmitting \(k\) on the global circuit.
	This allows the amoebots \(p_i' = p_i + k \cdot \UVec{d}\) to identify themselves.
	The start points do not block each other because the distance between each pair of start points is greater than \(k\) due to the length of the segments \(C_i\).
	We repeat this for the end points to identify the amoebots \(q_i' = q_i + k \cdot \UVec{d}\).
	Finally, we establish circuits along \(M\) that are disconnected only at the new start and end points and let each \(q_i'\) beep in direction \(d\) to identify all amoebots in \(C + k \cdot \UVec{d}\).
	This procedure takes \(\BigO{\log \min \Set{d, n}}\) rounds by Lemma~\ref{lem:pasc_cutoff}.
	
	Now, we modify the algorithm to deal with the cases where \(M\) is not long enough and \(C_1\) and \(C_m\) have length less than \(k\).
	To handle the latter, we simply process \(C_1\) and \(C_m\) individually and run the procedure for \(C_2, \ldots, C_{m-1}\) as before.
	For this, \(p_1\) and \(q_m\) can identify themselves by using circuits to check whether there is another segment to the West or the East, respectively.
	
	Next, let \(u\) be the start point of \(M\), \ie{}, the westernmost amoebot of the segment.
	If the distance between \(u\) and \(p_1\) is at least \(k\), \(M\) is large enough to not interfere with the shift.
	Otherwise, \(u\) can decide whether it should become the start or end point of a shifted segment by comparing its distances to \(p_1\), \(q_1\), \(p_2\) and \(q_2\).
	In every possible case, \(u\) can uniquely decide which role it has to assume by comparing these distances to \(k\).
	For example, if the distance to \(p_1\) is less than \(k\) but the distance to \(q_1\) is greater than \(k\), then \(u\) becomes \(p_1'\).
	Because we only run the \PASC{} algorithm a constant number of times and use simple \(\BigO{1}\) circuit operations, the procedure takes \(\BigO{\log \min \Set{d, n}}\) rounds.
	Because every shifted segment maintains its length unless it runs into the start point of \(M\), in which case it becomes the first segment of the resulting set, we obtain a \(k\)-segmented set again.
\end{proof}

To leverage the efficiency of this procedure, we aim to construct shapes whose valid or invalid placements always form at least \(k\)-segmented sets at any scale \(k\).
The following property identifies such shapes.

\begin{definition}
	\label{def:wide_shapes}
	Let \(S\) be a shape and \(W \in \Set{X, Y, Z}\) a grid axis.
	The \emph{minimal axis width of \(S\) on \(W\)} (or \emph{\(W\)-width}) is the infimum of the lengths of all maximal components of the non-empty intersections of \(S\) with lines parallel to \(W\).
	We call \(S\) \emph{(\(W\)-)wide} or \emph{wide on \(W\)} if its \(W\)-width is at least \(1\).
\end{definition}

For example, the minimal axis width of \(\Tri(d, \ell)\) is \(0\) for all axes due to its corners and the \(W\)-width of \(\Line(d, \ell)\) is \(\ell\) when \(d\) is parallel to \(W\) and \(0\) otherwise.
If the \(W\)-width of \(S\) is \(w\), then the \(W\)-width of \(k \cdot S\) is \(k \cdot w\) for all scales \(k\).
Further, if \(S\) is \(W\)-wide, every node contained in \(S\) must have an incident edge parallel to \(W\) that is also contained in \(S\) and similarly, every face in \(S\) must have an adjacent face on \(W\) and every edge in \(S\) that is not parallel to \(W\) must have an incident face.

\begin{lemma}
	\label{lem:wide_shape_placements}
	Let \(S\) be a shape, \(W \in \Set{X, Y, Z}\) an axis and \(w \in \Nats\).
	If the minimal \(W\)-width of \(S\) is at least \(w\), then for all scales \(k\) and amoebot structures \(A\), \(A \setminus \VP(k \cdot S, A)\) is \(k \cdot w\)-segmented on \(W\).
	Conversely, if the \(W\)-width of \(S\) is \(0\), then for all scales \(k \geq 4\) there are amoebot structures \(A\) such that \(A \setminus \VP(k \cdot S, A)\) is at most \(1\)-segmented on \(W\) unless \(S\) is a single node.
\end{lemma}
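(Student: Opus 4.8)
The plan is to handle the two halves separately: the forward implication (wide $\Rightarrow$ segmented) is the heart and admits a clean propagation argument, while the converse is a construction that reduces to one geometric fact.

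For the forward direction, let $d$ be the cardinal direction parallel to $W$ and write $I = A \setminus \VP(k\cdot S,A)$. I would fix a maximal $W$-segment $M$ and an \emph{interior} maximal invalid segment on $M$, flanked by valid placements $p^-$ and $p^+ = p^- + (\ell+2)\,\UVec{d}$, where $\ell$ is its length; the goal is $\ell \ge kw$. The key idea is that one missing node propagates invalidity backwards along $M$ for exactly as many steps as the relevant $W$-component of $V(k\cdot S)$ is long. Concretely, the last invalid amoebot $p^+ - \UVec{d}$ has a witness $v \in V(k\cdot S)$ with $(p^+-\UVec{d})+v \notin A$. Since $p^+$ is valid, $v - \UVec{d} \notin V(k\cdot S)$ (else $p^+ + (v-\UVec{d}) = (p^+-\UVec{d})+v \in A$), so $v$ begins its $W$-component; as scaling multiplies widths by $k$ and the slices of $k\cdot S$ by a grid line have integer length with grid-node endpoints, this component contains $v, v+\UVec{d}, \dots, v+kw\,\UVec{d}$. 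Hence for every $0 \le i \le kw$ the placement $q_i := p^+-(i+1)\UVec{d}$ is invalid, witnessed by the \emph{same} missing node through the shape node $v+i\,\UVec{d}$. Because $q_{\ell+1}=p^-$ is valid, we must have $\ell+1 > kw$, i.e.\ $\ell \ge kw$; segments without an interior component are vacuously $kw$-segmented, so $I$ is $kw$-segmented.

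For the converse I would, given width $0$ and $k \ge 4$, build $A$ as a large full rhombus $R$ with a single deleted node $x$. For interior placements (those with $p + V(k\cdot S)\subseteq R$), $p$ is valid iff $x \notin p + V(k\cdot S)$, so the interior invalid placements are exactly $(x - V(k\cdot S))\cap R$, and on a maximal segment $M$ they form the point reflection through $x$ of the $W$-components of $V(k\cdot S)$ on the single grid line $x - M$. Choosing $x$ so that $x-M$ carries a \emph{short} component $P$ (length $\le 1$) turns $P$ into an invalid segment of length $\le 1$; it is flanked by valid placements because $P$ is maximal, and it is genuinely interior because near the two ends of $M$ the shape runs off $R$ and cannot be placed, giving the outer invalid segments $C_1$ and $C_m$. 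The bound $k\ge 4$ is what leaves room for the short segment and the two flanking valid blocks on one line. The single-node caveat is immediate: if $S$ is one node then $V(k\cdot S)$ is one node, every placement inside $A$ is valid, and no invalid segment ever arises.

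The converse therefore reduces to the geometric claim that a width-$0$ shape (not a single node) has, at every scale, a maximal $W$-component of $V(k\cdot S)$ of length $\le 1$. I would derive this from a \emph{tapering vertex}: width $0$ forces a slice of $S$ of length $<1$, hence a grid vertex $\nu$ where, on one perpendicular side, $S$ lies in a single $60^\circ$ grid wedge (a lone incident triangle), so the slice through $\nu$ meets it only at $\nu$ and the next slice into the body meets it in a unit edge. \textbf{The main obstacle} is that this unit slice could a priori be \emph{extended} by faces only vertex-adjacent to the tapering triangle, or $\nu$'s line could be filled in by standalone $W$-edges. I would rule this out by passing to an extremal tapering vertex (topmost, then easternmost) and using finiteness of $S$: any face extending the short slice introduces a strictly more extreme tapering vertex, so the process terminates at a vertex whose short component is genuinely isolated (length $0$ there, or length $1$ one step into the body). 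Making this termination airtight — cleanly separating the roles of faces and of dangling edges in the triangular grid — is where the real work lies.
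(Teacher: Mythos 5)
Your forward direction is correct and is in substance the paper's own argument: a missing witness node, pulled back through a \(W\)-run of \(k \cdot S\) of length at least \(k \cdot w\), invalidates a whole run of placements. The paper packages this as the set \(Q\) of placements that map some node of that long run onto the witness (\(Q\) is an invalid \(W\)-segment of length \(\geq k \cdot w\) containing the invalid placement \(p\); either \(Q \subseteq M\), or \(Q\) reaches an end of \(M\) and \(p\) lies in an outer segment), whereas you run the same pull-back as a contradiction launched from the valid flanking placement \(p^-\) of an interior invalid segment. Both versions are sound, and your observation that slices of \(k \cdot S\) by grid lines have grid-node endpoints closes the one subtle point.

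The converse, however, has a genuine gap, in two places. First, the ``rhombus minus one node'' structure does not force your short invalid run to be an \emph{interior} segment. You claim the outer invalid segments \(C_1, C_m\) exist ``because near the two ends of \(M\) the shape runs off \(R\)'', but that fails for shapes lying weakly on one side of their origin along \(W\): take \(S\) equivalent to \(\Tri(\E, 1)\) but with origin at its eastern corner, so every \(v \in V(k \cdot S)\) has non-positive \(X\)-coordinate. Then every placement at the eastern end of \(M\) keeps \(k \cdot S\) inside \(R\), and the invalid set on \(M\) consists only of the western outer segment and the reflected singleton coming from the triangle's apex. With \(m = 2\) maximal invalid segments there are no interior segments at all, so \(A \setminus \VP(k \cdot S, A)\) is vacuously \(c\)-segmented for every \(c\) and the construction proves nothing. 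This is exactly why the paper builds \emph{three} short runs on the same line (three copies of \(V(k \cdot S)\) joined by lines \(V(\Line(d, \ell))\)), so that the middle one cannot be an outer segment; your construction would need an analogous repetition, e.g., three deleted nodes spaced along \(W\). Second, the geometric core --- that a width-\(0\) shape other than a single node has, at scale \(k \geq 4\), a maximal \(W\)-segment of \(V(k \cdot S)\) with at most two nodes, and moreover one \emph{not} on the origin's \(W\)-line (you need this condition too, since otherwise the reflected run lands on the broken line through \(x\)) --- is precisely what you defer and only sketch via ``tapering vertices'', conceding yourself that making it airtight is ``where the real work lies''. The paper settles it by noting that width \(0\) forces one of exactly three local defects: a node with no incident \(W\)-edge, an edge not parallel to \(W\) with no incident face (whose middle nodes are isolated once \(k \geq 4\)), or a face with no \(W\)-adjacent face (the two nodes beside its scaled apex form a maximal two-node segment once \(k \geq 3\)). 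Without both of these pieces, the converse is not established.
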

%
\begin{proof}
	Let \(S\) be a shape with minimal \(W\)-width \(w\), \(k \in \Nats\) and \(A\) some amoebot structure.
	Consider a maximal \(W\)-segment \(M\) of \(A\) and let \(C = M \setminus \VP(k \cdot S, A)\).
	If \(C = \emptyset\), we are finished.
	Otherwise, let \(p \in C\) be arbitrary.
	Then there exists a node \(q \in V(k \cdot S + p) \setminus A\), \ie{}, node \(q\) is not occupied by an amoebot but it is occupied by \(k \cdot S\) placed at \(p\).
	Because of the \(W\)-width of \(S\), every component of every intersection of \(k \cdot S\) with a grid line parallel to \(W\) contains at least \(k \cdot w\) edges.
	Therefore, \(q\) lies on a \(W\)-segment of length at least \(k \cdot w\) that is occupied by \(k \cdot S + p\).
	Let \(Q\) be the set of placements of \(k \cdot S\) for which one node on this segment occupies \(q\).
	We then have \(p \in Q \cap M\) and \(Q \cap \VP(k \cdot S, A) = \emptyset\).
	Since both \(Q\) and \(M\) are \(W\)-segments, their intersection is also a \(W\)-segment.
	In the case \(Q \subseteq M\), \(p\) lies on a segment of \(C\) that contains \(Q\) and therefore has length at least \(k \cdot w\).
	In any other case, \(Q\) contains an endpoint of \(M\), which means that \(p\) lies on the first or the last segment of \(C\).
	Since this holds for every maximal \(W\)-segment of \(A\), \(A \setminus \VP(k \cdot S, A)\) is \(k \cdot w\)-segmented on \(W\).
	
	Now, let \(S\) be a non-trivial shape with a minimal \(W\)-width of \(0\) and let \(k \geq 4\) be arbitrary.
	We construct an amoebot structure \(A\) such that \(A \setminus \VP(k \cdot S, A)\) is at most \(1\)-segmented.
	First, we find a maximal \(W\)-segment \(L \subseteq V(k \cdot S)\) of \(V(k \cdot S)\) that contains at most two nodes and is not on the same \(W\)-line as the origin.
	If \(S\) has a node without incident edges on \(W\) that is not on the origin's line, we choose \(L\) as the scaled version of this node, as it will always be just a single node without neighbors on either side for scales \(k \geq 2\).
	Otherwise, if \(S\) has an edge that is not parallel to \(W\) and has no incident faces, we choose one of its middle nodes, which also has no neighbor on either side due to \(k \geq 4\).
	If this is also not the case, \(S\) must have a face \(f\) without an adjacent face on \(W\).
	Let \(u\) be the corner of \(f\) that is opposite of the face's edge on \(W\) and consider the two nodes adjacent to \(k \cdot u\) on the edges of \(k \cdot f\).
	Because \(f\) has no neighboring face on \(W\) and \(k \geq 3\), the segment spanning these two nodes is maximal in \(k \cdot S\).
	It also does not share the same \(W\)-line with the origin because it is offset from the scaled nodes of \(S\).
	
	We construct \(A\) by first placing a copy of \(V(k \cdot S)\).
	The origin is the only valid placement of \(k \cdot S\) in this structure.
	Next, we place another copy with the origin at position \(\ell \cdot \UVec{d}\), where \(\ell = \Abs{L} + 1\) and \(d\) is a direction parallel to \(W\), and add amoebots on the nodes \(V(\Line(d, \ell))\) to ensure connectivity.
	We thereby get another valid placement of \(k \cdot S\) at position \(\ell \cdot \UVec{d}\).
	Consider the node set \(L\), which was placed with the first copy of the shape.
	The node \(v\) one position in direction \(d\) of \(L\) remains unoccupied because \(L\) is not on the same \(W\)-line as the origin, is bounded by unoccupied nodes in \(V(k \cdot S)\) and its second copy is also placed such that one bounding node lies on \(v\).
	Thus, the amoebots \(i \cdot \UVec{d}\) for \(1 \leq i \leq \ell - 1\) are not valid placements of \(k \cdot S\) since they would require a copy of \(L\) that contains \(v\) to be occupied.
	We therefore have a maximal segment of \(A \setminus \VP(k \cdot S, A)\) that has length \(\Abs{L} - 1 \leq 1\).
	By repeating this construction two more times with sufficient distance in direction \(d\), adding lines on \(W\) to maintain connectivity, we obtain three such segments of invalid placements, one of which cannot be an outer segment.
	Since all of these segments lie on the same maximal segment of the resulting amoebot structure \(A\) (the one containing the origin), the set \(A \setminus \VP(k \cdot S, A)\) is at most \(1\)-segmented on \(W\).
\end{proof}

Lemma~\ref{lem:wide_shape_placements} allows us to apply the segment shift procedure (Lemma~\ref{lem:segment_shifting}) and move the invalid placements of \(k \cdot S\) along the axis \(W\) efficiently, as long as \(S\) is wide on \(W\).
This will become useful in conjunction with the fact that the Minkowski sum operation with a line \(\Line(d, \ell)\) always produces a shape of width at least \(\ell\) on the axis parallel to \(d\).


\begin{lemma}
	\label{lem:shifted_shapes}
	Let \(S\) be a \(W\)-wide shape and \(A\) an amoebot structure that stores a scale \(k\) in a binary counter and knows \(\VP(k \cdot S, A)\).
    Given a direction \(d\) on axis \(W\) and \(\ell \in \Nats\), the amoebots can compute \(\VP(k \cdot ((S + \ell \cdot \UVec{d}) \cup \Line(d, \ell)), A)\) within \(\BigO{\ell \cdot \log \min \Set{k \cdot \ell, n}}\) rounds.
\end{lemma}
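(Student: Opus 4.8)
The plan is to decompose $S' := (S + \ell \cdot \UVec{d}) \cup \Line(d, \ell)$ using the union rule $\VP(S_1 \cup S_2, A) = \VP(S_1, A) \cap \VP(S_2, A)$. Since scaling distributes, $k \cdot S' = (k \cdot S + k\ell \cdot \UVec{d}) \cup \Line(d, k\ell)$, so a placement $p$ is valid for $k \cdot S'$ exactly when both parts are covered, i.e.\ $p \in \VP(k \cdot (S + \ell \cdot \UVec{d}), A)$ and $p \in \VP(\Line(d, k\ell), A)$. The second factor is a line and can be obtained directly: the amoebots compute $k\ell$ on the counter with $\BigO{\log \ell}$ shift-and-add steps (Lemma~\ref{lem:binary_operations}) and then run the line placement search of Lemma~\ref{lem:line_detection} in $\BigO{\log \min \Set{k\ell, n}}$ rounds. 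For the first factor, note that $\VP(k \cdot (S + \ell \cdot \UVec{d}), A) = \SetBar{p \in A}{p + k\ell \cdot \UVec{d} \in \VP(k \cdot S, A)}$, i.e.\ it is the known set $\VP(k \cdot S, A)$ translated by $-k\ell \cdot \UVec{d}$ and restricted to $A$. The final answer is the intersection of the two factors, which every amoebot evaluates locally in one round.

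To realize the translation I would shift the \emph{invalid} placements instead of the valid ones, because only the former are guaranteed to be segmented: as $S$ is $W$-wide, its minimal $W$-width is at least $1$, so by Lemma~\ref{lem:wide_shape_placements} the set $I_0 := A \setminus \VP(k \cdot S, A)$ is $k$-segmented on $W$. This is precisely the precondition of the segment shift (Lemma~\ref{lem:segment_shifting}), which moves a $k$-segmented set by $k$ positions and returns a $k$-segmented set. I would therefore apply it $\ell$ times, each time shifting the current set by $k$ in the direction opposite to $d$; by the preserved invariant the procedure stays applicable across all iterations. After $\ell$ rounds the invalid placements have been moved by $k\ell$ positions against $d$ within each maximal $W$-segment, and the complement $A \setminus I_\ell$ of the result serves as the shifted valid set. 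Each shift costs $\BigO{\log \min \Set{k, n}}$ rounds, so the $\ell$ shifts contribute $\BigO{\ell \cdot \log \min \Set{k, n}} \subseteq \BigO{\ell \cdot \log \min \Set{k\ell, n}}$, which together with the line search gives the claimed bound.

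For correctness I would show that intersecting $A \setminus I_\ell$ with the line placements $\VP(\Line(d, k\ell), A)$ yields exactly $\VP(k \cdot S', A)$. The delicate point, and the main obstacle, is the interaction of the per-segment shift with segment boundaries: a single shift computes $M \cap ((I \cap M) - k \cdot \UVec{d})$ and discards anything leaving the maximal segment $M$, so iterating $\ell$ shifts of $k$ need not compose into a clean $k\ell$-shift near the $\UVec{d}$-end of a segment. However, this only matters for candidates that also pass the line test, and for any such $p$ the whole segment $p, p + \UVec{d}, \ldots, p + k\ell \cdot \UVec{d}$ lies on one maximal $W$-segment inside $A$; hence every intermediate position $p + (k\ell - jk) \cdot \UVec{d}$ reached after $j$ iterations stays on that segment, no truncation occurs along this path, and the invalidity status of $p + k\ell \cdot \UVec{d}$ propagates to $p$ faithfully. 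Candidates for which $p + k\ell \cdot \UVec{d}$ leaves the segment are rejected by the line test regardless, so the boundary discrepancies cause no error. It remains to argue carefully that the $k$-segmentation invariant genuinely survives all $\ell$ iterations so that Lemma~\ref{lem:segment_shifting} applies each time, and that the propagation argument above correctly handles the case of several invalid placements lying on the same segment.
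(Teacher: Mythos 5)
Your proposal is correct and takes essentially the same route as the paper's proof: detect $\VP(\Line(d, k\ell), A)$ via Lemma~\ref{lem:line_detection}, use Lemma~\ref{lem:wide_shape_placements} to conclude that the invalid placements of $k \cdot S$ are $k$-segmented, shift them $\ell$ times by $k$ against $d$ via Lemma~\ref{lem:segment_shifting}, and intersect the complement with the line placements. The two points you leave open are already settled by the cited lemmas: Lemma~\ref{lem:segment_shifting} explicitly guarantees that its output is again $k$-segmented, and your line-test argument (which is exactly the paper's argument that $q' = q + k\ell \cdot \UVec{d}$ lies on the same maximal $W$-segment as any $q$ surviving the line check) simultaneously handles segment truncation and multiple invalid placements per segment, since the shift acts setwise on each maximal segment.
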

%
\begin{proof}
	Let \(S\) be \(W\)-wide and consider an amoebot structure \(A\) that stores \(k\) on a binary counter and knows \(C = \VP(k \cdot S, A)\).
	Let \(d\) be parallel to \(W\) and \(\ell \in \Nats\) both be known by the amoebots and let \(S' = (S + \ell \cdot \UVec{d}) \cup \Line(d, \ell)\).
	
	By construction, \(k \cdot S'\) contains \(\Line(d, k \cdot \ell)\), so all invalid placements of the line are also invalid placements of \(k \cdot S'\).
	We assume that \(k\) is stored on a segment of \(A\) with maximum length, \eg{}, by using all maximal segments of \(A\) as counters simultaneously and deactivating counters whose space is exceeded by some operation.
	This way, when the amoebots compute \(k \cdot \ell\), at least one counter has enough space to store the result unless \(\VP(\Line(d, k \cdot \ell), A) = \emptyset\), in which case there are no valid placements of \(k \cdot S'\) and the amoebots can terminate.
	By Lemma~\ref{lem:line_detection}, the amoebots can now determine the valid placements of \(\Line(d, k \cdot \ell)\) within \(\BigO{\log \min \Set{k \cdot \ell, n}}\) rounds.
	
	Next, by Lemma~\ref{lem:wide_shape_placements}, the set \(\Tilde{C} = A \setminus C\) of invalid placements of \(k \cdot S\) is \(k\)-segmented on \(W\) in \(A\).
	Thus, the segment shift procedure (Lemma~\ref{lem:segment_shifting}) can be used to shift every amoebot \(q \in \tilde{C}\) by \(k\) positions in the opposite direction of \(d\) within its maximal \(W\)-segment of \(A\).
	Repeating this procedure \(\ell\) times, we shift \(q\) by \(k \cdot \ell\) positions and obtain the set \(Q \subset A\), where on every maximal \(W\)-segment \(M\) of \(A\), \(Q \cap M = M \cap ((M \cap \tilde{C}) - k \cdot \ell \cdot \UVec{d})\).
	For any \(q \in \tilde{C}\), we have \(q - k \cdot \ell \cdot \UVec{d} \notin \VP(k \cdot S', A)\) since \(V(k \cdot S) \subseteq V(k \cdot S' - k \cdot \ell \cdot \UVec{d})\) by the definition of \(S'\).
	Therefore, every amoebot in \(Q\) is an invalid placement of \(k \cdot S'\).
	Combining this with the line placement check, we get \(\VP(k \cdot S', A) \subseteq \VP(\Line(d, k \cdot \ell), A) \cap (A \setminus Q)\).
	
	Now, let \(q \in A \setminus \VP(k \cdot S', A)\) be arbitrary.
	If \(q \notin \VP(\Line(d, k \cdot \ell), A)\), \(q\) will be recognized by the line placement check.
	Otherwise, there must be a position \(x \in V(k \cdot S' + q)\) with \(x \notin A\).
	Since \(x \notin V(\Line(d, k \cdot \ell) + q)\), we have \(x \in V(k \cdot S + k \cdot \ell \cdot \UVec{d} + q)\).
	Therefore, \(q' = q + k \cdot \ell \cdot \UVec{d} \notin \VP(k \cdot S) = C\) and since \(q'\) lies on the same \(W\)-segment of \(A\) as \(q\), we have \(q \in Q\).
	This implies \(\VP(k \cdot S', A) = \VP(\Line(d, k \cdot \ell), A) \cap (A \setminus Q)\), which the amoebots have computed in \(\BigO{\ell \cdot \log \min \Set{k, n} + \log \min \Set{k \cdot \ell, n}} = \BigO{\ell \cdot \log \min \Set{k \cdot \ell, n}}\) rounds.
\end{proof}


\section{Shape Classification}
\label{sec:shapes}

As we have shown in Section~\ref{sec:lower_bound}, the transfer of valid placement information is not always possible in polylogarithmic time.
In this section, we combine the primitives described in the previous section to develop the class of \emph{snowflake} shapes, which always allow this placement information to be transmitted efficiently.
Additionally, we characterize the subset of \emph{star convex} shapes, for which the binary scale factor search is applicable.


\subsection{Snowflake Shapes}
\label{subsec:snowflakes}

Combining the primitives discussed in the previous section, we obtain the following class of shapes.
Our recursive definition identifies shapes with trees such that every node in the tree represents a shape and every edge represents a composition or transformation of shapes.

\begin{definition}
	\label{def:snowflakes}
	A \emph{snowflake tree} is a finite, non-empty tree \(T = (V_T, E_T)\) with three node labeling functions, \(\NodeType: V_T \to \Set{\TLine, \TTri, \TUnion, \TSum, \TShift}\), \(\NodeDir: V_T \to \Directions\) and \(\NodeLen: V_T \to \Nats_0\), that satisfies the following constraints.
	Every node \(v \in V_T\) represents a shape \(S_v\) such that:
	\begin{itemize}
		\item If \(\NodeType(v) = \TLine\), then \(v\) is a leaf node and \(S_v = \Line(\NodeDir(v), \NodeLen(v))\) \emph{(line node)}.
		\item If \(\NodeType(v) = \TTri\), then \(v\) is a leaf node and \(S_v = \Tri(\NodeDir(v), \NodeLen(v))\), where \(\NodeLen(v) > 0\) \emph{(triangle node)}.
		\item If \(\NodeType(v) = \TUnion\), then \(S_v = \bigcup_{i=1}^m S_{u_i}\), where \(u_1, \ldots, u_m\) are the children of \(v\) and \(m \geq 2\) \emph{(union node)}.
		\item If \(\NodeType(v) = \TSum\), then \(S_v = S_u \oplus \Line(\NodeDir(v), \NodeLen(v))\), where \(u\) is the unique child of \(v\) and \(\NodeLen(v) > 0\) \emph{(sum node)}.
		\item If \(\NodeType(v) = \TShift\), then \(S_v = (S_u + \NodeLen(v) \cdot \UVec{\NodeDir(v)}) \cup \Line(\NodeDir(v), \NodeLen(v))\), where \(u\) is the unique child of \(v\), \(S_u\) has a minimal axis width \(> 0\) on the axis of \(\NodeDir(v)\) and \(\NodeLen(v) > 0\) \emph{(shift node)}.
	\end{itemize}
	Let \(r \in V_T\) be the root of \(T\), then we say that \(S_r\) is the \emph{snowflake shape} represented by \(T\).
\end{definition}
Note that this definition constrains the placement of a snowflake's origin.
Because algorithms for the shape containment problem can place the origin of the target shape freely, we may extend the class of snowflakes to its closure under equivalence of shapes.
The algorithms we describe in this paper place the origin in accordance with the definition.




\subsection{Star Convex Shapes}
\label{subsec:star_convex_shapes}

The subset of \emph{star convex} shapes is of particular interest (see Fig.~\ref{fig:snowflakes} for example shapes):

\begin{definition}
    \label{def:star_convex_shapes}
    A shape \(S\) is \emph{star convex} if it is hole-free and contains a \emph{center node} \(c \in \Veqt\) such that for every \(v \in V(S)\), all shortest paths from \(c\) to \(v\) in \(\Geqt\) are contained in \(S\).
\end{definition}
For example, all convex shapes are star convex since all of their nodes are centers.
To show the properties of star convex shapes, we will use the following equivalent characterization:

\begin{lemma}
	\label{lem:star_convex_helper}
	A shape \(S\) is star convex with its origin as a center node if and only if \(S\) is the union of parallelograms of the form \(\Line(d, \ell) \oplus \Line(d', \ell')\) and convex shapes of the form \(\Tri(d, 1) \oplus \Line(d, \ell) \oplus \Line(d', \ell')\), where \(d'\) is obtained from \(d\) by a \(60^\circ\) clockwise rotation.
	The number of these shapes is in \(\BigO{\Abs{V(S)}}\).
\end{lemma}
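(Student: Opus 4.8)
The plan is to build everything on one geometric fact about the triangular grid: for any node \(v \in \Veqt\), the set of all nodes and edges lying on \emph{some} shortest path from the origin to \(v\) in \(\Geqt\) is exactly the \(d\)- and \(d'\)-edge set of a parallelogram \(\Line(d, \ell) \oplus \Line(d', \ell')\), where \(d, d'\) are the two cardinal directions (\(60^\circ\) apart) bounding the sector containing \(v\), and \(\ell, \ell'\) are the coordinates of \(v\) in the basis \(\{\UVec{d}, \UVec{d'}\}\). I would prove this by writing \(v = \ell \UVec{d} + \ell' \UVec{d'}\) with \(\ell, \ell' \geq 0\) and minimizing the number of cardinal steps realizing this displacement; because the two bounding directions of the sector are the only ones that contribute efficiently, every shortest path turns out to be a monotone lattice path using only \(\UVec{d}\)- and \(\UVec{d'}\)-steps inside the parallelogram, and conversely every \(d\)- and \(d'\)-edge of that parallelogram lies on one such path. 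The diagonal (third-direction) edges of the parallelogram are the only ones \emph{not} covered; when \(\ell' = 0\) the region degenerates to the line \(\Line(d, \ell)\).

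For the direction ``\(\Leftarrow\)'', each building block is a Minkowski sum of segments and a single triangular face, hence convex, so it is star convex with the origin as a center (as noted for convex shapes after Definition~\ref{def:star_convex_shapes}) and it contains the origin. A union of shapes that are each star convex with the \emph{same} center \(c\) and all contain \(c\) is again star convex with center \(c\): the shortest-path property propagates, since any \(v\) in the union lies in one block whose shortest paths to \(v\) stay inside that block. Hole-freeness follows because each block, being convex and containing the origin, is star-shaped with respect to the origin in \(\Reals^2\); a union of sets star-shaped with respect to a common point is again star-shaped, and a bounded star-shaped region has connected complement.

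For ``\(\Rightarrow\)'', let \(S\) be star convex with the origin as center. By the geometric fact above, for every \(v \in V(S)\) all \(d\)- and \(d'\)-edges of its shortest-path parallelogram \(R(v) = \Line(d, \ell) \oplus \Line(d', \ell')\) lie in \(S\); the four sides of each unit rhombus of \(R(v)\) are therefore present, so hole-freeness forces each rhombus (its diagonal and both faces) into \(S\), giving \(R(v) \subseteq S\) as a full shape. This already covers every node and every ``bare'' edge of \(S\). For each face \(f \subseteq S\) I would take the shortest-path region of \(f\), the convex hull of \(\{\text{origin}\} \cup f\), and argue that it has the form \(\Tri(d, 1) \oplus \Line(d, \ell) \oplus \Line(d', \ell')\) (with \(d'\) the clockwise neighbour of \(d\) and the face \(\Tri(d,1)\) at the tip), that it contains \(f\), and that it lies in \(S\) by the same edge-presence-plus-hole-freeness argument applied to its cells. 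Taking the union of the \(R(v)\) over all nodes and of these face blocks over all faces then reconstructs \(S\), and since a planar shape on \(\Abs{V(S)}\) nodes has \(\BigO{\Abs{V(S)}}\) faces and edges, the number of blocks is \(\BigO{\Abs{V(S)}}\).

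The main obstacle is the second half of ``\(\Rightarrow\)'': verifying that the shortest-path region of a face is \emph{exactly} one of the admissible convex blocks \(\Tri(d,1) \oplus \Line(d,\ell) \oplus \Line(d',\ell')\) and lies in \(S\). This requires a careful case analysis by the sector and orientation of \(f\) relative to the origin (which corner of \(f\) is nearest, and whether \(f\) is an ``up'' or ``down'' face), together with the repeated promotion of ``paths and edges in \(S\)'' to ``filled faces in \(S\)'' via hole-freeness, which is precisely where the interplay between the discrete shortest-path condition and the topological hole-freeness condition must be handled with care.
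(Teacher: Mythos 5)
Your plan follows essentially the same route as the paper's proof (the shortest-path--parallelogram fact, convexity of the blocks for ``\(\Leftarrow\)'', and a parallelogram-plus-triangle-block decomposition for ``\(\Rightarrow\)''), but it has a genuine gap at precisely the step where the paper's argument does its real work. Your claim that the filled parallelograms \(R(v)\), \(v \in V(S)\), ``already cover every node and every bare edge of \(S\)'' does not follow from your geometric fact. An edge \(e = \{u,w\} \subseteq S\) whose endpoints are equidistant from the origin lies on \emph{no} shortest path from the origin, so it is not one of the \(d\)/\(d'\)-edges of any \(R(v)\); it can only appear as the diagonal of \(R(q)\), where \(q\) is the far corner of the face incident to \(e\) on the side away from the origin, and that requires \(q \in V(S)\). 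If \(q \notin S\) and neither face incident to \(e\) were contained in \(S\), then \(e\) would be covered neither by your parallelograms nor by your face blocks (which exist only for faces contained in \(S\)), and your union would fail to reconstruct \(S\). Closing this gap requires exactly the argument the paper makes: if \(u\) and \(w\) are equidistant from the center \(c\), the third corner \(p\) of the incident face closer to \(c\) is strictly closer, so the edges \(\{p,u\}\) and \(\{p,w\}\) each lie on a shortest path from \(c\) and hence in \(S\) by star convexity; together with \(e\) they form a closed triangle boundary inside \(S\), and hole-freeness forces the enclosed face into \(S\). Consequently every equidistant edge does have an incident face in \(S\) and is then covered by that face's block --- but this must be proved, not assumed.

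Two smaller points. First, you flag the containment of the face blocks in \(S\) as the main obstacle, but that part is immediate once the parallelograms are handled: for a face \(f \subseteq S\) with far corners \(u,w\), the block is \(f \cup R(u) \cup R(w)\), and all three pieces are already known to lie in \(S\); the only nontrivial verification is the identity \(\Tri(d,1) \oplus R(p) = f \cup R(u) \cup R(w)\), which is exactly the shape equation the paper states. Second, when a face of \(S\) has \emph{two} near corners and one far corner \(q\), the convex hull of the origin and \(f\) is the parallelogram \(R(q)\), which is not of the form \(\Tri(d,1) \oplus \Line(d,\ell) \oplus \Line(d',\ell')\); your case analysis must allow both admissible forms of the lemma rather than forcing every face block into the triangle form.
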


\begin{figure}
    \centering
    \includegraphics[width=\linewidth]{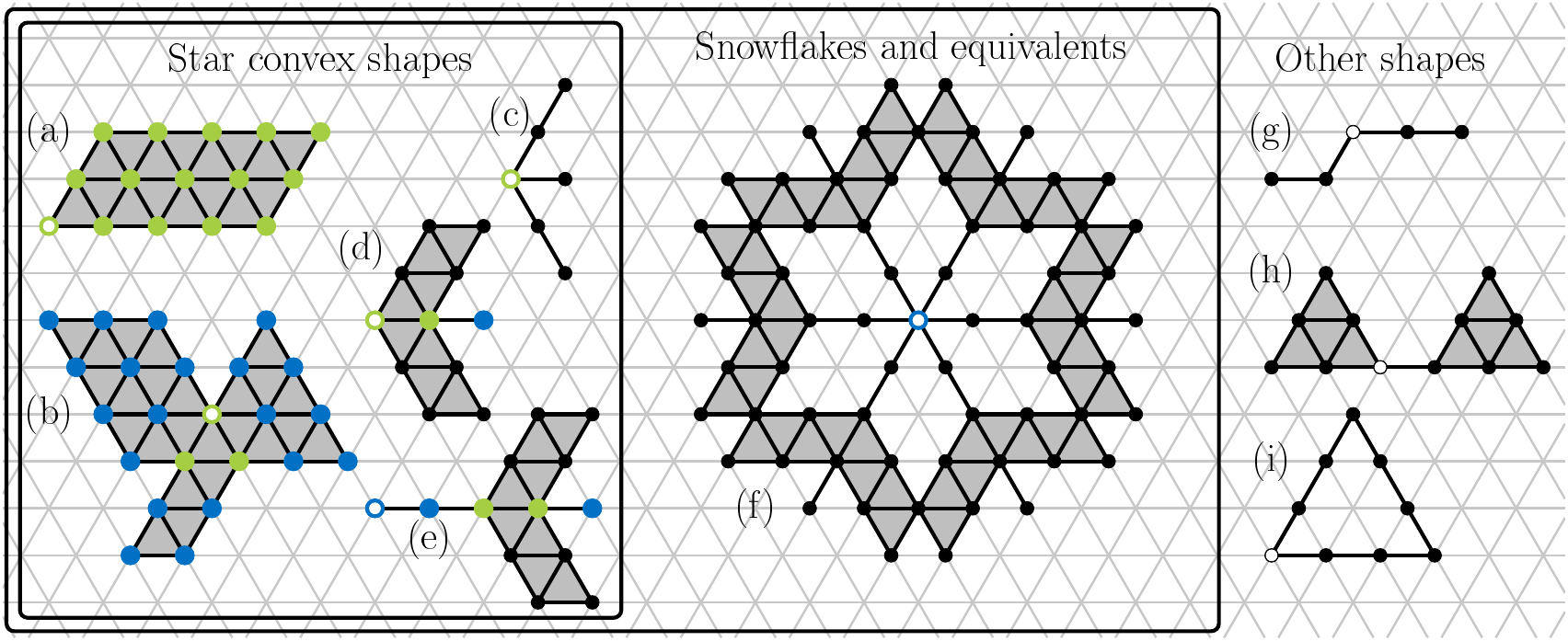}
    \caption{Examples of snowflakes, star convex shapes and other, non-snowflake shapes.
    Green nodes indicate star convex shape centers, blue nodes indicate possible snowflake origins and chosen origins are highlighted with a white center.
    All center nodes are also snowflake origins.
    Shape (a) is convex and shape (b) demonstrates that not all snowflake origins must be center nodes.
    Shape (c) is a union of lines, (d) is the Minkowski sum of (c) and \(\Line(\E, 1)\), (e) is the union of \(\Line(\E, 2)\) and \(\mathrm{(d)} + 2 \cdot \UVec{\E}\), and shape (f) consists of six rotated copies of (e).
    (g) is the example shape for the lower bound from Section~\ref{sec:lower_bound}.}
    \label{fig:snowflakes}
\end{figure}

Note that in the above lemma, \(d\), \(\ell\) and \(\ell'\) may not be the same for all constituent shapes.
The first kind of shape is always a line (if \(\ell = 0\) or \(\ell' = 0\)) or a parallelogram while the second kind of shape is a pentagon, a trapezoid (if \(\ell = 0\) or \(\ell' = 0\)) or a triangle (if \(\ell = \ell' = 0\)).
All of these shapes are convex.

\begin{proof}
	First, observe that every constituent shape \(\Line(d, \ell) \oplus \Line(d', \ell')\) or \(\Tri(d, 1) \oplus \Line(d, \ell) \oplus \Line(d', \ell')\) is convex and therefore, all of its nodes are center nodes, in particular its origin.
	When taking the union of such shapes, there is still at least one shortest path from the union's origin to every node because such a path is already contained in the convex shape that contributed the node.
	The union cannot have holes because for any hole, there would be a straight line from the origin to the boundary of the hole that does not lie completely inside the shape.
	This contradicts the fact that the convex shape contributing that part of the hole's boundary must already contain this line.
	Thus, the union is star convex and its origin is a center node.
	
	Now, let \(S\) be star convex such that the origin \(c\) is a center node.
	Consider any grid node \(v \in S\), then \(S\) contains all shortest paths between \(c\) and \(v\).
	Because shortest paths in the triangular grid always use only one or two directions at a \(60^\circ\) angle to each other, the set of shortest paths and the enclosed faces forms a parallelogram (see Fig.~\ref{subfig:shortest_paths_c}).
	By taking the union of all parallelograms spanned by \(c\) and the grid nodes in \(S\), we obtain at most \(\Abs{V(S)}\) parallelograms that already cover all grid nodes contained in \(S\) without adding extra elements.
	Any parallelogram of this kind can be represented as the Minkowski sum of two lines in the directions taken by the shortest paths.
	
	\begin{figure}
		\centering
		\begin{minipage}{0.34\textwidth}
			\centering
			\includegraphics[width=\textwidth]{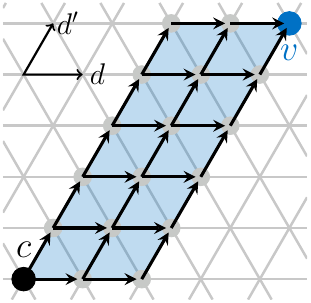}
			\subcaption{Shortest paths in the grid.}
			\label{subfig:shortest_paths_c}
		\end{minipage}
		\begin{minipage}{0.55\textwidth}
			\centering
			\includegraphics[width=\textwidth]{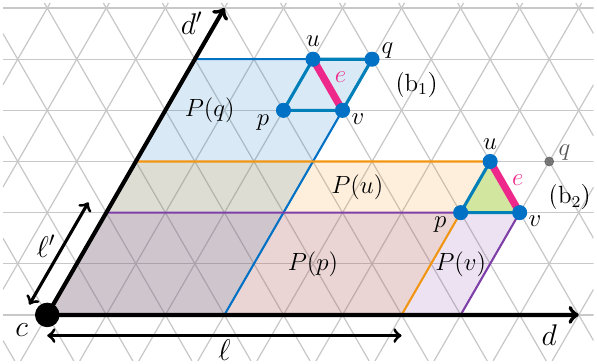}
			\subcaption{The convex shapes spanning a star convex shape.}
			\label{subfig:star_convex_union_proof}
		\end{minipage}
		\caption{Illustration of the proof of Lemma~\ref{lem:star_convex_helper}.
			(\subref{subfig:shortest_paths_c}) shows the parallelogram spanned by all shortest paths between nodes \(c\) and \(v\), using only the two directions \(d\) and \(d'\).
			(\subref{subfig:star_convex_union_proof}) shows the construction of the convex shapes containing all elements of a star convex shape.
			In (\(\mathrm{b_1}\)), the node \(q\) is contained in the shape, so the parallelogram \(P(q)\) already contains the edge \(e\) and its incident faces.
			In (\(\mathrm{b_2}\)), the node \(q\) is not contained, but the edge \(e\) and its incident face with \(p\) (in green) are contained in the Minkowski sum of \(P(p)\) and \(\Tri(d, 1)\), which does not add any elements outside of the shape due to \(P(u)\) and \(P(v)\).}
		\label{fig:star_convex_helper}
	\end{figure}
	
	The only remaining elements are edges that are not part of any shortest paths from \(c\) and their incident faces.
	Consider such an edge \(e \subset S\) with end points \(u, v\).
	Let \(p\) be the third corner of the face incident to \(e\) that is closer to \(c\) and let \(q\) be the third corner of the other incident face.
	If \(q \in S\), then \(e\) and the two faces are already contained in \(S\) because they lie in the parallelogram spanned by \(q\) (see Fig.~\ref{subfig:star_convex_union_proof}, \(\mathrm{b_1}\)).
	\(S\) must contain the two parallelograms spanned by \(u\) and \(v\) with \(c\).
	Since \(u\) and \(v\) have the same distance to \(c\) (otherwise \(e\) would be part of a shortest path), \(p\) must have a smaller distance and is therefore contained in both parallelograms.
	Thus, \(S\) contains the edges between \(p\) and \(u\) and between \(p\) and \(v\), which means that the face spanned by \(p\), \(u\) and \(v\) must be contained in \(S\) because \(S\) has no holes.
	Let \(P(p) = \Line(d, \ell) \oplus \Line(d', \ell')\) be the parallelogram spanned by \(p\), then \(P(u) = \Line(d, \ell) \oplus \Line(d', \ell' + 1)\) and \(P(v) = \Line(d, \ell + 1) \oplus \Line(d', \ell')\) (or vice versa).
	Now, \(\Tri(d, 1) \oplus P(p)\) is exactly the union of the face with the two parallelograms spanned by \(u\) and \(v\), \ie{}, it contains the face incident to \(e\) and does not add elements outside of \(S\) (see Fig.~\ref{subfig:star_convex_union_proof}, \(\mathrm{b_2}\)).
	Since the number of edges in \(S\) is bounded by \(6 \cdot \Abs{V(S)}\), the lemma follows.
\end{proof}
With this, we can relate star convex shapes to snowflake shapes as follows.

\begin{lemma}
    \label{lem:star_convex_snowflakes}
    Every star convex shape \(S\) is equivalent to a snowflake.
    If its origin is a center node, \(S\) itself is a snowflake.
\end{lemma}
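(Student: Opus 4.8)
The plan is to prove the stronger second statement first---that a star convex shape whose origin is a center node is itself a snowflake---and then obtain the first statement by a translation argument. So suppose $S$ is star convex with its origin as a center node. By Lemma~\ref{lem:star_convex_helper}, $S$ is the union of $\BigO{\Abs{V(S)}}$ convex pieces, each of the form $\Line(d, \ell) \oplus \Line(d', \ell')$ or $\Tri(d, 1) \oplus \Line(d, \ell) \oplus \Line(d', \ell')$, where every piece contains the origin. It therefore suffices to realize each individual piece as a snowflake tree and to combine these trees under a single union node.

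I would construct each piece's tree as a short path descending from one leaf. For a parallelogram $\Line(d, \ell) \oplus \Line(d', \ell')$: when $\ell > 0$, take a line node $\Line(d, \ell)$ as the leaf and, if $\ell' > 0$, attach a sum node with direction $d'$ and length $\ell'$; the degenerate cases $\ell = 0$ or $\ell' = 0$ collapse the piece to a single line (or the origin node), which is already a line node. For a piece $\Tri(d, 1) \oplus \Line(d, \ell) \oplus \Line(d', \ell')$: take a triangle node $\Tri(d, 1)$ as the leaf (valid, since its length parameter $1$ is positive), attach a sum node with $(d, \ell)$, and then a sum node with $(d', \ell')$, omitting whichever of these two sum nodes would carry length $0$. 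In every case the node labels respect the constraints of Definition~\ref{def:snowflakes}: sum and triangle nodes only ever carry positive lengths, and the directions $d, d' \in \Directions$ are cardinal (recall $d'$ is a $60^\circ$ rotation of $d$).

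Finally I would assemble the full tree. Since each piece's subtree represents a shape containing the origin, attaching all of them as children of a single union node yields exactly $\bigcup_i S_{u_i}$, which reproduces $S$ by Lemma~\ref{lem:star_convex_helper}; this is legal whenever the decomposition has at least two pieces. If it consists of a single piece, then $S$ equals that piece and is already a snowflake via its own subtree, so no union node is introduced. This establishes the second statement. For the first statement, let $S$ be an arbitrary star convex shape with some center node $c \in \Veqt$. Translating by $-c$ is a valid rigid motion (as $c$ is a grid node) and yields an equivalent shape $S + (-c)$ whose origin is the center node; by the second statement this shape is a snowflake, so $S$ is equivalent to a snowflake.

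The construction is otherwise routine; the only points requiring care are the degenerate length-$0$ cases, where sum nodes must be dropped to respect the positivity constraints of Definition~\ref{def:snowflakes}, and the single-piece case, where the union node---which requires at least two children---must be avoided. I expect verifying that these corner cases still yield valid snowflake trees to be the main, albeit minor, obstacle.
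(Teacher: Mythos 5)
Your proof is correct and takes essentially the same route as the paper: the paper's own proof also invokes Lemma~\ref{lem:star_convex_helper}, observes that the constituent pieces are Minkowski sums of lines and triangles (hence realizable by snowflake trees), and handles a non-origin center by translation to an equivalent shape. Your version simply spells out the tree construction and the degenerate cases (length-$0$ sums, single-piece unions) that the paper leaves implicit.
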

\begin{proof}
	This follows directly from Lemma~\ref{lem:star_convex_helper} since all constituent shapes are Minkowski sums of lines and triangles.
	If the shape's origin is not a center, it can be translated so that it is a center, which results in an equivalent shape.
\end{proof}

A very useful property of star convex shapes is that they are self-contained.
We can even show that \emph{only} star convex shapes are self-contained.
As the authors of~\cite{hansen2020starshaped} point out in their extensive survey on \emph{starshaped sets} (see p.~1007), the results in~\cite{mcmullen1978sets} even show that these two properties are equivalent in much more general settings, when omitting rotations.

\begin{theorem}
	\label{theo:star_convex_self_contained}
	A shape is self-contained if and only if it is star convex.
\end{theorem}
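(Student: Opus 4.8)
The proof splits into the two implications; I would establish the forward direction (star convex $\Rightarrow$ self-contained) by a direct scaling argument and the reverse direction (self-contained $\Rightarrow$ star convex) by extracting a common centre from the scaling maps.

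\textbf{Forward direction.} Let $S$ be star convex with a centre node $c$. After translating by $-c$ (which only replaces $S$ by an equivalent shape) I may apply Lemma~\ref{lem:star_convex_helper}: $S-c$ is a union of convex shapes each containing the origin. Hence $S$ is a union of convex sets each containing $c$, so $S$ is \emph{geometrically star-shaped} about $c$, i.e.\ $\mu(S-c)+c\subseteq S$ for every $\mu\in[0,1]$ (any $p\in S$ lies in a convex piece that also contains $c$, so the whole segment $[c,p]$ lies in that piece). Scaling this relation by $k'$ and applying it with $\mu=k/k'$ for integers $k<k'$ gives
\[
  k\cdot S+(k'-k)\,c \;=\; \tfrac{k}{k'}\bigl(k'S\bigr)+\bigl(1-\tfrac{k}{k'}\bigr)k'c \;\subseteq\; k'\cdot S .
\]
Since $c\in\Veqt$ and $k'-k\in\Nats$, the translation $t=(k'-k)c$ is a grid node, so $r=0$ and this $t$ witness self-containment.

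\textbf{Reverse direction, eliminating rotations.} Now assume $S$ is self-contained. For a pair $k<k'$ choose $r,t$ with $k\,S^{(r)}+t\subseteq k'S$; dividing by $k'$ this reads $f(S)\subseteq S$ for the similarity $f(x)=\lambda R^{r}x+s$, where $\lambda=k/k'\in(0,1)$, $R$ is the rotation by $60^\circ$, and $s=t/k'$. As $f$ is a contraction (its linear part scales by $\lambda<1$), it has a unique fixed point $c_f\in\overline S=S$, and $f^{n}(S)\subseteq S$ for all $n$. The crucial idea is that the rotation can be cancelled by iteration: since $R^{6}=I$, the sixth iterate satisfies $f^{6}(x)=\lambda^{6}(x-c_f)+c_f$, a \emph{pure scaling} about $c_f$ with $f^{6}(S)\subseteq S$. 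Thus every pair $k<k'$ yields a rotation-free containment $\lambda^{6}S+(1-\lambda^{6})c_f\subseteq S$ at ratio $(k/k')^{6}$. The ratios $(k/k')^{6}$ are dense in $(0,1)$, and the property ``some translate of $\mu S$ lies in $S$'' is closed in $\mu$ (bounded translations converge along a subsequence and the containment passes to the limit by compactness of $S$). Hence for \emph{every} $\mu\in(0,1)$ there is a translation $t_\mu$ with $\mu S+t_\mu\subseteq S$; that is, $S$ is self-contained \emph{without rotations}.

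\textbf{Reverse direction, discretisation and the main obstacle.} For the rotation-free notion, self-containment is equivalent to geometric star-shapedness; this is precisely the planar case of the results of~\cite{mcmullen1978sets} surveyed in~\cite{hansen2020starshaped}, which I would cite to conclude that $S$ is star-shaped about some centre $c^\ast\in S$. A star-shaped set is simply connected, so $S$ is automatically hole-free. It then remains to (a) replace $c^\ast$ by a grid node and (b) promote geometric star-shapedness to the discrete shortest-path condition of Definition~\ref{def:star_convex_shapes}. For (a) I would use that the set of centres (the kernel) is convex and, since every reflex feature of a grid shape lies on an axis-parallel line through a node, is cut out by half-planes bounded by such lines; a nonempty intersection of these either is full-dimensional or meets in lattice points, and in both cases contains a node $c$. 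For (b), given $v\in V(S)$ the union of all shortest $c$--$v$ paths is the parallelogram $P(c,v)$ from the proof of Lemma~\ref{lem:star_convex_helper}; using that $[c,x]\subseteq S$ for all $x$, that $S$ is hole-free, and the grid-alignment of $S$, one shows each face of $P(c,v)$ lies in $S$ by a case analysis in the spirit of Lemma~\ref{lem:grid_translation}, which gives discrete star convexity with centre $c$. I expect the reverse direction to be the hard part: the rotation-cancelling iteration is the key trick, but the subsequent discretisation—pinning down a single node that works for \emph{all} scales and converting the continuous star condition into the grid shortest-path condition—requires the delicate lattice case analysis that should be the most laborious step.
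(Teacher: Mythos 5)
Your forward direction is correct and essentially coincides with the paper's: both go through Lemma~\ref{lem:star_convex_helper} and use the translation $t=(k'-k)c$; whether one phrases the containment constituent-wise (as the paper does) or via geometric star-shapedness about $c$ is immaterial. Your reverse direction, however, takes a genuinely different route, and its first half is both correct and more elegant than the paper's. The paper never leaves the discrete setting: it proves the contrapositive by exhibiting a scale $k$ with $k\cdot S^{(r)}+t\not\subseteq(k+1)\cdot S$ for all $r,t$, using the fixed-node property (Lemma~\ref{lem:fixed_point_properties}), shortest paths with missing edges, and, to dispose of rotations, a lengthy side-length-inequality analysis of convex hulls (Lemma~\ref{lem:rotation_symmetry}). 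Your observation that a single containment $f(S)\subseteq S$, $f(x)=\lambda R^{r}x+s$, can be iterated so that $f^{6}$ is a rotation-free homothety with $f^{6}(S)\subseteq S$ eliminates rotations in two lines, and the density-plus-compactness step correctly upgrades the ratios $(k/k')^{6}$ to all of $(0,1)$; at that point the result of~\cite{mcmullen1978sets}, which the paper cites as context but deliberately does not rely on, yields geometric star-shapedness about some point $c^\ast$.

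The discretization half, though, has genuine gaps. Gap (a): your argument that the kernel contains a grid node assumes the kernel is cut out by half-planes bounded by grid lines. That characterization is false for the shapes of this paper, which are unions of nodes, edges \emph{and} faces and may have one-dimensional pieces (the lower-bound shape of Fig.~\ref{fig:lower_bound_shape} is exactly two bare arms). If an edge $e\subseteq S$ has no incident face in $S$, then any kernel point $x$ off the supporting line of $e$ would see the interior of $e$ only by sweeping a solid triangle into $S$, forcing an incident face --- a contradiction. So every such bare edge constrains the kernel to lie \emph{on a grid line}, not in a half-plane; the kernel of a ``V'' of two arms is a single node for reasons your argument does not capture. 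The statement you want is true, but it needs a separate treatment of one-dimensional features (kernel contained in an intersection of grid lines and grid half-planes, plus the fact that two non-parallel grid lines of the triangular lattice always meet in a node).

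Gap (b): the step from ``geometrically star-shaped about a node $c$'' to the shortest-path condition of Definition~\ref{def:star_convex_shapes} is not a routine variant of Lemma~\ref{lem:grid_translation}. Star-shapedness only forces into $S$ the faces crossed by the segment $[c,v]$; writing $v=c+a\UVec{d}+b\UVec{d'}$, whenever $\gcd(a,b)>1$ this segment passes through intermediate nodes and misses most of the shortest-path parallelogram (for $v=c+2\UVec{\E}+2\UVec{\NNE}$ it crosses only $4$ of the $8$ faces, a ``bowtie''). Recovering the full parallelogram requires a genuine filling induction: forced faces contain new points of $S$, whose visibility segments to $c$ force further faces, and so on. Together with (a), this carries difficulty comparable to the paper's entire contrapositive argument, so as it stands your proposal is a correct and interesting outline --- with a real improvement in the rotation-elimination step --- but not yet a complete proof of the reverse direction.
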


In our context, this equivalence implies that the efficient binary search can only be applied directly to star convex shapes.
For any non-star convex shape \(S\), there exist an amoebot structure \(A\) and scale factors \(k < k'\) such that \(\VP(k \cdot S^{(r)}, A) = \emptyset\) for all \(r\) but \(\VP(k' \cdot S^{(r)}, A) \neq \emptyset\) for some \(r\); consider \eg{}, \(A = V(k' \cdot S)\) for sufficiently large \(k\) and \(k'\).


For the proof of Theorem~\ref{theo:star_convex_self_contained}, we first show several lemmas that provide the necessary tools.
To start with, we show that non-star convex shapes cannot become star convex by scaling, and for sufficiently large scale, every center candidate has a shortest path with a missing edge.
It is clear that star convex shapes stay star convex after scaling (by Lemma~\ref{lem:star_convex_helper}), but non-star convex shapes gain new potential center nodes, making it less obvious why there still cannot be a center.

\begin{lemma}
	\label{lem:shortest_path_missing_edge}
	Let \(S\) be an arbitrary non-star convex shape, then for every \(k \in \Nats\), \(k \cdot S\) is not star convex, and for \(k > 2\) and every node \(c \in V(k \cdot S)\), there exists a shortest path from \(c\) to a node \(v \in V(k \cdot S)\) in \(\Geqt\) with at least one edge not contained in \(k \cdot S\).
\end{lemma}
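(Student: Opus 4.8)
The plan is to separate the two ways a shape can fail to be star convex and to track each through scaling. First I would dispose of holes: scaling is a homeomorphism of \(\Reals^2\) that maps the bounded components of \(\Reals^2\setminus S\) onto those of \(\Reals^2\setminus(k\cdot S)\), so if \(S\) has a hole then so does \(k\cdot S\), which is therefore not star convex. Hence I may assume \(S\) is hole-free, in which case failing to be star convex means precisely that \emph{no} node is a center: for every \(a\in V(S)\) there are a \(b\in V(S)\) and a shortest \(a\)-\(b\) path in \(\Geqt\) carrying an edge \(e\not\subseteq S\). The easy instances are the scaled lattice nodes \(c=k\cdot a\): since grid distance is homogeneous, \(d_{\Geqt}(k p,k q)=k\,d_{\Geqt}(p,q)\), the scaled copy of a shortest \(a\)-\(b\) path is a shortest \(ka\)-\(kb\) path, and it runs through \(k\cdot e\); as \(e\not\subseteq S\) is equivalent to \(k\cdot e\not\subseteq k\cdot S\), at least one of its unit edges is missing. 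This settles \(c\in k\cdot V(S)\) for every \(k\ge 1\).

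The difficulty is the \emph{new} nodes of \(V(k\cdot S)\) lying in the interior of a scaled edge or face, which correspond to no node of \(S\). To handle them uniformly I would pass to a continuous notion that scales cleanly. For \(c,x\in T\), say that \(c\) \(\Delta\)-sees \(x\) if the filled parallelogram \(\mathrm{Par}(c,x)\) spanned by the two cardinal directions bracketing \(x-c\) is contained in \(T\), and let \(K_\Delta(T)\) be the set of points \(\Delta\)-seeing all of \(T\). Two facts drive the argument. First, \(\Delta\)-seeing is homogeneous, and since \(\mathrm{Par}(c,k x')=k\cdot\mathrm{Par}(c/k,x')\) one gets \(K_\Delta(k\cdot S)=k\cdot K_\Delta(S)\). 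Second, for a hole-free grid shape the union of all shortest paths between two grid nodes coincides with the filled parallelogram of that pair (the parallelogram structure used in Lemma~\ref{lem:star_convex_helper}), so a grid node is a center exactly when it lies in \(K_\Delta(T)\); in particular \(T\) is star convex iff \(K_\Delta(T)\) contains a grid node.

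The heart of the proof — and what I expect to be the main obstacle — is the claim that a nonempty \(\Delta\)-kernel always contains a grid node. I would prove this by showing \(K_\Delta(S)\) is \emph{grid-aligned}: the constraint \(\mathrm{Par}(c,x)\subseteq S\) is violated only when the parallelogram meets a reflex feature of \(\partial S\), and each reflex feature (a boundary node or edge at which \(S\) turns inward) confines \(c\) to a cone whose apex is a grid node and whose sides run along grid directions. Intersecting the finitely many such grid-aligned cones yields a region bounded by grid lines, so a nonempty \(K_\Delta(S)\) contains a whole face and hence a node. Consequently, if \(S\) is hole-free and not star convex then \(K_\Delta(S)=\emptyset\), whence \(K_\Delta(k\cdot S)=k\cdot K_\Delta(S)=\emptyset\) for every \(k\); thus no node of \(V(k\cdot S)\) is a center and \(k\cdot S\) is not star convex, for all \(k\), as required.

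Finally, for the per-node missing-edge statement with \(k>2\): fix any \(c\in V(k\cdot S)\). Since \(c\notin K_\Delta(k\cdot S)\), some parallelogram \(\mathrm{Par}(c,x)\) leaves \(k\cdot S\); taking \(x\) to be a dominating grid node \(v\in V(k\cdot S)\), this says a shortest \(c\)-\(v\) path is not contained in \(k\cdot S\). The role of \(k>2\) is exactly to upgrade this to a missing \emph{unit edge}: at scale \(k>2\) every reflex feature is separated from the scaled nodes by at least a full edge, so the parallelogram exits \(k\cdot S\) across an actual grid edge rather than merely touching an isolated missing face or a single boundary point, yielding a shortest path from \(c\) with an edge not in \(k\cdot S\). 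The detailed combinatorial work is concentrated in the grid-alignment claim above and in verifying that, for \(k>2\), this exit is always edge-sized.
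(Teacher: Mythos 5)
Your kernel approach is genuinely different from the paper's (which does an explicit case analysis on where the node \(c\) lies: a scaled node, a node on a scaled edge, or a node in a scaled face, each time constructing a shortest path that merges into a scaled witness path), and the scaling identity \(K_\Delta(k \cdot S) = k \cdot K_\Delta(S)\) is correct and elegant. However, the proposal has two genuine gaps. The most serious one: the per-node statement for shapes \emph{with holes} is never proven. You dispose of holes only for the conclusion that \(k \cdot S\) is not star convex and then assume hole-freeness, but the lemma asserts, for \emph{every} non-star-convex \(S\) (including those with holes), every \(k > 2\) and every \(c \in V(k \cdot S)\), a shortest path to a node of \(V(k \cdot S)\) with a missing edge -- and the paper needs exactly this case, since in Theorem~\ref{theo:star_convex_self_contained} the lemma is applied to \(3 \cdot S\), which may have holes. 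Your final paragraph derives the missing edge from \(c \notin K_\Delta(k \cdot S)\), but that implication fails in the presence of holes: a parallelogram \(\mathrm{Par}(c,v)\) can leave \(k \cdot S\) because of a missing face in its interior while every lattice edge inside it -- hence every shortest path from \(c\) to \(v\) -- is present; the equivalence ``center \(\iff\) in \(K_\Delta\)'' you invoke holds only for hole-free shapes. The paper closes this case with a dedicated construction: at scale \(k \geq 3\), a missing face of \(S\) scales up to contain a lattice node \(v_k \notin V(k \cdot S)\) none of whose incident edges lie in \(k \cdot S\), so every shortest path from \(c\) to \(v_k\) ends in a missing edge, and this path is then extended to a node \(w_k \in V(k \cdot S)\), which must exist because the hole is bounded by the shape. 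This is also where the hypothesis \(k > 2\) is actually consumed; your stated role for \(k > 2\) (reflex features ``separated by a full edge'') does no work in your hole-free argument, which, if the kernel claim held, would apply to every \(k\).

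The second gap is that the heart of your hole-free argument -- a nonempty \(K_\Delta\) contains a grid node -- is left as an assertion, and the sketch of it contains a faulty inference: a nonempty region bounded by grid lines need \emph{not} contain a whole face (it can be a single point or a segment). What would rescue the conclusion is that in the triangular lattice two lattice lines through lattice points meet in a lattice point, so a nonempty intersection of closed grid-aligned cones with lattice apexes contains a lattice point; but you must first prove the exclusion regions have this form. That is the nontrivial part: the kernel constraint excludes \(c \in p - \mathrm{Cone}(d,d')\) for every missing point \(p \notin S\) whose opposite cone \(p + \mathrm{Cone}(d,d')\) meets \(S\), and \(p\) ranges over arbitrary \emph{non-lattice} points of the complement, so one has to argue that the closure of this union is still bounded by lattice lines through lattice points (and handle the fact that which cones are ``active'' changes with \(p\)). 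You would additionally need to justify the ``dominating grid node'' step, i.e., replacing a non-lattice witness \(x\) by a grid node \(v\) whose parallelogram still leaves the shape. None of this is carried out, whereas the paper's case analysis (Figs.~\ref{fig:star_convex_path_cases_edges} and~\ref{fig:star_convex_path_cases_faces}) avoids the kernel entirely. As it stands, the proposal rests on an unproven central claim and omits the holey case of the statement altogether.
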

\begin{proof}
	Let \(S\) be an arbitrary non-star convex shape.
	If \(S\) contains a hole, all larger versions have a hole as well.
	We will construct the shortest paths for this case later.
	If \(S\) does not have any holes, then every node \(c \in V(S)\) has a shortest path \(\Path(c)\) to another node in \(S\) with at least one edge missing from \(S\).
	Consider a node \(c_k \in V(k \cdot S)\) for an arbitrary scale \(k \in \Nats\).
	If \(c_k = k \cdot c\) for \(c \in V(S)\), then the scaled path \(k \cdot \Path(c)\) has at least one edge that is not contained in \(k \cdot S\).
	Otherwise, \(c_k\) belongs to a scaled edge or face of \(S\).
	
	Consider the case \(c_k \in k \cdot e\) for some edge \(e \subseteq S\) with end points \(u, v \in V(S)\).
	Let \(d \in \Directions\) be the direction from \(u\) to \(v\).
	Since \(v\) is not a center of \(S\), there is a shortest path \(\Path = \Path(v)\) to some \(w \in V(S)\) with a missing edge.
	We choose this path such that its last edge is missing from \(S\) by pruning it after the first missing edge.
	Let \(\Path_k = k \cdot \Path\) be the scaled path from \(v_k = k \cdot v\) to \(w_k = k \cdot w\).
	Then, \(\Path_k\) is still a shortest path in \(\Geqt\) and its last \(k\) edges are not contained in \(k \cdot S\).
	We construct a shortest path \(\Path_k'\) from \(c_k\) to \(w_k\) that reaches \(\Path_k\) before it reaches \(w_k\).
	The construction depends on the directions occurring in \(\Path\) (see Fig.~\ref{fig:star_convex_path_cases_edges} for the following cases):
	If \(\Path\) only uses directions in \(\Set{d, d_L, d_R}\), where \(d_L\) and \(d_R\) are the counter-clockwise and clockwise neighbor directions of \(d\) (case (a) in the figure), we construct \(\Path_k'\) by adding the straight line from \(c_k\) to \(v_k\) at the beginning of \(\Path_k\); this line only uses edges in direction \(d\).
	If the first edge of \(\Path\) uses the opposite direction of \(d\) (case (b)), \(c_k\) already lies on \(\Path_k\) and we remove the straight line to \(v_k\) instead of adding it to obtain \(\Path_k'\).
	In all other cases, the first edge of \(\Path\) can only have directions other than \(d\) and its opposite.
	Due to symmetry, we only consider the cases where it uses \(d_R\) and its neighbor \(d_R' \neq d\).
	If the first edge of \(\Path\) has direction \(d_R\) (case (c)), the path must have an edge in direction \(d_R'\) (otherwise we are in the first case again).
	Then, the straight line from \(c_k\) in direction \(d_R\) will eventually reach the first scaled edge in direction \(d_R'\) on \(\Path_k\).
	If the first edge has direction \(d_R'\), there are two more cases:
	First, if \(\Path\) only uses directions in \(\Set{d_R', d_R}\), we again use a straight line in direction \(d_R\), which will meet the first scaled edge of \(\Path_k\) (case (d)).
	Otherwise, \(\Path\) must have an edge in the opposite direction of \(d\), which we will reach on \(\Path_k\) eventually with a straight line in direction \(d_R'\) (case (e)).
	
	\begin{figure}
		\centering
		\includegraphics[width=.9\linewidth]{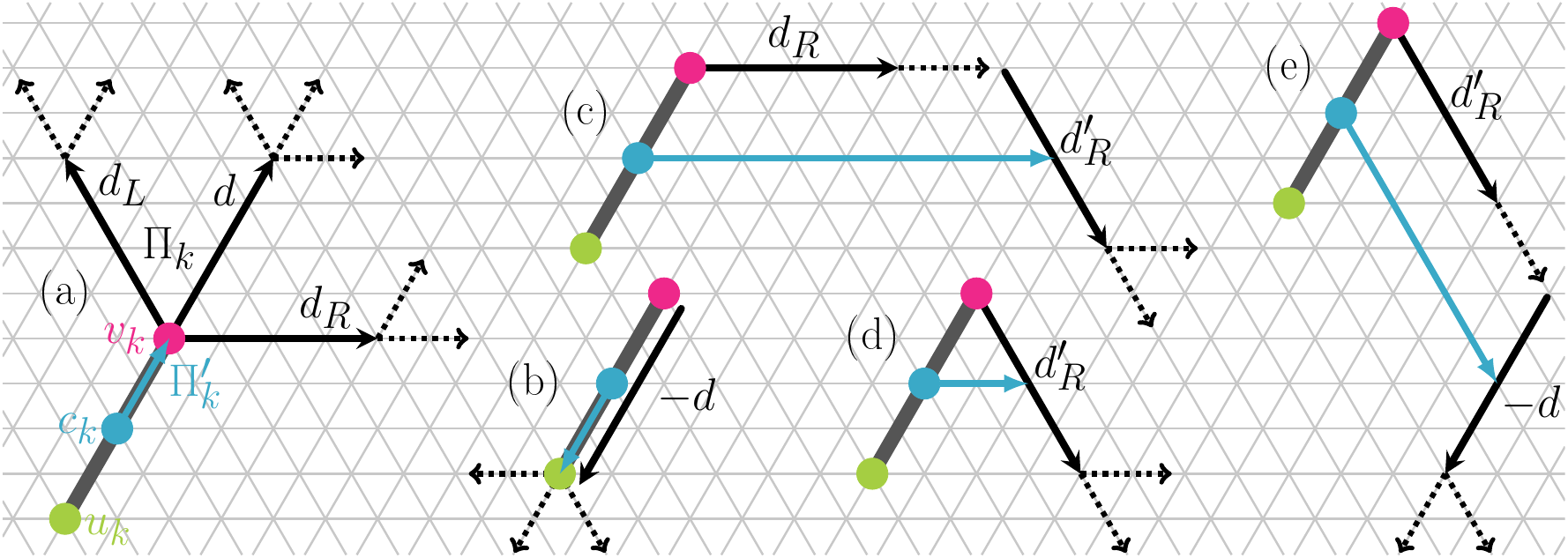}
		\caption{Overview of the possible cases for the shortest path \(\Path_k\) when the considered node \(c_k\) lies on a scaled edge, up to symmetry and rotation.
			The solid black arrows show the possible first edges of \(\Path_k\) and the dashed arrows show the directions in which the path can continue.
			In cases (c) and (e), the second solid arrow shows an edge that must exist somewhere on the path.
			The blue arrows show the first edges of \(\Path_k'\), constructed so that it reaches an edge of \(\Path_k\) before \(\Path_k\) reaches its end.
			The notation \(-d\) refers to the opposite direction of \(d\).}
		\label{fig:star_convex_path_cases_edges}
	\end{figure}
	
	\begin{figure}
		\centering
		\includegraphics[width=.9\linewidth]{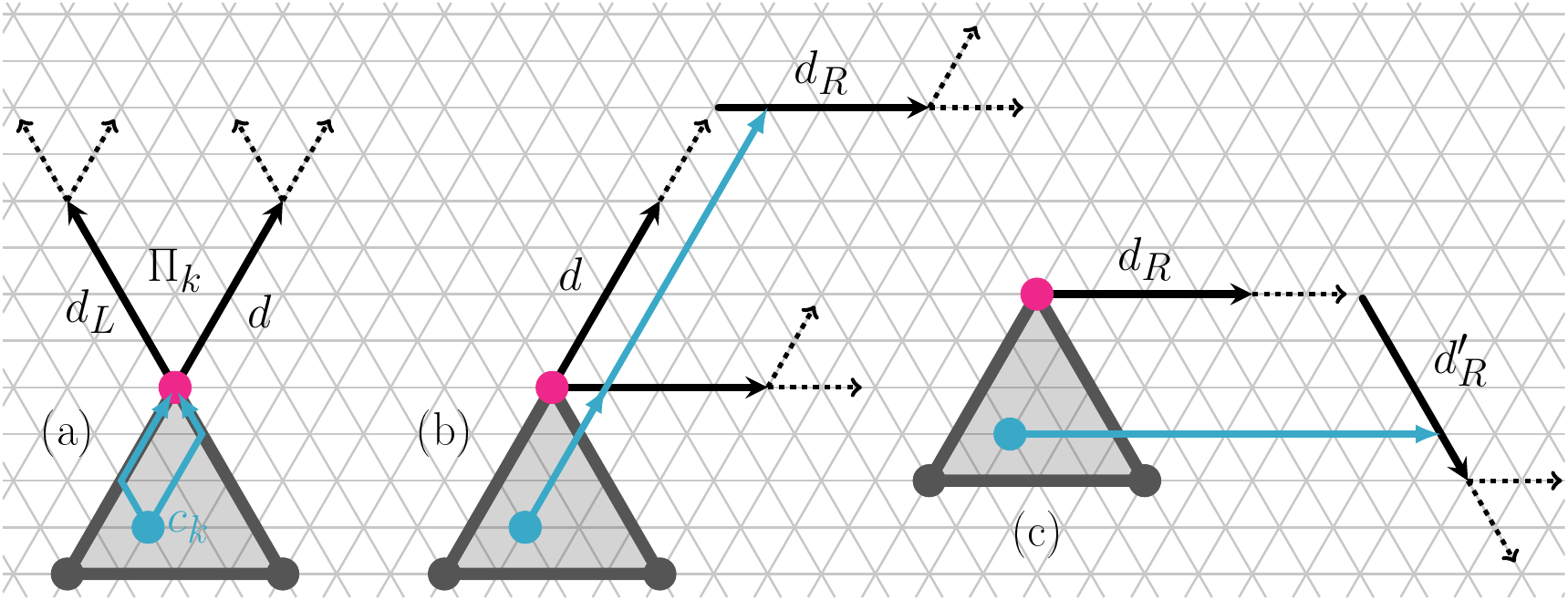}
		\caption{Overview of the cases where the considered node \(c_k\) lies in a scaled face, up to symmetry and rotation.
			The arrows have the same meaning as in Fig.~\ref{fig:star_convex_path_cases_edges}.
			Note that we can assume that the first edge of \(\Path\) is not one of the face's borders because otherwise, we can start the path at another corner of the face.}
		\label{fig:star_convex_path_cases_faces}
	\end{figure}
	
	For the case \(c_k \in k \cdot f\) for some face \(f \subseteq S\), we can use similar constructions from a path starting at a corner of the face (see Fig.~\ref{fig:star_convex_path_cases_faces}).
	This shows that \(k \cdot S\) is not star convex because no \(c_k \in V(k \cdot S)\) is a center node, for any \(k \in \Nats\), if \(S\) does not have a hole.
	In each case, we get a shortest path to another node that has at least one edge missing from \(k \cdot S\).
	
	\begin{figure}
		\centering
		\begin{minipage}{0.45\textwidth}
			\centering
			\includegraphics[width=\textwidth]{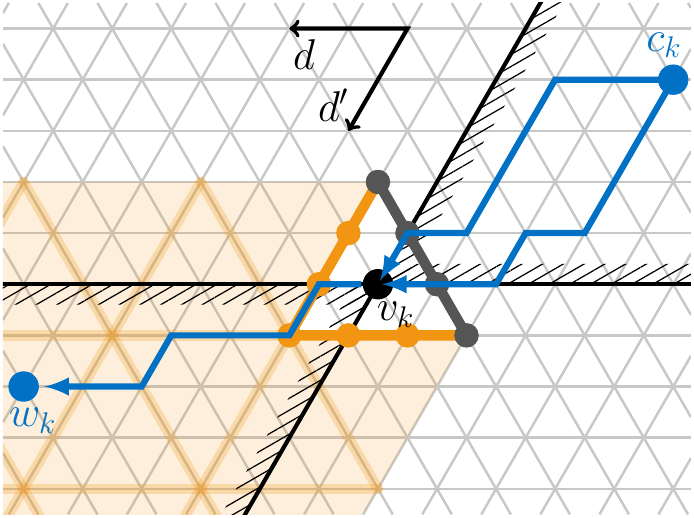}
			\subcaption{Shortest paths through a face.}
			\label{subfig:triangle_shortest_path_a}
		\end{minipage}
		\begin{minipage}{0.45\textwidth}
			\centering
			\includegraphics[width=\textwidth]{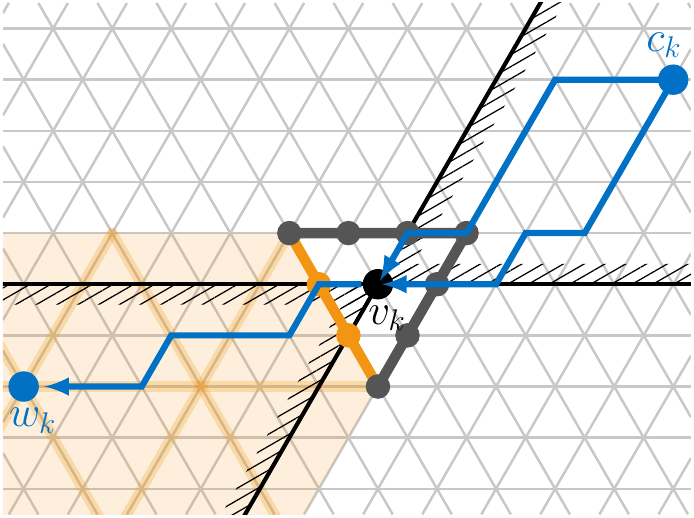}
			\subcaption{Alternative face orientation.}
			\label{subfig:triangle_shortest_path_b}
		\end{minipage}
		\caption{Construction of shortest paths through a hole of the shape at scale \(k = 3\).
			The triangle in the middle marks the scaled face \(k \cdot f\), where \(f\) is not contained in the shape.
			The shortest paths from some node \(c_k\) in the shape to node \(v_k\) in the hole use directions \(d\) and \(d'\) and are marked in blue.
			If they cannot be extended to another node \(w_k\) in the shape, then the shape cannot contain any element in the orange region, implying that the face \(f\) is not part of a hole because its region is not bounded by the shape.
			The black lines indicate the areas relative to \(v_k\) in which the nodes \(c_k\) and \(w_k\) can be located.
			By symmetry, the two depicted cases cover all possible positions of \(c_k\).}
		\label{fig:triangle_shortest_path}
	\end{figure}
	
	Finally, if \(S\) has a hole, consider any scale \(k \geq 3\).
	We construct the path with a missing edge as follows.
	Let \(f \subset \Reals^2\) with \(f \not\subseteq S\) be a face belonging to a hole of \(S\).
	Then, \(k \cdot f\) contains at least one node \(v_k \in \Geqt \setminus V(k \cdot S)\) (see Fig.~\ref{fig:triangle_shortest_path}).
	Thus, \(k \cdot S\) does not contain any of the edges incident to \(v_k\).
	Let \(c_k \in V(k \cdot S)\) be arbitrary and let the shortest paths from \(c_k\) to \(v_k\) use direction \(d\) and, optionally, \(d'\).
	Every shortest path from \(c_k\) to \(v_k\) uses one of the edges connecting \(v_k\) to \(v_k - \UVec{d}\) or \(v_k - \UVec{d'}\), both of which are not contained in \(k \cdot S\).
	If \(k \cdot S\) contains any node \(w_k\) that can be reached from \(v_k\) by a path using directions \(d\) and \(d'\) (or the other neighboring direction if \(d'\) was not used before), we can extend the path from \(c_k\) to \(v_k\) and obtain a shortest path from \(c_k\) to \(w_k\) that is missing at least one edge in \(k \cdot S\).
	If \(k \cdot S\) does not have any such node, the connected region of \(\Reals^2 \setminus k \cdot S\) that contains \(k \cdot f\) cannot be bounded by \(k \cdot S\), contradicting our assumption that \(f\) belongs to a hole of \(S\).
\end{proof}

Next, we show a \emph{fixed point} property that is particularly useful for scales \(k\) and \(k + 1\).

\begin{lemma}
	\label{lem:fixed_point_properties}
	Let \(S\) be a shape and \(k \in \Nats\) a scale such that there exists a \(t \in \Veqt\) with \(k \cdot S + t \subseteq (k+1) \cdot S\).
	Then, \(t\) must be in \(V(S)\) and we call \(t\) a \emph{fixed node} of \(S\).
	Furthermore, for every node \(v \in V(S)\), a shortest path from \(k \cdot v + t\) to \((k+1) \cdot v\) is also a shortest path from \(t\) to \(v\) and vice versa.
\end{lemma}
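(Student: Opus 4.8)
The statement has two parts. The plan is to first establish that any valid translation $t$ satisfying $k \cdot S + t \subseteq (k+1) \cdot S$ must be a node of $S$, and then to show the shortest-path equivalence. For the first part, I would note that the origin is a node of $S$ (a standing assumption in the paper), so $k \cdot 0 + t = t$ must lie in $V(k \cdot S + t) \subseteq V((k+1) \cdot S)$. This shows $t$ is a node covered by $(k+1) \cdot S$, but I still need $t \in V(S)$ specifically, not just $t \in V((k+1)\cdot S)$. The key is to exploit that $k$ and $k+1$ differ by exactly one scale unit: I would argue that the containment $k \cdot S + t \subseteq (k+1) \cdot S$ forces $t$ to be a lattice point of $S$ at the base scale, because the ``slack'' region $(k+1)\cdot S \setminus (k \cdot S)$ is a one-unit-thick border, and $t$ measures how the smaller copy is offset within the larger one.

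**The shortest-path equivalence.**

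For the second part, fix $v \in V(S)$ and compare the two endpoints $k \cdot v + t$ and $(k+1) \cdot v$. The crucial observation is that the difference between these endpoints equals $v - t$, the \emph{same} vector (up to sign) as the difference between the endpoints $t$ and $v$ of the second path. That is, $(k+1)\cdot v - (k \cdot v + t) = v - t = v - t$, while the second path runs from $t$ to $v$, a displacement of $v - t$ as well. Since two shortest paths in $\Geqt$ between points differing by the same displacement vector use exactly the same sequence of cardinal directions (shortest paths in the triangular grid are characterized by their net displacement, using at most two adjacent directions), a path is shortest between the first pair if and only if the correspondingly-shaped path is shortest between the second pair. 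I would make this precise by noting that a path from $k \cdot v + t$ to $(k+1)\cdot v$ and a path from $t$ to $v$ traverse the same displacement $v - t$, hence have the same length $\|v - t\|$ in grid metric, and the shortest-path condition depends only on this displacement.

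**Order of steps and the main obstacle.**

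Concretely, I would proceed as follows: (1) apply the containment to the origin node to get $t \in V((k+1)\cdot S)$; (2) refine this to $t \in V(S)$ using the single-unit scale gap, which is the part requiring care; (3) compute the displacement identity $(k+1)\cdot v - (k\cdot v + t) = v - t$ and observe the second path from $t$ to $v$ has the same displacement; (4) invoke the characterization of shortest paths in $\Geqt$ by net displacement to conclude the equivalence in both directions. The main obstacle I anticipate is step (2): showing $t \in V(S)$ rather than merely $t \in V((k+1)\cdot S)$. The intuition is that if $k \cdot S + t$ fits inside $(k+1)\cdot S$, then translating the smaller copy by $t$ keeps it aligned with the grid and within the one-unit border, which geometrically pins $t$ to a base-scale node; making this rigorous likely requires examining how $t$ interacts with the extremal (boundary) nodes of $S$ along each grid axis, ensuring $t$ cannot lie strictly between lattice nodes of $S$ without pushing part of $k \cdot S + t$ outside $(k+1)\cdot S$.
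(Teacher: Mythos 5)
Your treatment of the shortest-path claim is correct and is essentially the paper's argument: since \((k+1) \cdot v - (k \cdot v + t) = v - t\), the two endpoint pairs differ by the common translation \(k \cdot v\), and translating a path changes neither its length nor the displacement between its endpoints, so shortest paths correspond under translation by \(\pm k \cdot v\). (Minor quibble: shortest paths between two nodes need not use the same \emph{sequence} of directions, only the same multiset, but the translation argument does not depend on this.)

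The first claim, however, is where the real content of the lemma lies, and your proposal leaves it unproven: you establish only the trivial fact \(t \in V((k+1) \cdot S)\) and then defer the actual statement \(t \in V(S)\) to intuition. The intuition you offer would not survive being made rigorous. First, the slack region \(((k+1) \cdot S) \setminus (k \cdot S + t)\) is not a ``one-unit-thick border''; its thickness is proportional to the diameter of \(S\). Second, constraints coming from extremal nodes of \(S\) along the grid axes can only confine \(t\) to the hexagonal set of translations under which the bounding region of \(k \cdot S + t\) fits inside that of \((k+1) \cdot S\); they cannot pin \(t\) to \(V(S)\). A concrete failure: let \(S\) be a hexagonal annulus (a ring with a hole). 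Taking \(t\) to be the center of the hole satisfies every axis-extremal constraint, yet \(t \notin V(S)\), so no argument based only on boundary extents can prove the claim. The paper's proof uses an idea absent from your proposal: the point \(t_k := (k+1) \cdot t = k \cdot t + t\) is the unique common fixed point of the maps \(x \mapsto k \cdot x + t\) and \(x \mapsto (k+1) \cdot x\), i.e., the one location where the two copies are pinned together. If \(t \notin V(S)\), let \(v \in V(S)\) be a nearest node of \(S\), at grid distance \(\ell \geq 1\) from \(t\); then the node \(k \cdot v + t\) of \(k \cdot S + t\) lies at grid distance \(k \cdot \ell\) from \(t_k\), whereas every point of \((k+1) \cdot S\) lies at grid distance at least \((k+1) \cdot \ell > k \cdot \ell\) from \(t_k\), because distances from \(t\) scale by \(k+1\) and the minimum distance from \(t\) to the set \(S\) equals the minimum distance to its nodes. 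Hence \(k \cdot v + t \notin (k+1) \cdot S\), contradicting the assumed containment. Some such exploitation of the common fixed point together with the mismatch between the scale factors \(k\) and \(k+1\) is exactly what is missing from your step (2).
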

\begin{proof}
	Consider a shape \(S\), a scale \(k \in \Nats\) and a translation \(t \in \Veqt\) with \(k \cdot S + t \subseteq (k+1) \cdot S\) (see Fig.~\ref{fig:overlapping_node} for an illustration of the lemma).
	First, observe that the only node that is mapped to the same position by the two transformations of \(S\) is \(t\) itself:
	\[
	k \cdot x + t = (k+1) \cdot x
	\iff
	t = x
	\]
	Next, suppose \(t \notin V(S)\) and let \(v \in V(S)\) be a node of \(S\) with minimum grid distance \(\ell \in \Nats\) to \(t\).
	Let \(t_k = k \cdot t + t = (k+1) \cdot t\) be the transformed position of \(t\).
	Because our scaling operation for shapes is uniform, the grid distance between \(t_k\) and \(v_k = k \cdot v + t\) is \(k \cdot \ell\).
	However, the distance between \(t_k\) and any closest node of \((k+1) \cdot S\) (\eg{}, \(v_{k+1} = (k+1) \cdot v\)) is \((k+1) \cdot \ell > k \cdot \ell\).
	Therefore, \(v_k\) cannot be contained in \((k+1) \cdot S\), contradicting our assumption for \(t\).
	Thus, \(t \in V(S)\).
	
	\begin{figure}
		\centering
		\begin{minipage}[t]{0.35\textwidth}
			\centering
			\includegraphics[width=\textwidth]{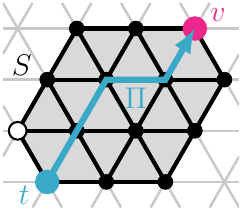}
			\subcaption{Example shape \(S\).}
			\label{subfig:overlapping_node_a}
		\end{minipage}
		\begin{minipage}[t]{0.5\textwidth}
			\centering
			\includegraphics[width=\textwidth]{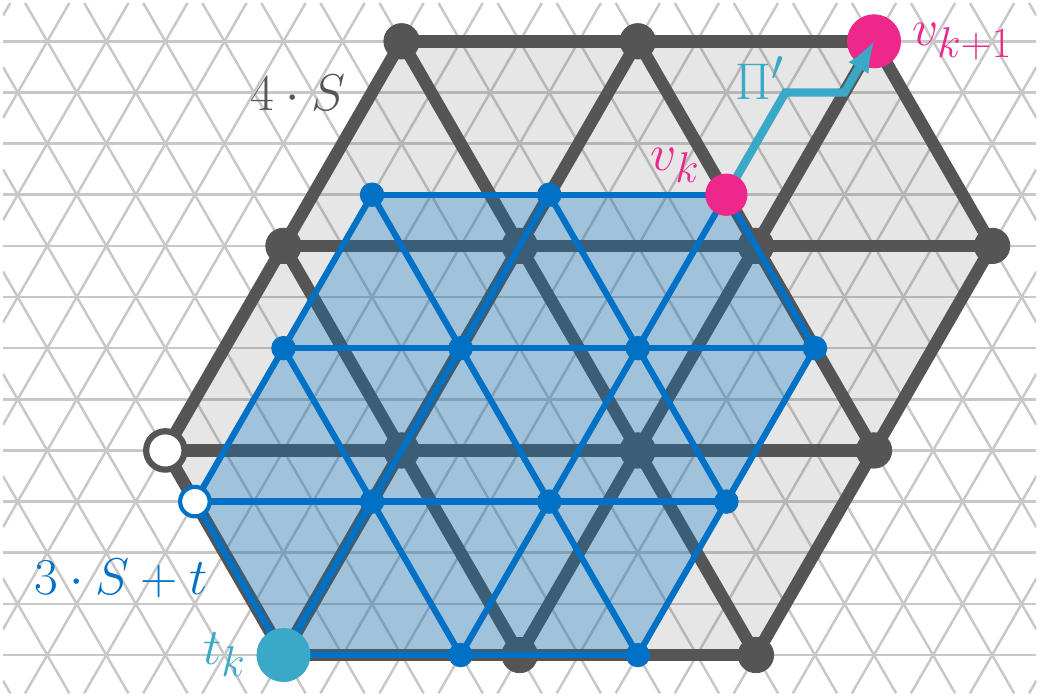}
			\subcaption{A placement with \(3 \cdot S + t \subseteq 4 \cdot S\).}
			\label{subfig:overlapping_node_b}
		\end{minipage}
		\caption{Illustration of fixed nodes and shortest paths between scaled nodes.
			(\subref{subfig:overlapping_node_a}) shows the shape \(S\) with two highlighted nodes \(t\) and \(v\) as well as a shortest path \(\Path\) between them.
			(\subref{subfig:overlapping_node_b}) shows the shapes \(3 \cdot S + t\) and \(4 \cdot S\), where \(3 \cdot S + t \subset 4 \cdot S\).
			The scaled nodes \(t_k = 3 \cdot t + t\), \(v_k = 3 \cdot v + t\) and \(v_{k+1} = 4 \cdot v\) are highlighted.
			The shortest path \(\Path'\) from \(v_k\) to \(v_{k+1}\) is a translated version of \(\Path\).}
		\label{fig:overlapping_node}
	\end{figure}
	
	
	For the shortest path property, consider some node \(v \in V(S)\) and let \(\Path\) be a shortest path from \(t\) to \(v\) in the grid.
	By translating this path by the vector \(k \cdot v\), we obtain the path \(\Path' = \Path + k \cdot v\), which starts at \(k \cdot v + t\), ends at \(k \cdot v + v = (k+1) \cdot v\) and is still a shortest path.
	The same construction works the other way around by subtracting \(k \cdot v\) instead.
\end{proof}

Finally, we eliminate the need for covering rotations by showing that for sufficiently large scales \(k\) and \(k+1\), \(k \cdot S^{(r)}\) does not fit into \((k+1) \cdot S\) for any \(r \in \Set{1,\ldots,5}\) unless \(S\) is rotationally symmetric.

\begin{definition}
	\label{def:symmetry}
	A shape \(S\) is called \emph{rotationally symmetric with respect to \(r \in \Set{1, 2, 3}\)} (or \emph{\(r\)-symmetric}) if there exists a translation \(t \in \Veqt\) such that \(S^{(r)} + t = S\).
\end{definition}
Note that \(r \in \Set{1, 2, 3}\) covers all possible rotational symmetries in the triangular grid, and \(1\)-symmetry is equivalent to \(2\)- and \(3\)-symmetry combined.
Furthermore, the translation \(t\) is unique.
Also note that \(1\)-symmetry is more commonly called \emph{\(6\)-fold} symmetry, \(2\)-symmetry is known as \(3\)-fold symmetry and \(3\)-symmetry is known as \(2\)-fold symmetry.

To prove the following lemma about rotations, we will use a simple set of conditions for a convex shape to fit into another, which we introduce first.
For that, observe that in the triangular grid, every convex shape has at most six sides and is uniquely defined by the lengths of these sides.
For example, for the single node, all side lengths are \(0\), for a regular hexagon, all side lengths are equal, and for a triangle of size \(\ell\), the side lengths alternate between \(0\) and \(\ell\).

\begin{lemma}
	\label{lem:convex_shape_inequalities}
	Let \(S_1, S_2\) be two convex shapes whose side lengths are \(a_1,\ldots,f_1 \in \Nats_0\) and \(a_2,\ldots,f_2 \in \Nats_0\), respectively, ordered in counter-clockwise direction around the shape, and \(a_i\) is the length of the bottom left side, parallel to the \(Z\)-axis.
	Then, \(S_1\) can be placed inside \(S_2\), \ie{}, there exists a \(t \in \Veqt\) such that \(S_1 + t \subseteq S_2\), if and only if the following inequalities hold:
	\begin{flalign}
		&&&a_1 + b_1 &&\leq &&a_2 + b_2 &&&&&&\label{eq:1}\\
		&&&b_1 + c_1 &&\leq &&b_2 + c_2 &&&&&&\label{eq:2}\\
		&&&c_1 + d_1 &&\leq &&c_2 + d_2 &&&&&&\label{eq:3}\\
		&&&a_1 + b_1 + c_1 &&\leq &&a_2 + b_2 + c_2 &&&&&&\label{eq:4}\\
		&&&b_1 + c_1 + d_1 &&\leq &&b_2 + c_2 + d_2 &&&&&&\label{eq:5}
	\end{flalign}
\end{lemma}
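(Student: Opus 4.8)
The plan is to coordinatize convex shapes by three slab widths and positions, reduce the containment relation to a small system of interval constraints, and read off \eqref{eq:1}--\eqref{eq:5} as the feasibility conditions of that system.

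First I would fix axial coordinates \(p = x\UVec{\E} + y\UVec{\NNE}\) and introduce the three linear functionals \(\lambda_X(p) = y\), \(\lambda_Y(p) = x\) and \(\lambda_Z(p) = x + y\), whose level sets are exactly the grid lines parallel to the \(X\)-, \(Y\)- and \(Z\)-axis; note the dependency \(\lambda_Z = \lambda_X + \lambda_Y\). Since a convex shape has only axis-parallel sides, it equals the intersection of the three integer ``slabs'' \(\alpha_X \le \lambda_X \le \beta_X\), \(\alpha_Y \le \lambda_Y \le \beta_Y\) and \(\alpha_Z \le \lambda_Z \le \beta_Z\), where \(\alpha_\bullet, \beta_\bullet\) are the (integer) minimum and maximum of the corresponding functional over the shape. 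Computing the six vertices of this hexagon as pairwise intersections of adjacent slab boundaries and reading off the side lengths yields the key identities \(a + b = \beta_Y - \alpha_Y\), \(b + c = \beta_Z - \alpha_Z\) and \(c + d = \beta_X - \alpha_X\) (the three slab widths), together with \(a + b + c = \beta_Z - \alpha_X - \alpha_Y\) and \(b + c + d = \beta_X + \beta_Y - \alpha_Z\). The same computation shows that \(e\) and \(f\) are forced by \(a, b, c, d\) through the closing conditions \(a + b = d + e\) and \(b + c = e + f\), which explains why only \(a, b, c, d\) appear in the statement.

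Next I would reduce the containment. Because \(S_2\) is the intersection of its three slabs and \(\lambda_j(S_1 + t) = [\alpha_j^{(1)}, \beta_j^{(1)}] + \lambda_j(t)\) for each axis \(j \in \Set{X, Y, Z}\) (the range being a full interval since \(S_1\) is connected), the relation \(S_1 + t \subseteq S_2\) holds if and only if \(\lambda_j(t) \in [\alpha_j^{(2)} - \alpha_j^{(1)}, \beta_j^{(2)} - \beta_j^{(1)}] =: I_j\) simultaneously for all three axes. Writing \(\mu_X = \lambda_X(t)\) and \(\mu_Y = \lambda_Y(t)\), so that \(\lambda_Z(t) = \mu_X + \mu_Y\), the existence of a suitable \(t\) becomes feasibility of the integer system \(\mu_X \in I_X\), \(\mu_Y \in I_Y\), \(\mu_X + \mu_Y \in I_Z\). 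Over the reals this is feasible exactly when each \(I_j\) is non-empty and the Minkowski sum \(I_X + I_Y\) meets \(I_Z\). Non-emptiness of \(I_X, I_Y, I_Z\) is equivalent to the three width comparisons \(W_j^{(1)} \le W_j^{(2)}\), which by the identities above are precisely \eqref{eq:3}, \eqref{eq:1} and \eqref{eq:2}; the two conditions \(\mathrm{lo}(I_X) + \mathrm{lo}(I_Y) \le \mathrm{hi}(I_Z)\) and \(\mathrm{lo}(I_Z) \le \mathrm{hi}(I_X) + \mathrm{hi}(I_Y)\) for \(I_X + I_Y\) to intersect \(I_Z\) rearrange, via the triple-sum identities, into exactly \eqref{eq:4} and \eqref{eq:5}. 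Finally, since all endpoints are integers and the constraint rows \((1,0)\), \((0,1)\), \((1,1)\) form a totally unimodular matrix, the bounded real-feasible region is an integral polytope, so a grid node \(t \in \Veqt\) exists whenever a real solution does; this upgrades real feasibility to the integer feasibility we actually need.

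The main obstacle I anticipate is the bookkeeping of the second paragraph: getting the vertex coordinates, the side-length formulas, and the resulting identities correct and consistently oriented, and making sure the correspondence lands on exactly the five stated inequalities rather than a permuted or sign-flipped variant. A secondary point to handle with care is degeneracy, where some of \(a, \ldots, f\) vanish and the hexagon collapses to a pentagon, trapezoid, parallelogram, triangle, line or node; there I would rely on the min/max definition of \(\alpha_\bullet, \beta_\bullet\) and the slab-intersection description (which remain valid even when some slab constraint is redundant) to confirm that the same identities and feasibility argument carry over unchanged.
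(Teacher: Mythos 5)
Your proposal is correct, but it takes a genuinely different route from the paper's proof. You argue algebraically: each convex grid shape is the intersection of three axis-parallel slabs, so \(S_1 + t \subseteq S_2\) holds iff \(\lambda_j(t)\) lies in an integer interval \(I_j\) for each axis, and the existence of a grid translation reduces to integer feasibility of the system \(\mu_X \in I_X\), \(\mu_Y \in I_Y\), \(\mu_X + \mu_Y \in I_Z\). I verified your slab-width and triple-sum identities against the paper's labeling (\(\Side{A}\) bottom-left, parallel to \(Z\), counter-clockwise): indeed \(a+b = \beta_Y - \alpha_Y\), \(b+c = \beta_Z - \alpha_Z\), \(c+d = \beta_X - \alpha_X\), \(a+b+c = \beta_Z - \alpha_X - \alpha_Y\) and \(b+c+d = \beta_X + \beta_Y - \alpha_Z\), so non-emptiness of \(I_Y, I_Z, I_X\) is exactly \eqref{eq:1}--\eqref{eq:3} and the two conditions for \(I_X + I_Y\) to meet \(I_Z\) are exactly \eqref{eq:4}--\eqref{eq:5}; both directions of the equivalence then come out simultaneously. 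The paper instead proceeds geometrically and constructively: necessity of \eqref{eq:1}--\eqref{eq:3} from distances between parallel sides and of \eqref{eq:4}--\eqref{eq:5} from a case analysis over which of \(\Side{A}_1, \Side{B}_1\) can be brought to touch \(\Side{A}_2, \Side{B}_2\); sufficiency by explicitly placing \(S_1\) with coinciding origin corners (or shifted by \(a_1 - a_2\)) and checking corner by corner, using convexity for the far corner. Your method buys uniform handling of degenerate shapes, no case analysis, and a structural explanation of why the characterization consists of three width constraints plus two coupling constraints; the paper's method buys elementarity (no appeal to integral polyhedra --- and note that total unimodularity is heavier than you need, since with integer interval endpoints a direct rounding argument already yields an integer solution) and an explicit witness translation \(t\), which makes the sufficiency direction self-contained.
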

\begin{proof}
	We refer to the sides of the two shapes as \(\Side{A}_i, \Side{B}_i\), \etc{} for \(i \in \Set{1, 2}\) and assume \Wlog{} that the origin of shape \(i\) lies at the position where sides \(\Side{A}_i\) and \(\Side{B}_i\) meet (see Fig.~\ref{subfig:convex_shape_outline}).
	To show the lemma, we have to prove that the inequalities are both necessary and sufficient.
	
	\subparagraph{Necessary condition:}
	Inequalities \eqref{eq:1}--\eqref{eq:3} relate the distances between the parallel sides of \(S_1\) and \(S_2\).
	If one of them does not hold, the distance between two parallel sides of \(S_1\) is greater than the distance between the two corresponding sides of \(S_2\), making it impossible for \(S_1\) to fit into \(S_2\) (see Fig.~\ref{subfig:convex_shape_proof_a}).
	Therefore, these inequalities are necessary.
	Next, if \(S_1\) fits into \(S_2\), we can always translate it so that it is still contained and \(\Side{A}_1\) intersects \(\Side{A}_2\), \(\Side{B}_1\) intersects \(\Side{B}_2\), or both (in which case \(t = 0\)).
	
	Case 1: Both sides can intersect (see Fig.~\ref{subfig:convex_shape_proof_b}).
	Then, we must have \(a_1 \leq a_2\) and \(b_1 \leq b_2\) because otherwise, at least one of the sides would reach outside of \(S_2\).
	Combining this with inequalities \eqref{eq:2} and \eqref{eq:3} yields the last two inequalities.
	
	Case 2: Only \(\Side{B}_1\) and \(\Side{B}_2\) can intersect (see Fig.~\ref{subfig:convex_shape_proof_c}).
	Then, we have \(b_1 \leq b_2\) and \(a_1 > a_2\).
	This already yields \eqref{eq:5} due to \(c_1 + d_1 \leq c_2 + d_2\).
	Now, we shift \(S_1\) parallel to \(\Side{B}_1\) such that it is still contained in \(S_2\) and \(\Side{F}_1\) intersects \(\Side{F}_2\).
	This always works because \(\Side{A}_2\) is not in the way, otherwise we would be in case 1 again.
	Then, the distance between the sides \(\Side{A}_1\) and \(\Side{A}_2\) is \(a_1 - a_2\) and the distance between \(\Side{D}_1\) and \(\Side{A}_2\) is \(c_1 + b_1 + a_1 - a_2\).
	Since this distance must be at most \(b_2 + c_2\) (the distance between parallel sides of \(S_2\)), we get \(a_1 + b_1 + c_1 - a_2 \leq b_2 + c_2 \iff a_1 + b_1 + c_1 \leq a_2 + b_2 + c_2\).
	Case 3, where only \(\Side{A}_1\) and \(\Side{A}_2\) can intersect, is analogous to case 2.
	
	\begin{figure}
		\centering
		\begin{minipage}[t]{0.4\textwidth}
			\centering
			\includegraphics[width=\textwidth]{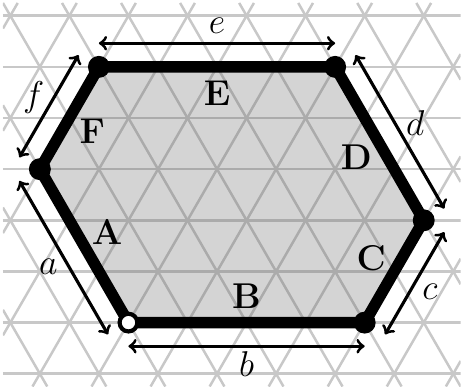}
			\subcaption{Convex shapes in the triangular grid.}
			\label{subfig:convex_shape_outline}
		\end{minipage}
		\hfill
		\begin{minipage}[t]{0.4\textwidth}
			\centering
			\includegraphics[width=\textwidth]{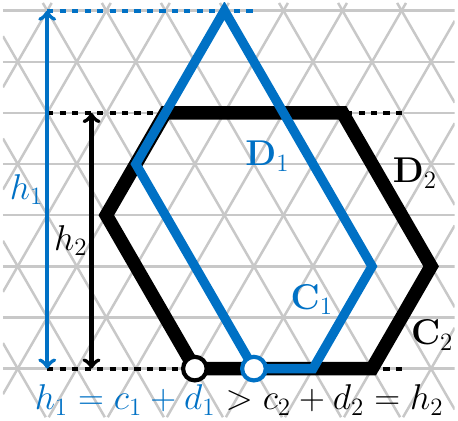}
			\subcaption{Violation of inequality~\eqref{eq:3}}
			\label{subfig:convex_shape_proof_a}
		\end{minipage}

        \bigskip
		\begin{minipage}[t]{0.35\textwidth}
			\centering
			\includegraphics[width=\textwidth]{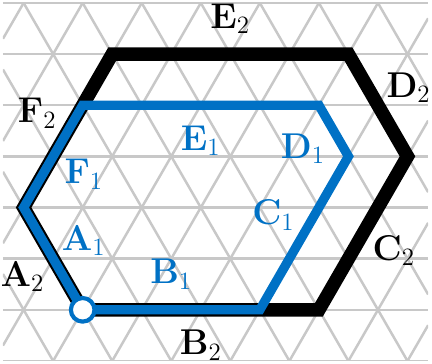}
			\subcaption{Simple case for \(S_1 \subseteq S_2\).}
			\label{subfig:convex_shape_proof_b}
		\end{minipage}
		\hfill
		\begin{minipage}[t]{0.4\textwidth}
			\centering
			\includegraphics[width=\textwidth]{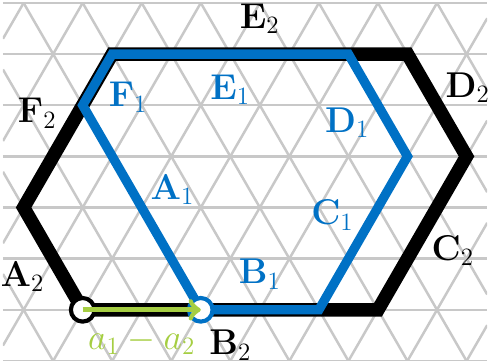}
			\subcaption{Harder case for \(S_1 + t \subseteq S_2\).}
			\label{subfig:convex_shape_proof_c}
		\end{minipage}
		\caption{Reference for the proof of Lemma~\ref{lem:convex_shape_inequalities}.
			(\subref{subfig:convex_shape_outline}) shows the general outline of a convex shape.
			We refer to the six sides as \(\Side{A}, \ldots, \Side{F}\) and their respective lengths as \(a, \ldots, f\).
			The side lengths can be \(0\), resulting in sharper corners.
			Observe that some constraints apply, \eg{}, \(a + f = c + d\).
			In the other figures, the outline of \(S_1\) is shown in blue and \(S_2\) is shown in black.
			(\subref{subfig:convex_shape_proof_a}) shows a situation where the third inequality is violated and \(S_1\) does not fit into \(S_2\).
			(\subref{subfig:convex_shape_proof_b}) shows a situation where the \(\Side{A}_i \cap \Side{B}_i\) corners of the two shapes (their origins) coincide and the side lengths \(a_i\) and \(b_i\) permit this placement.
			In (\subref{subfig:convex_shape_proof_c}), \(S_1\) has been moved to the right by \(a_1 - a_2\) steps because \(a_1 > a_2\) prevents the placement with aligning origins.
			Observe that \(S_1\) can always be moved left or right such that \(\Side{A}_1\) and \(\Side{A}_2\) or \(\Side{F}_1\) and \(\Side{F}_2\) intersect.}
		\label{fig:convex_shape_proof}
	\end{figure}
	
	\subparagraph{Sufficient condition:}
	Suppose all inequalities are satisfied.
	We place \(S_1\) at \(t = 0\) so that the \(\Side{A}_i \cap \Side{B}_i\) corners align (see Fig.~\ref{subfig:convex_shape_proof_b}).
	
	Case 1: \(a_1 \leq a_2\) and \(b_1 \leq b_2\).
	In this case, the corners \(\Side{A}_1 \cap \Side{B}_1\), \(\Side{A}_1 \cap \Side{F}_1\) and \(\Side{B}_1 \cap \Side{C}_1\) are in \(S_2\) because they lie directly on \(\Side{A}_2\) or \(\Side{B}_2\).
	Combining this with \eqref{eq:2} and \eqref{eq:3}, the \(\Side{C}_1 \cap \Side{D}_1\) and \(\Side{E}_1 \cap \Side{F}_1\) corners are also in \(S_2\), respectively (the latter is due to the constraint \(a + f = c + d\)).
	Because the sides \(\Side{D}_1\) and \(\Side{E}_1\) form a shortest path between the \(\Side{E}_1 \cap \Side{F}_1\) corner and the \(\Side{C}_1 \cap \Side{D}_1\) corner, and since \(S_2\) is convex, the \(\Side{D}_1 \cap \Side{E}_1\) corner must also be in \(S_2\).
	Since all corners of \(S_1\) are in \(S_2\), we have \(S_1 \subseteq S_2\).
	
	Case 2: \(b_1 \leq b_2\) but \(a_1 > a_2\).
	Then, we move \(S_1\) by \(a_1 - a_2\) positions to the right, parallel to \(\Side{B}_1\) (see Fig.~\ref{subfig:convex_shape_proof_c}).
	Now, the two sides \(\Side{A}_1\) and \(\Side{B}_1\) are in \(S_2\), unless \(b_2 - b_1 < a_1 - a_2\) or \(a_1 > a_2 + f_2\), which would violate \eqref{eq:1} or \eqref{eq:2}, respectively (the second one follows because \(a_2 + f_2 = c_2 + d_2\) and \(a_1 \leq c_1 + d_1\)).
	The \(\Side{C}_1 \cap \Side{D}_1\) corner's distance to the origin (the \(\Side{A}_2 \cap \Side{B}_2\) corner) is now \(a_1 - a_2 + b_1 + c_1\).
	The corner has not moved past the \(\Side{D}_2\) line since the distance between \(\Side{A}_2\) and \(\Side{D}_2\) is \(b_2 + c_2\) and \(a_1 + b_1 + c_1 \leq a_2 + b_2 + c_2\) implies \(a_1 - a_2 + b_1 + c_1 \leq b_2 + c_2\).
	Thus, \(\Side{C}_1\) is in \(S_2\).
	The \(\Side{E}_1 \cap \Side{F}_1\) corner is below the \(\Side{E}_2\) line due to \(a_1 + f_1 = c_1 + d_1 \leq c_2 + d_2 = a_2 + f_2\).
	It has to lie on \(\Side{F}_2\) because \(\Side{A}_1\) still intersects \(\Side{F}_2\) after the shift.
	Thus, the \(\Side{E}_1 \cap \Side{F}_1\) corner is in \(S_2\).
	Since five corner points of \(S_1\) are in \(S_2\) and \(\Side{D}_1\) and \(\Side{E}_1\) form a shortest path between two of those points, all points of \(S_1\) are in \(S_2\).
	Again, the third case (\(b_1 > b_2\) and \(a_1 \leq a_2\)) is analogous to the second case.
\end{proof}

\begin{lemma}
	\label{lem:rotation_symmetry}
	Let \(S\) be a shape and \(r \in \Set{1, 2, 3}\) be arbitrary.
	If \(S\) is not \(r\)-symmetric, then there is a scale \(k_0 \in \Nats\) such that for every \(k \geq k_0\), \(k \cdot S^{(r)}\) does not fit into \((k+1) \cdot S\).
\end{lemma}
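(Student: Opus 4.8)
The plan is to argue by contradiction with a compactness and fixed-point argument, extracting \(r\)-symmetry from the existence of containments at arbitrarily large scales. Suppose \(S\) is not \(r\)-symmetric but the conclusion fails, so that for every \(k_0\) some \(k \geq k_0\) admits a translation with \(k \cdot S^{(r)} + t \subseteq (k+1)\cdot S\). This produces a sequence \(k_1 < k_2 < \cdots\) with \(k_n \to \infty\) and translations \(t_n\) such that \(k_n \cdot S^{(r)} + t_n \subseteq (k_n+1)\cdot S\). Writing \(R\) for the counter-clockwise \(60^\circ\) rotation about the origin, so that \(S^{(r)} = R^r S\), the containment is equivalent to \(\phi_n(S) \subseteq S\) for the affine map \(\phi_n(q) = \frac{k_n}{k_n+1} R^r q + \frac{t_n}{k_n+1}\), because \(k_n R^r q + t_n \in (k_n+1)\cdot S\) holds iff \(\phi_n(q) \in S\). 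Each \(\phi_n\) is a contraction of \(\Reals^2\) with factor \(\frac{k_n}{k_n+1} < 1\), hence has a unique fixed point \(p_n\); iterating \(\phi_n\) from an arbitrary point of \(S\) and using both \(\phi_n(S) \subseteq S\) and the compactness of \(S\) shows \(p_n \in S\).

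Next I would pass to the limit. Since \(S\) is compact (a finite union of closed cells), a subsequence of the fixed points converges, say \(p_n \to p^\ast \in S\). Using \(\phi_n(p_n) = p_n\) I rewrite \(\phi_n(q) = p_n + \frac{k_n}{k_n+1} R^r(q - p_n)\); as the coefficient tends to \(R^r\) and \(p_n \to p^\ast\), the maps \(\phi_n\) converge (pointwise, indeed uniformly on compacts) to \(\psi(q) = R^r(q - p^\ast) + p^\ast\), which is the rotation by \(60r^\circ\) about \(p^\ast\). For each fixed \(q \in S\) we have \(\phi_n(q) \in S\) and \(\phi_n(q) \to \psi(q)\), and since \(S\) is closed this yields \(\psi(q) \in S\); hence \(\psi(S) \subseteq S\).

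To upgrade inclusion to equality I would exploit that \(\psi\) has finite order: \(\psi^m = \mathrm{id}\) with \(m = 6/\gcd(r,6) \in \Set{2,3,6}\) for \(r \in \Set{1,2,3}\). Then the chain \(S = \psi^m(S) \subseteq \psi^{m-1}(S) \subseteq \cdots \subseteq \psi(S) \subseteq S\) forces every inclusion to be an equality, so \(\psi(S) = S\). Unfolding the definition of \(\psi\), this says \(R^r S + t = S\) with \(t = (I - R^r)p^\ast\), i.e.\ \(S^{(r)} + t = S\) for a (real) translation \(t\). Finally, since \(S^{(r)}\) is a valid shape and \(S^{(r)} + t = S\) is valid, the stated characterization that a translated shape is valid if and only if the translation is a grid-node position forces \(t \in \Veqt\). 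Thus \(S\) is \(r\)-symmetric (Definition~\ref{def:symmetry}), contradicting the hypothesis, which proves the lemma.

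The steps I expect to be most delicate are two. First, I must guarantee that the contraction's fixed point lies in \(S\) itself rather than merely in \(\Reals^2\); this is where \(\phi_n(S) \subseteq S\) and closedness are essential. Second, and more importantly, the final step must convert the purely geometric symmetry \(\psi(S) = S\) about the possibly non-lattice point \(p^\ast\) into the \emph{integer} grid translation demanded by the definition of \(r\)-symmetry — the validity characterization of translated shapes is exactly what bridges this gap. The finite-order trick is what lets me avoid any area or measure bookkeeping, which would otherwise be awkward for degenerate shapes consisting only of edges and nodes. I note that the convex-shape inequalities of Lemma~\ref{lem:convex_shape_inequalities} give an alternative handle on the convex-hull shadow of the containment, but they only control \(\mathrm{conv}(S)\), which may be symmetric even when \(S\) is not; the compactness argument instead handles the general, possibly non-convex case uniformly.
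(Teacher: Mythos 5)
Your proof is correct, but it takes a genuinely different route from the paper's. The paper argues discretely: it first passes to the convex hull \(H\) of \(S\) and uses the side-length inequalities of Lemma~\ref{lem:convex_shape_inequalities} (after stripping the factors \(k\) and \(k+1\) for large \(k\)) to force, by explicit case analysis for \(r \in \Set{1,2,3}\), that \(H\) is \(r\)-symmetric; it then transfers symmetry from \(H\) down to \(S\) via the fixed-node Lemma~\ref{lem:fixed_point_properties}, showing that any node (and, for \(k \geq 3d\), any edge or face) missing from \(S\) forces a corresponding missing element under the in-place rotation of \(H\). You instead normalize each containment \(k_n \cdot S^{(r)} + t_n \subseteq (k_n+1)\cdot S\) into a contracting self-map \(\phi_n\) of the compact set \(S\), extract its fixed point \(p_n \in S\), pass to a subsequential limit to obtain an honest rotation \(\psi\) with \(\psi(S) \subseteq S\), upgrade to \(\psi(S) = S\) by finite order of \(\psi\), and finally land the translation in \(\Veqt\) via the paper's characterization that \(S^{(r)} + t\) is a valid shape only for grid translations \(t\); all of these steps check out (in particular the rewriting \(\phi_n(q) = p_n + \tfrac{k_n}{k_n+1}R^r(q-p_n)\) and the chain \(S = \psi^m(S) \subseteq \cdots \subseteq \psi(S) \subseteq S\) are sound, and closedness of \(S\) is all you need for both fixed points and limits). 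The trade-off: your argument is shorter, uniform in \(r\), and avoids both the convex-hull reduction and the case analysis, handling non-convex and degenerate (edge/node-only) shapes in one stroke; the paper's argument is effective, producing an explicit threshold \(k_0\) in terms of the hull's side lengths and diameter, whereas your compactness argument is purely existential. Since Theorem~\ref{theo:star_convex_self_contained} only needs the existence of \(k_0\), both proofs suffice for the paper's purposes.
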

\begin{proof}
	First, consider two convex shapes \(S, S'\) with side lengths \(a,\ldots,f\) and \(a',\ldots,f'\), respectively.
	By Lemma~\ref{lem:convex_shape_inequalities}, for every \(k \in \Nats\), \(k \cdot S'\) fits into \((k+1) \cdot S\) if and only if the inequalities \(k \cdot L_i \leq (k+1) \cdot R_i\) for \(1 \leq i \leq 5\) are satisfied, where \(L_1 = a' + b'\), \(R_1 = a + b\) \etc{}
	We show that for sufficiently large \(k\), we can remove the factors \(k\) and \((k+1)\) from the inequalities.
	For one direction, we already have \(L_i \leq R_i \implies k \cdot L_i \leq (k+1) \cdot R_i\).
	To get the other direction, choose \(k > R_i\), then we have
	\begin{align*}
		& k \cdot L_i \leq (k + 1) \cdot R_i\\
		\implies\qquad & L_i \leq \frac{k + 1}{k} R_i = R_i + \frac{R_i}{k} < R_i + 1,~\text{by choice of}~k\\
		\implies\qquad & L_i \leq R_i,~\text{since}~ L_i, R_i \in \Nats_0.
	\end{align*}
    Thus, for $k > R_1, \ldots, R_5$, we get $k \cdot L_i \leq (k+1) \cdot R_i \iff L_i \leq R_i$.
	
	Next, let \(S\) be an arbitrary shape and let \(H\) be its \emph{convex hull}, \ie{}, the smallest convex shape that contains \(S\).
	We will show that for every \(r \in \Set{1, 2, 3}\), \(H\) must be \(r\)-symmetric if \(k \cdot H^{(r)}\) fits into \((k+1) \cdot H\) for infinitely many scales \(k\).
	Let \(a,\ldots,f\) be the side lengths of \(H\) and observe that for every rotation \(r\), \(H^{(r)}\) is a convex shape with side lengths \(a', \ldots, f'\) that are a simple reordering of \(a, \ldots, f\).
	For \(r = 1\), we have \(a' = f\), \(b' = a\), \(c' = b\), \(d' = c\) \etc{}
	Suppose \(k \cdot H^{(r)}\) fits into \((k+1) \cdot H\) for infinitely many \(k\), then the resulting inequalities \(L_i \leq R_i\) for \(H^{(r)}\) and \(H\) must be satisfied, as shown above.
	The inequalities for \(r = 1\) are the following:
	\begin{align*}
		a' + b' = f + a = c + d                 \enspace&\leq\enspace a + b,\\
		b' + c' = a + b                         \enspace&\leq\enspace b + c,\\
		c' + d' = b + c                         \enspace&\leq\enspace c + d,\\
		a' + b' + c' = f + a + b = b + c + d    \enspace&\leq\enspace a + b + c,\\
		b' + c' + d' = a + b + c                \enspace&\leq\enspace b + c + d
	\end{align*}
	The first three inequalities imply \(a + b = b + c = c + d\), leading to \(a = c\) and \(b = d\).
	The last two inequalities imply \(a = d\), meaning that \(a = b = \ldots = f\) is the only way to satisfy all inequalities.
	Thus, for \(r = 1\), \(H\) must be a regular hexagon, which is \(1\)-symmetric.
	Fig.~\ref{fig:rotational_symmetry} illustrates this and the remaining two cases.
	
	\begin{figure}
		\centering
		\includegraphics[width=0.9\linewidth]{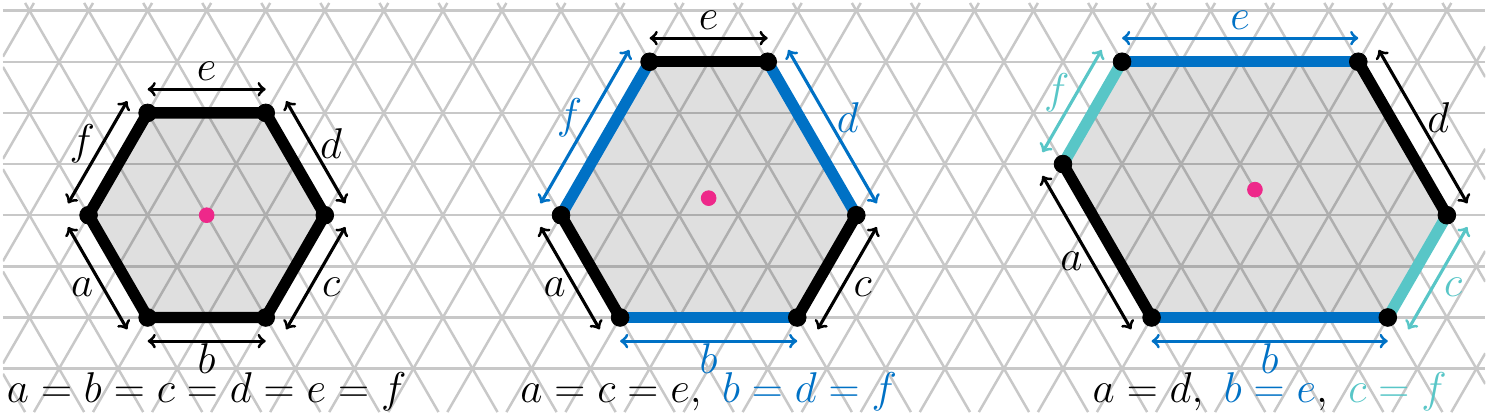}
		\caption{The three types of rotationally symmetric shapes in the triangular grid.
			The left shape is \(1\)-symmetric, the middle shape is \(2\)-symmetric and the right shape is \(3\)-symmetric.
			In each shape, the red dot indicates the center of rotation, which might not lie on a grid node.}
		\label{fig:rotational_symmetry}
	\end{figure}
	
	For \(r = 2\), the side lengths of \(H^{(r)}\) are \(a' = e\), \(b' = f\), \(c' = a\), \(d' = b\), \(e' = c\) and \(f' = d\), resulting in the following inequalities:
	\begin{align*}
		e + f = b + c           \enspace&\leq\enspace a + b,\\
		f + a = c + d           \enspace&\leq\enspace b + c,\\
		a + b                   \enspace&\leq\enspace c + d,\\
		e + f + a = a + b + c   \enspace&\leq\enspace a + b + c,\\
		f + a + b = b + c + d   \enspace&\leq\enspace b + c + d
	\end{align*}
	The first three inequalities imply \(a = c\) and \(b = d\), as before, but the last two inequalities are tautologies.
	However, observe that \(e = a + b - d = a\) and \(f = c + d - a = d\) due to the natural constraints on the side lengths.
	Thus, \(H\) has only two distinct side lengths that alternate along its outline, making it \(2\)-symmetric.
	
	Finally, for \(r = 3\), we get the side lengths \(a' = d\), \(b' = e\), \(c' = f\), \(d' = a\), \(e' = b\) and \(f' = c\).
	The resulting inequalities are:
	\begin{align*}
		d + e = a + b           \enspace&\leq\enspace a + b,\\
		e + f = b + c           \enspace&\leq\enspace b + c,\\
		f + a = c + d           \enspace&\leq\enspace c + d,\\
		d + e + f = b + c + d   \enspace&\leq\enspace a + b + c,\\
		e + f + a = a + b + c   \enspace&\leq\enspace b + c + d
	\end{align*}
	Since the first three inequalities add no information, we only get \(a + b + c = b + c + d\) from the last two inequalities, \ie{}, \(a = d\).
	From this, we can deduce \(b = e + d - a = e\) and \(c = a + f - d = f\), so the opposite sides of \(H\) must have the same lengths.
	Such a shape is \(3\)-symmetric.
	
	This already shows the lemma for convex shapes and the convex hull \(H\) of \(S\) in particular.
	Now, suppose that for some \(r \in \Set{1, 2, 3}\), \(k \cdot S^{(r)}\) fits into \((k+1) \cdot S\) for infinitely many \(k \in \Nats\).
	As a necessary condition, the same property holds for \(H\), so \(H\) must be \(r\)-symmetric.
	Let \(\pi: \Reals^2 \to \Reals^2\) be the composition of a rotation by \(r\) and translation by \(t \in \Veqt\) that maps \(H\) into itself, \ie{}, \(\pi(H) = H^{(r)} + t = H\) (see Fig.~\ref{subfig:rotation_mapping_a}).
	\(\pi\) is an in-place rotation that does not alter \(H\) but might change \(S\) if it is not \(r\)-symmetric.
	Let \(d\) be the diameter of \(H\) (the largest grid distance between any two nodes in \(H\)) and consider a scale \(k > d\) such that \(k \cdot S^{(r)} + t' \subseteq (k+1) \cdot S\) for some \(t' \in \Veqt\).
	Further, let \(\Map: \Reals^2 \to \Reals^2\) be the composition of a rotation, scaling and translation that transforms \(S\) into \(k \cdot S^{(r)} + t' \subseteq (k+1) \cdot S\).
	We get \(\Map(H) \subseteq (k+1) \cdot H\) and by Lemma~\ref{lem:fixed_point_properties}, for every \(v \in V(H)\), the distance between \((k+1) \cdot v\) and \(\Map(\pi^{-1}(v))\) equals the distance between two nodes in \(H\) and is thus bounded by \(d\).
	The lemma is applicable because \(\pi\) is a bijection on \(V(H)\) which makes \(\Map(\pi^{-1}(\cdot))\) a composition of a scaling by \(k\) and a translation.
	
	Now, let \(x \in V(H) \setminus S\) be a node in \(H\) that is not part of \(S\).
	If no such \(x\) exists, then either \(S = H\), in which case we are done, or \(S\) and \(H\) only differ in their edges and faces, which we will handle later.
	Since \(x \notin S\), \(S\) does not contain any edges or faces incident to \(x\).
	Thus, \((k+1) \cdot S\) does not contain any nodes with distance less than \(k+1\) to \((k+1) \cdot x\).
	Consider the node \(y = \pi^{-1}(x) \in V(H)\) that is rotated onto \(x\) (see Fig.~\ref{subfig:rotation_mapping_b}).
	As explained above, the distance between \(\Map(y)\) and \((k+1) \cdot S\) is at most \(d < k+1\), so we have \(\Map(y) \notin (k+1) \cdot S\).
	By our assumption that \(\Map(S) \subseteq (k+1) \cdot S\), it follows that \(y \notin V(S)\).
	By applying the same argument repeatedly, we find \(\pi^{-m}(x) \notin S\) for \(m \in \Set{1,\ldots,6}\).
	Since \(H\) is \(r\)-symmetric, this implies \(x \notin V(S) \iff \pi(x) \notin V(S)\), \ie{}, \(V(S)\) is \(r\)-symmetric.
	
	\begin{figure}
		\centering
		\begin{minipage}[t]{0.4\textwidth}
			\centering
			\includegraphics[width=\textwidth]{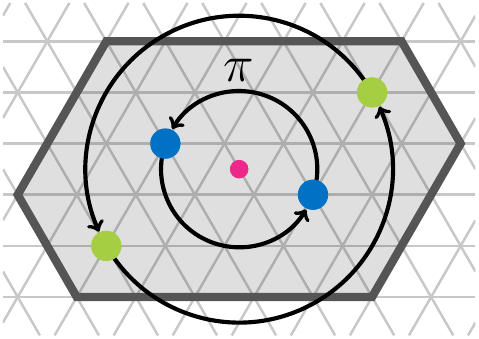}
			\subcaption{The in-place rotation \(\pi\).}
			\label{subfig:rotation_mapping_a}
		\end{minipage}
		\hfill
		\begin{minipage}[t]{0.5\textwidth}
			\centering
			\includegraphics[width=\textwidth]{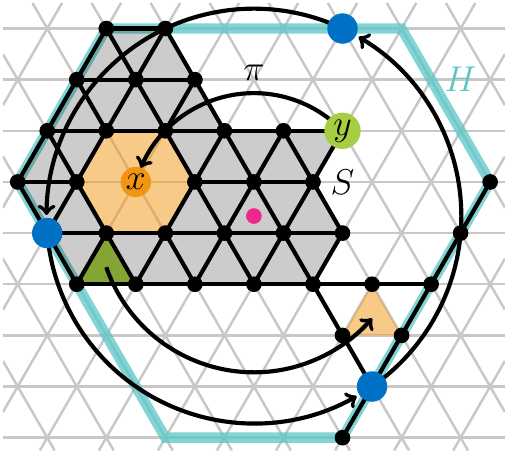}
			\subcaption{\(x \notin V(S)\) implies \(y = \pi^{-1}(x) \notin V(S)\).}
			\label{subfig:rotation_mapping_b}
		\end{minipage}
		\caption{Illustration of the proof of Lemma~\ref{lem:rotation_symmetry}.
			(\subref{subfig:rotation_mapping_a}) shows a \(3\)-symmetric shape with the in-place rotation mapping \(\pi\), demonstrated by the colored nodes and the arrows.
			(\subref{subfig:rotation_mapping_b}) shows a shape \(S\) inside its \(2\)-symmetric convex hull \(H\).
			The shape itself is not symmetric and its highlighted node \(y\) is rotated onto the missing node \(x \in V(H) \setminus S\).
			For sufficiently large \(k\), \(\Map(y)\) will be inside the scaled orange region around \((k+1) \cdot x\), which is not contained in \((k+1) \cdot S\).
			Similarly, the highlighted green face of \(S\) will eventually contain a node that is rotated into the missing orange face.}
		\label{fig:rotation_mapping}
	\end{figure}
	
	Finally, to handle edges and faces of \(S\) that might make it asymmetric, we further restrict the scale to be \(k \geq 3 \cdot d\).
	For such scales, each edge and face of \(S\) is represented by at least one node in \(k \cdot S\), to which we can apply the same arguments as above because the missing edge or face of \(S\) creates a sufficiently large hole in \((k+1) \cdot S\) in which such a node will be placed by the transformation \(\Map\).
\end{proof}

Using these lemmas, we can now prove the main theorem about self-contained and star convex shapes.

\begin{proof}[Proof (Theorem~\ref{theo:star_convex_self_contained})]
	First, let \(S\) be star convex with center node \(c\) and consider two scale factors \(k < k'\).
	By Lemma~\ref{lem:star_convex_helper}, \(S\) can be represented as
	\[
	S = \left(\bigcup\limits_{i=1}^{m_1} P_i \cup \bigcup\limits_{j=1}^{m_2} T_j\right) + c,
	\]
	where the \(P_i\) are parallelograms and the \(T_i\) are Minkowski sums of parallelograms with triangles.
	Then, we have \(k \cdot P_i \subseteq k' \cdot P_i\) and \(k \cdot T_j \subseteq k' \cdot T_j\) since each of the \(P_i\) and \(T_j\) is a convex shape.
	We choose \(t\) such that \(k \cdot c + t = k' \cdot c\), then each of the constituent shapes of \(k \cdot S + t\) is contained in its counterpart in \(k' \cdot S\).
	This already shows that every star convex shape is self-contained.
	
	Now, let \(S\) be a shape that is not star convex.
	We will show that \(S\) is not self-contained by finding a scale \(k \in \Nats\) such that for every \(t \in \Reals^2\) and \(r \in \Set{0, \ldots, 5}\), \((k \cdot S^{(r)} + t) \setminus ((k + 1) \cdot S) \neq \emptyset\).
	Using Lemma~\ref{lem:shortest_path_missing_edge}, we can assume that for every node \(c \in V(S)\), there exists a shortest path \(\Path\) to a node \(v \in V(S)\) such that at least one edge of \(\Path\) is not contained in \(S\).
	If this is not already the case for \(S\), we simply consider \(3 \cdot S\) as the new base shape.
	
	By Lemma~\ref{lem:rotation_symmetry}, we can find \(k_0\) large enough that no non-zero rotation of \(k \cdot S\) fits into \((k+1) \cdot S\) for any scale \(k \geq k_0\) unless \(S\) is rotationally symmetric.
	In this case, however, the rotated version of \(S\) is equivalent to \(S\), so we can disregard rotations altogether because they do not affect the existence of valid translations of \(k \cdot S\) in \((k+1) \cdot S\).
	
	\begin{figure}
		\centering
		\includegraphics[width=0.75\linewidth]{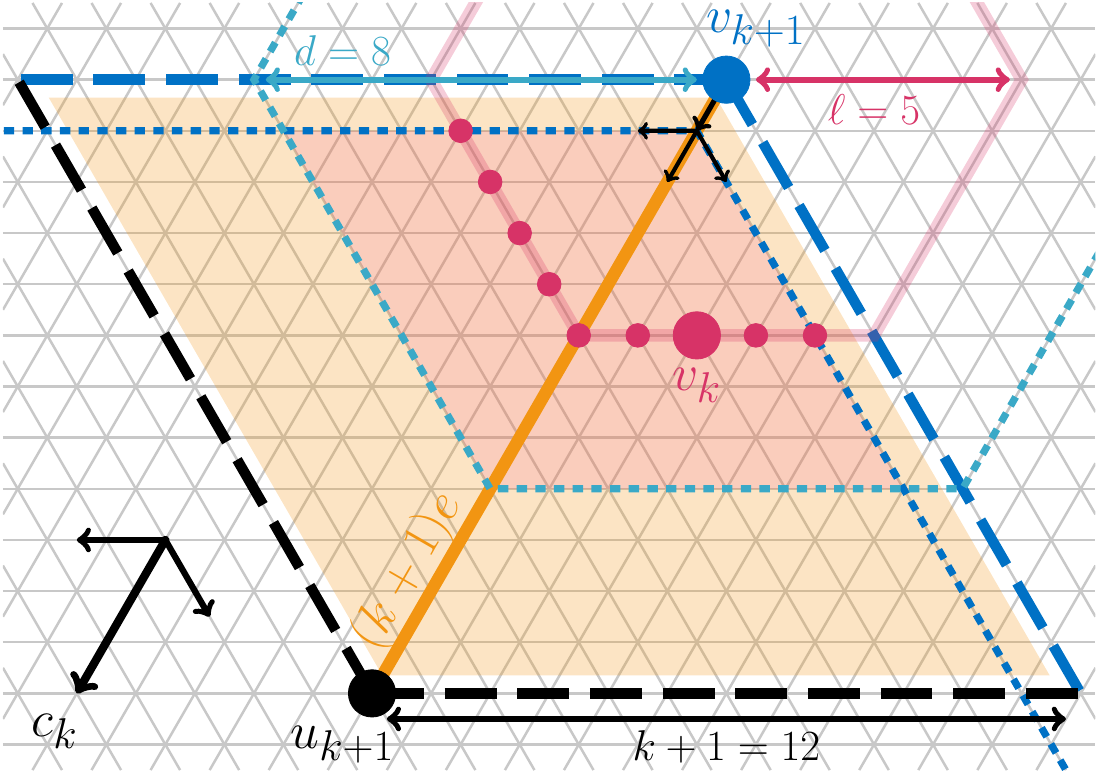}
		\caption{Illustration of the constraints forcing \(v_k\) to lie in a region unoccupied by \((k+1) \cdot S\) for \(k = 11\).
			The scaled unoccupied edge \((k+1) \cdot e\) and the empty parallelogram region it produces are highlighted in orange.
			The parallelogram's edges are drawn as dashed lines of length \(k+1\), incident to \(u_{k+1}\) and \(v_{k+1}\).
			At most these edges of the parallelogram can be occupied by \((k+1) \cdot S\).
			The distance between \(c\) and \(v\) resp.~\(v_k\) and \(v_{k+1}\) is \(\ell = 5 < k + 1\).
			In the bottom left corner, the directions used by all shortest paths from \(v_{k+1}\) to \(v_k\) (resp.~\(v\) to \(c\)) are shown, with the opposite direction of \(e\) emphasized because at least one such edge must be on every path.
			This prevents \(v_k\) from lying on an edge incident to \(v_{k+1}\), shown by the dashed blue lines and the small black arrows.
			Similarly, \(v_k\) cannot lie on any of the dashed black edges incident to \(u_{k+1}\) because it must be closer to \(v_{k+1}\).
			In particular, its distance to \(v_{k+1}\) is bounded by \(d = 8\), as indicated by the light blue dotted lines.
			The red lines show the nodes with distance exactly \(\ell\) to \(v_{k+1}\).
			By combining all constraints, \(v_k\) has to be one of the red nodes in the dark shaded area, none of which are occupied by \((k+1) \cdot S\).}
		\label{fig:star_convex_proof_b}
	\end{figure}
	
	Let \(H\) be the convex hull of \(S\) and let \(d \in \Nats\) be its diameter.
	Consider any scale \(k > d\) and some placement \(c \in \Veqt\) such that \(k \cdot S + c \subseteq (k+1) \cdot S\).
	By Lemma~\ref{lem:fixed_point_properties}, \(c \in V(S)\).
	As argued above, there is a node \(v \in V(S)\) such that a shortest path \(\Path\) from \(c\) to \(v\) has at least one edge that is not contained in \(S\).
	We choose \(v\) and \(\Path\) such that the last edge is missing, \Wlog{}
	Let \(u \in V(H)\) be the predecessor of \(v\) on \(\Path\), \ie{}, the missing edge \(e\) of \(\Path\) connects \(u\) and \(v\).
	Since \(S\) does not contain \(e\), it also does not contain the two incident faces.
	Thus, there is an area in the shape of a regular parallelogram with side length \(k+1\) whose diagonal is \((k+1) \cdot e\) and whose interior does not intersect \((k+1) \cdot S\), \ie{}, only its edges could be covered by edges of \((k+1) \cdot S\) (see Fig.~\ref{fig:star_convex_proof_b}).
	These edges are incident to \(v_{k+1} = (k + 1) \cdot v\) and \(u_{k+1} = (k+1) \cdot u\) and they lie on different axes than \(e\).
	By Lemma~\ref{lem:fixed_point_properties}, the grid distance between \(v_k = k \cdot v + c\) and \(v_{k+1}\) is the length of \(\Path\), which is bounded by \(d\).
	By the choice of \(k > d\), the distance between \(u_{k+1}\) and \(v_{k+1}\) (which is \(k+1\)) is therefore greater than the distance between \(v_k\) and \(v_{k+1}\).
	Now, observe that for any node on one of the parallelogram's edges incident to \(u_{k+1}\), the distance to \(v_{k+1}\) remains \(k+1\).
	Thus, \(v_k\) cannot lie on any of these two edges.
	
	Furthermore, recall from Lemma~\ref{lem:fixed_point_properties} that any shortest path from \(c\) to \(v\) is also a (translated) shortest path from \(v_k\) to \(v_{k+1}\).
	Because such a path contains at least one edge parallel to \(e\), every shortest path from \(v_{k+1}\) to \(v_k\) contains at least one edge in the opposite direction of \(e\).
	Therefore, since the edges of the empty parallelogram that are incident to \(v_{k+1}\) lie on different axes than \(e\), \(v_k\) cannot lie on these two edges either.
	The direction of \(e\) also implies that \(v_k\) must be on the side of the two edges that is closer to \(u_{k+1}\).
	
	Together, these constraints imply that \(v_k\) lies \emph{inside} the parallelogram that is not contained in \((k+1) \cdot S\), so the placement identified by \(c\) does not satisfy \(k \cdot S + c \subseteq (k+1) \cdot S\), contradicting our assumption.
	Since this works for every choice of \(c \in V(S)\), there is no placement of \(k \cdot S\) in \((k+1) \cdot S\).
\end{proof}



\section{The Snowflake Algorithm}
\label{sec:algorithm}

To solve the shape containment problem for snowflake shapes, we present a valid placement search procedure that combines the efficient primitives from Section~\ref{sec:helpers} and embed it into an appropriate scale factor search, employing binary search if possible.


\subsection{Valid Placement Search}
\label{subsec:valid_placement_search}


Let \(A\) be an amoebot structure storing a scale factor \(k\) in some binary counter and let \(S\) be a snowflake represented by a tree \(T = (V_T, E_T)\) with labelings \(\NodeType(\cdot)\), \(\NodeDir(\cdot)\) and \(\NodeLen(\cdot)\).
We assume that each amoebot has a representation of \(T\) and a topological ordering of \(T\) from the leaves to the root in memory.
Such an ordering always exists and can be computed in a preprocessing step.
We compute the valid placements of \(k \cdot S\) in \(A\) as follows:

\begin{enumerate}
    \item For every leaf \(v \in V_T\), perform the placement search for the scaled primitive \(\Line(\NodeDir(v), k \cdot \NodeLen(v))\) or \(\Tri(\NodeDir(v), k \cdot \NodeLen(v))\) represented by \(v\).
    Let \(C(v) \subseteq A\) be the resulting set of valid placements.

    \item Process each non-leaf node \(v \in V_T\) in the topological ordering as follows:
    \begin{enumerate}
        \item If \(\NodeType(v) = \TUnion\), then set \(C(v) = \bigcap_{i=1}^m C(u_i)\), where \(u_1, \ldots, u_m\) are the child nodes of \(v\).
    
        \item If \(\NodeType(v) = \TSum\), let \(u\) be the unique child of \(v\).
        Run the procedure from Lemma~\ref{lem:stretched_placements} to compute \(C(v)\) from \(C(u)\).
    
        \item If \(\NodeType(v) = \TShift\), let \(u\) be the unique child of \(v\).
        Run the procedure from Lemma~\ref{lem:shifted_shapes} to compute \(C(v)\) from \(C(u)\).
    \end{enumerate}
    
    \item Let \(r \in V_T\) be the root of \(T\).
    If \(C(r) \neq \emptyset\), terminate with success and report the valid placements as \(\VP(k \cdot S, A) = C(r)\), otherwise terminate with failure.
\end{enumerate}

To cover the remaining rotations, we simply repeat the procedure five more times and terminate with success if we found at least one valid placement for any rotation.

\begin{lemma}
	\label{lem:snowflake_valid_placements}
	Let \(A\) be an amoebot structure storing a scale \(k\) in some binary counter and let \(S\) be a snowflake.
	Given a description of a tree \(T\) of \(S\) with a topological ordering, the amoebots can compute \(\VP(k \cdot S^{(r)}, A)\) for \(r \in \{0, \ldots, 5\}\) within \(\BigO{\log \min \Set{k, n}}\) rounds.
\end{lemma}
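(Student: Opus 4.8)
The plan is to prove correctness by structural induction along the given topological ordering and then bound the running time by exploiting that the snowflake tree $T$ is a fixed, constant-size object. For correctness I would maintain the invariant that after node $v$ has been processed, $C(v) = \VP(k \cdot S_v, A)$. In the base case, every leaf is a line or triangle, and since $k \cdot \Line(\NodeDir(v), \NodeLen(v)) = \Line(\NodeDir(v), k \cdot \NodeLen(v))$ and likewise for triangles, the scaled primitive searches of Lemmas~\ref{lem:line_detection} and~\ref{lem:triangle_primitive} return exactly $\VP(k \cdot S_v, A)$. For the inductive step I would use that scaling commutes with each shape operation: $k \cdot \bigcup_i S_{u_i} = \bigcup_i (k \cdot S_{u_i})$, $k \cdot (S_u \oplus \Line(\NodeDir(v), \NodeLen(v))) = (k \cdot S_u) \oplus \Line(\NodeDir(v), k \cdot \NodeLen(v))$, and $k \cdot ((S_u + \NodeLen(v) \cdot \UVec{\NodeDir(v)}) \cup \Line(\NodeDir(v), \NodeLen(v))) = ((k \cdot S_u) + k \cdot \NodeLen(v) \cdot \UVec{\NodeDir(v)}) \cup \Line(\NodeDir(v), k \cdot \NodeLen(v))$. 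Together with the identity $\VP(S_1 \cup S_2, A) = \VP(S_1, A) \cap \VP(S_2, A)$ for union nodes, and with Lemmas~\ref{lem:stretched_placements} and~\ref{lem:shifted_shapes} for sum and shift nodes, each operation turns the already-correct children sets into the correct $C(v)$. Because the topological ordering guarantees every child is processed before its parent, the invariant propagates to the root, yielding $C(r) = \VP(k \cdot S, A)$. I would explicitly note that the precondition of Lemma~\ref{lem:shifted_shapes} — that $S_u$ be wide on the axis of $\NodeDir(v)$ — is precisely the width constraint imposed on shift nodes in Definition~\ref{def:snowflakes}, so the lemma always applies.

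For the running time, the crucial observation is that $T$ describes the fixed shape $S$ and hence has a constant number of nodes, each carrying a constant label $\NodeLen(v)$. Processing one node costs $\BigO{1}$ rounds for a union (a local intersection), $\BigO{\log \min \Set{k \cdot \NodeLen(v), n}}$ rounds for a leaf or a sum node by Lemmas~\ref{lem:line_detection},~\ref{lem:triangle_primitive} and~\ref{lem:stretched_placements}, and $\BigO{\NodeLen(v) \cdot \log \min \Set{k \cdot \NodeLen(v), n}}$ rounds for a shift node by Lemma~\ref{lem:shifted_shapes}. Since every $\NodeLen(v)$ is a constant, both the term $\log \min \Set{k \cdot \NodeLen(v), n}$ and the multiplicative factor $\NodeLen(v)$ collapse to $\BigO{\log \min \Set{k, n}}$, and summing over the constant number of nodes preserves this bound. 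The counter-space needed for the intermediate scales $k \cdot \NodeLen(v)$ is handled inside the primitive lemmas themselves (which run counters on all maximal segments and deactivate any that overflow), so no additional bookkeeping is required. Finally, each rotation $S^{(r)}$ is represented by the same tree with every direction label $\NodeDir(v)$ rotated by $r$, which the amoebots recompute locally; running the procedure six times therefore remains within $\BigO{\log \min \Set{k, n}}$.

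The main obstacle I expect is not the correctness argument, which follows cleanly from the commutation identities and the earlier lemmas, but making the asymptotics watertight: one must check that replacing $\ell = \NodeLen(v)$ by a constant genuinely turns $\log \min \Set{k \cdot \ell, n}$ into $\BigO{\log \min \Set{k, n}}$ in both regimes ($k \cdot \ell \leq n$ and $k \cdot \ell > n$), and that summing a constant number of such terms — including the $\NodeLen(v)$-fold repetition hidden in the shift case — does not secretly introduce an extra logarithmic factor. The point that keeps the total at $\BigO{\log \min \Set{k, n}}$ rather than $\BigO{\log^2 \min \Set{k, n}}$ is exactly that the tree size and all length labels are absolute constants fixed by $S$, independent of $A$ and $k$.
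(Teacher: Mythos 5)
Your proposal is correct and follows essentially the same route as the paper's proof: structural induction along the topological ordering with the invariant \(C(v) = \VP(k \cdot S_v, A)\), the intersection identity for union nodes, Lemmas~\ref{lem:line_detection}, \ref{lem:triangle_primitive}, \ref{lem:stretched_placements} and \ref{lem:shifted_shapes} for the remaining node types, and the observation that constant tree size and constant labels \(\NodeLen(v)\) collapse all per-node costs to \(\BigO{\log \min \Set{k, n}}\). Your explicit check that the width constraint on shift nodes in Definition~\ref{def:snowflakes} matches the precondition of Lemma~\ref{lem:shifted_shapes} is a detail the paper leaves implicit, but it does not change the argument.
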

\begin{proof}
	Let \(S\) be represented by the snowflake tree \(T = (V_T, E_T)\).
	Since \(T\) is a tree, a topological ordering always exists.
	All amoebots can store it in memory, encoded as part of their algorithm.
	We prove by induction on \(T\) that \(C(v) = \VP(k \cdot S_v, A)\) for every node \(v \in V_T\), where \(S_v\) is the snowflake represented by \(v\).
	Let \(v\) be a leaf node, then by Lemmas~\ref{lem:line_detection} and \ref{lem:triangle_primitive}, the amoebots compute \(C(v)\) using the line or triangle primitive, identifying all valid placements in \(\BigO{\log \min \Set{k \cdot \NodeLen(v), n}} = \BigO{\log \min \Set{k, n}}\) rounds.
	The side length \(k \cdot \NodeLen(v)\) can be computed on the same counter that stores \(k\) in constant time since \(\NodeLen(v)\) is a constant of the algorithm.
	We assume that the counter is on a segment of maximal length in \(A\) (\eg{}, using all maximal segments of \(A\) as counters).
	If there is not enough space to store \(k \cdot \NodeLen(v)\) on any counter, then \(k\) is too large for any placement and can be rejected immediately.
	
	If \(v\) is a union node, let \(u_1, \ldots, u_m\) be its child nodes and observe that \(C(u_i) = \VP(k \cdot S_{u_i}, A)\) has already been computed for \(1 \leq i \leq m\) due to the topological ordering.
	Since \(S_v\) is the union of all \(S_{u_i}\), \(k \cdot S_v\) is the union of all \(k \cdot S_{u_i}\) and we have
	\begin{align*}
		p \in \VP(k \cdot S_v, A)
		&\iff
		V(k \cdot S_v + p) \subseteq A
		\iff
		\forall\,u_i: V(k \cdot S_{u_i} + p) \subseteq A\\
		&\iff
        \forall\,u_i: p \in \VP(k \cdot S_{u_i}, A)
        \iff
		p \in \bigcap\limits_{i=1}^m \VP(k \cdot S_{u_i}, A).
	\end{align*}
	Thus, \(C(v) = \VP(k \cdot S_v, A)\) is computed correctly and in constant time since each amoebot \(p \in A\) already knows whether \(p \in C(u_i)\) for \(1 \leq i \leq m\) and can perform this step locally.
	
	Next, let \(v\) be a sum or a shift node with unique child node \(u\) and labeled with \(d = \NodeDir(v)\) and \(\ell = \NodeLen(v)\).
	If \(v\) is a sum node, we have \(k \cdot S_v = k \cdot (S_u \oplus \Line(d, \ell))\).
	Since \(C(u) = \VP(k \cdot S_u, A)\) has already been computed, the amoebots can determine \(C(v) = \VP(k \cdot S_v, A)\) within \(\BigO{\log \min \Set{k \cdot \ell, n}}\) rounds, by Lemma~\ref{lem:stretched_placements}.
	If \(v\) is a shift node, we have \(k \cdot S_v = k \cdot ((S_u + \ell \cdot \UVec{d}) \cup \Line(d, \ell))\).
	Using Lemma~\ref{lem:shifted_shapes}, the amoebots can use \(C(u)\) to compute \(C(v) = \VP(k \cdot S_v, A)\) within \(\BigO{\ell \cdot \log \min \Set{k \cdot \ell, n}}\) rounds.
	
	By induction, we have \(C(r) = \VP(k \cdot S, A)\), where \(r\) is the root node of the snowflake tree \(T\) representing \(S\).
	The procedure can be repeated a constant number of times to cover all rotations.
    Since the size of \(T\) and the values of \(\ell\) are constants encoded in the algorithm, the overall runtime is \(\BigO{\log \min \Set{k, n}}\) rounds.
\end{proof}


Using the techniques from the main snowflake algorithm, we now show how to find the valid placements of triangle shapes.

\begin{proof}[Proof (Lemma~\ref{lem:triangle_primitive})]
	Consider an amoebot structure \(A\) that stores a side length \(\ell\) in a binary counter.
	We show that the valid placements of triangles can be constructed in a similar way to those of snowflakes, without using triangle primitives.
	Let \(d_1, d_2 \in \Directions\) be the directions of the two edges incident to the origin of \(\Tri(d_1, \ell)\).
	The node set of the triangle can be written as
	\[
		V(\Tri(d_1, \ell)) = \SetBar{i \cdot \UVec{d_1} + j \cdot \UVec{d_2}}{0 \leq i \leq \ell, 0 \leq j \leq \ell - i}~(\text{with}~i, j \in \Nats_0).
	\]
	We define \(\ell' := \Floor{\ell / 2}\) and \(\ell'' := \ell' + (\ell \bmod 2)\).
	Thereby, we get \(\ell = \ell' + \ell''\).
	Now, we construct three parallelograms whose union covers the node set of the triangle.
	The first parallelogram is \(P_1 := \Line(d_1, \ell') \oplus \Line(d_2, \ell')\).
	We obtain the node set
	\[
		V(P_1) = \SetBar{i \cdot \UVec{d_1} + j \cdot \UVec{d_2}}{0 \leq i \leq \ell', 0 \leq j \leq \ell'}.
	\]
	Then, we have \(V(P_1) \subset V(\Tri(d_1, \ell))\) since \(j \leq \ell - i\) is maintained for all \(0 \leq i \leq \ell'\) due to \(\ell - i \geq \ell - \ell' \geq \ell'\).
	Let \(d_3\) be the direction at \(60^\circ\) counter-clockwise to \(d_2\), then we define the second parallelogram as \(P_2 := (\Line(d_1, \ell') \oplus \Line(d_3, \ell'')) + \ell'' \cdot \UVec{d_1}\).
	It covers the node set
	\[
		V(P_2) = \SetBar{i \cdot \UVec{d_1} + j \cdot \UVec{d_3}}{\ell'' \leq i \leq \ell, 0 \leq j \leq \ell''}.
	\]
	Due to \(\UVec{d_3} = \UVec{d_2} - \UVec{d_1}\), we can rewrite this as
	\begin{align*}
		V(P_2)
		&= \SetBar{(i - j) \cdot \UVec{d_1} + j \cdot \UVec{d_2}}{\ell'' \leq i \leq \ell, 0 \leq j \leq \ell''}\\
		&= \SetBar{i \cdot \UVec{d_1} + j \cdot \UVec{d_2}}{0 \leq j \leq \ell'', \ell'' - j \leq i \leq \ell - j}.
	\end{align*}
	Again, we have \(V(P_2) \subset V(\Tri(d_1, \ell))\) since \(j \leq \ell - i \iff i \leq \ell - j\).
	Finally, let \(d_3'\) be the opposite direction of \(d_3\), then we define the third parallelogram as a mirrored version of the second, \(P_3 := (\Line(d_2, \ell') \oplus \Line(d_3', \ell'')) + \ell'' \cdot \UVec{d_2}\).
	Its node set can be written as
	\begin{align*}
		V(P_3)
		&= \SetBar{i \cdot \UVec{d_2} - j \cdot \UVec{d_3}}{\ell'' \leq i \leq \ell, 0 \leq j \leq \ell''}\\
		&= \SetBar{(i - j) \cdot \UVec{d_2} + j \cdot \UVec{d_1}}{\ell'' \leq i \leq \ell, 0 \leq j \leq \ell''}\\
		&= \SetBar{i \cdot \UVec{d_2} + j \cdot \UVec{d_1}}{0 \leq j \leq \ell'', \ell'' - j \leq i \leq \ell - j}\\
		&= \SetBar{i \cdot \UVec{d_1} + j \cdot \UVec{d_2}}{0 \leq i \leq \ell'', \ell'' - i \leq j \leq \ell - i}.
	\end{align*}
	By symmetry, \(V(P_3) \subset V(\Tri(d_1, \ell))\) holds, so we have \(V(P_1) \cup V(P_2) \cup V(P_3) \subseteq V(\Tri(d_1, \ell))\).
	To show the opposite inclusion, consider some node \(v = i \cdot \UVec{d_1} + j \cdot \UVec{d_2} \in V(\Tri(d_1, \ell))\).
	If \(0 \leq i \leq \ell'\) and \(0 \leq j \leq \ell'\), then \(v \in V(P_1)\).
	If \(i > \ell'\), then \(i \geq \ell''\) and \(j < \ell - \ell' = \ell''\), so \(v \in V(P_2)\).
	Finally, if \(j > \ell'\), then \(j \geq \ell''\) and \(i \leq \ell - \ell'' = \ell' \leq \ell''\), so \(v \in V(P_3)\).
	Together, this implies that \(V(\Tri(d_1, \ell)) = V(P_1) \cup V(P_2) \cup V(P_3)\).
	Although not all edges and faces are covered by this construction, the node set at side length \(\ell\) is the same, so we get the same valid placements and can assume
	\[
		\Tri(d_1, \ell) =
		P_1
		\cup
		\left(P_2 \cup \Line(d_1, \ell'')\right)
		\cup
		\left(P_3 \cup \Line(d_2, \ell'')\right).
	\]
	Observe that the shapes \(P_1\), \(P_2 \cup \Line(d_1, \ell'')\) and \(P_3 \cup \Line(d_2, \ell'')\) are snowflakes that can be constructed without triangle shapes (\eg{}, \(P_2 \cup \Line(d_1, \ell'') = ((\Line(d_1, \ell') \oplus \Line(d_3, \ell'')) + \ell'' \cdot \UVec{d_1}) \cup \Line(d_1, \ell'')\)).
	However, we cannot simply apply the snowflake algorithm because their description size is variable and would cause the runtime to have a linear factor in \(\ell'\) and \(\ell''\).
	To avoid this problem, observe that the occurring lengths \(\ell'\) and \(\ell''\) only differ by at most \(1\), so we can almost treat the shape as if it had only length \(1\) primitives and was scaled by \(\ell'\).
	In the case where \(\ell\) is even, we have \(\ell' = \ell''\) and the snowflake procedure can be applied directly since
	\[
		((\Line(d_1, \ell') \oplus \Line(d_3, \ell'')) + \ell'' \cdot \UVec{d_1}) \cup \Line(d_1, \ell'')
		=
		\ell' \cdot ((\Line(d_1, 1) \oplus \Line(d_3, 1) + \UVec{d_1}) \cup \Line(d_1, 1)).
	\]
	Otherwise, if \(\ell'' = \ell' + 1\), observe that \(\Line(d_1, \ell') \oplus \Line(d_3, \ell'')\) has a minimal axis-width of \(\ell'\) on the axis parallel to \(d_1\).
	Therefore, the shift procedure used by Lemma~\ref{lem:shifted_shapes} can be applied first with distance \(\ell'\) and then again with distance \(1\), shifting the valid placement information a single position further.
	The same approach can be applied to \(P_3\) by symmetry.
	Overall, this takes \(\BigO{\log \min \Set{\ell', n}} = \BigO{\log \min \Set{\ell, n}}\) rounds.
	
	At the end, we intersect the valid placement sets to obtain the valid placements of \(\Tri(d_1, \ell)\).
	Note that even though the shapes we used for this construction depend on \(\ell\), the amoebots never need explicit representations of these shapes (which could exceed the constant memory constraint) to determine their valid placements.
	They simply apply the primitives with the inputs \(\ell'\) and \(\ell''\), which can be computed easily from \(\ell\).
\end{proof}


\subsection{Final Algorithm}
\label{subsec:final_algorithm}

To assemble the final algorithm, we embed the valid placement search procedure into a scale factor search.
Depending on whether the target shape is star convex, we either apply a binary search or use the triangle primitive to determine a small upper bound for a linear search.

\begin{theorem}
    \label{theo:star_convex_shapes_algo}
    Let \(A\) be an amoebot structure and \(S\) a snowflake shape.
    Given a tree representation of \(S\), the amoebots can compute \(k = \kmax(S, A)\) in a binary counter and determine \(\VP(k \cdot S^{(r)}, A)\) for all \(r \in \Set{0, \ldots, 5}\) within \(\BigO{\log^2 k}\) rounds if \(S\) is star convex and \(\BigO{K \log K}\) rounds otherwise, where \(k \leq K = \kmax(\Tri(\E, 1), A) = \BigO{\sqrt{n}}\).
\end{theorem}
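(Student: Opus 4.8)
The plan is to embed the snowflake valid placement search (Lemma~\ref{lem:snowflake_valid_placements}) into one of two scale factor searches, selected according to whether \(S\) is star convex. Since \(S\) is fixed and known to all amoebots through its tree \(T\), this choice, together with a topological ordering of \(T\), can be precomputed and encoded into the algorithm. In both cases I would first establish binary counters on \emph{all} maximal segments of \(A\) and use them in parallel, deactivating any counter whose capacity is exceeded during an operation, exactly as in Corollary~\ref{cor:largest_triangles}. Because every non-trivial connected shape contains an edge, \(\kmax(S,A)\) is bounded by \(\kmax(\Line(d,1),A)\), i.e.\ by the length of a longest segment of \(A\); hence at least one of these counters always has enough space to hold every scale value that arises.

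If \(S\) is star convex, then by Theorem~\ref{theo:star_convex_self_contained} it is self-contained, so I would run the binary search of Lemma~\ref{lem:binary_search} using Lemma~\ref{lem:snowflake_valid_placements} as the valid placement search in each iteration (the degenerate case \(\kmax = 0\) is detected directly by the first query at scale \(1\)). The binary search performs \(\BigO{\log k}\) iterations and only ever queries scales \(k' \leq 2\kmax\), and each such query costs \(\BigO{\log \min\Set{k', n}} = \BigO{\log k}\) rounds since \(k \leq K = \BigO{\sqrt{n}}\). Multiplying gives the claimed \(\BigO{\log^2 k}\) bound, and a final placement search at scale \(\kmax\) records \(\VP(\kmax \cdot S^{(r)}, A)\) for every \(r\) within a dominated \(\BigO{\log k}\) rounds.

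If \(S\) is not star convex, binary search is unavailable (Theorem~\ref{theo:star_convex_self_contained} again), so I would use the linear search of Lemma~\ref{lem:linear_search} with the upper bound \(K = \kmax(\Tri(\E,1),A)\), computed and stored via Corollary~\ref{cor:largest_triangles} in \(\BigO{\log^2 K}\) rounds. The crucial point is that \(K\) really is a valid upper bound, i.e.\ that every non-star-convex snowflake contains a face. I would prove the contrapositive by structural induction on \(T\), showing that a face-free snowflake is a union of line segments all lying on grid axes through the origin, and is therefore star convex with the origin as center. In the induction, a line node is already such a segment; a face-free sum or shift node forces its (face-free) child to consist only of edges parallel to the node's direction -- using the observation after Definition~\ref{def:wide_shapes} that a \(W\)-wide shape has a face unless all its edges are \(W\)-parallel, for the shift case -- which by the induction hypothesis collapses the child, and hence the whole node, to a single axis segment through the origin; and a union node simply unions such configurations. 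Given that \(S\) thus contains a face, a translate of \(\Tri(d,1)\) lies in \(S\), so \(\kmax(S,A) \leq \kmax(\Tri(d,1),A) = K\), and \(K = \BigO{\sqrt{n}}\) follows because \(\Tri(\E,K)\) has \(\BigTheta{K^2}\) nodes, all of which must lie in \(A\). Running the linear search over scales \(K, K-1, \ldots, \kmax\) then costs at most \(K\) iterations of \(\BigO{\log K}\) rounds each, giving \(\BigO{K \log K}\) and subsuming the \(\BigO{\log^2 K}\) preprocessing.

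I expect the main obstacle to be the structural lemma that non-star-convex snowflakes must contain a face. The key subtlety is that this statement \emph{fails} for general face-free shapes -- the lower-bound shape of Section~\ref{sec:lower_bound} is face-free yet not star convex -- and is rescued only by the snowflake grammar of Definition~\ref{def:snowflakes}, in particular the minimal-axis-width constraint on shift nodes, which is precisely what prevents thin, offset arms from being assembled without an intervening face. Once this is established, the remaining work -- verifying the loop invariants and tallying the per-iteration costs (\(\BigO{\log k}\) and \(\BigO{\log K}\)) against the iteration counts (\(\BigO{\log k}\) and \(\BigO{K}\)) -- is routine and follows directly from Lemmas~\ref{lem:binary_search}, \ref{lem:linear_search} and \ref{lem:snowflake_valid_placements}.
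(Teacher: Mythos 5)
Your proposal is correct and follows essentially the same route as the paper: binary counters on all maximal segments used in parallel, binary search (justified by self-containment via Theorem~\ref{theo:star_convex_self_contained}) combined with the snowflake placement search of Lemma~\ref{lem:snowflake_valid_placements} in the star convex case, and a linear search bounded by \(K = \kmax(\Tri(\E,1),A)\) from Corollary~\ref{cor:largest_triangles} otherwise. The structural induction you sketch for the key claim that face-free snowflakes are unions of lines through the origin (hence star convex) is a more detailed rendering of the paper's brief justification -- that Minkowski sums of an edge with a non-parallel line create faces -- so the two proofs do not differ in substance.
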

\begin{proof}
	The amoebots first establish binary counters on all maximal segments in \(A\), for all axes.
	At least one of these will be large enough to store \(k = \kmax(S, A)\) as long as \(S\) is non-trivial.
	In the following, they use all of these counters simultaneously and deactivate every counter exceeding its memory during an operation.
	
	Consider the case where \(S\) is star convex.
	By Lemma~\ref{lem:binary_search}, combined with Lemma~\ref{lem:snowflake_valid_placements}, the amoebots can compute \(\kmax\) using a binary search and find all valid placements at all six rotations within \(\BigO{\log^2 k}\) rounds.
	Now, let \(S\) be a snowflake that is not star convex.
	In this case, \(S\) must contain at least one triangular face because all snowflakes without faces are unions of lines meeting at the origin, which are star convex (observe that the Minkowski sum of an edge with a line on a different axis always contains some faces).
	Then, \(K = \kmax(\Tri(\E, 1), A)\) is an upper bound for \(\kmax(S, A)\).
	The amoebots can compute \(K\) within \(\BigO{\log^2 K}\) rounds and store it in binary counters, according to Corollary~\ref{cor:largest_triangles}.
	A simple linear search for \(\kmax\) yields the runtime of \(\BigO{K \log K}\) by Lemma~\ref{lem:linear_search}.
	\(K = \BigO{\sqrt{n}}\) follows from the fact that the number of nodes covered by \(k \cdot \Tri(\E, 1)\) grows quadratically with \(k\).
\end{proof}

To make our shape containment algorithms more useful for subsequent procedures, we also give an algorithm that constructs one placement by identifying the amoebots it covers.

\begin{theorem}
    \label{theo:shape_construction}
    Let \(S\) be an arbitrary shape and \(A\) an amoebot structure that stores \(1 \leq k \leq \kmax(S, A)\) in a binary counter and has a description of \(S\).
    Let \(p \in \VP(k \cdot S^{(r)}, A)\) for a rotation \(r \in \Set{0, \ldots, 5}\) be chosen by the amoebots.
    Then, the amoebots can compute \(V(k \cdot S^{(r)} + p)\) within \(\BigO{\log k}\) rounds and at the end of the procedure, each amoebot knows which node, edge or face of the original shape it represents.
\end{theorem}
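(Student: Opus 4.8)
The plan is to mark the amoebots of \(V(k \cdot S^{(r)} + p)\) feature by feature, exploiting that the description of \(S\) is a constant of the algorithm. Since the amoebots share a compass, handling the rotation \(S^{(r)}\) merely amounts to rotating every direction in the description by \(r\) steps, which I would do silently throughout. I would run the construction as a fixed sequence of \(\BigO{1}\) phases --- one for each node, each edge and each face of \(S\) --- in an order baked into the algorithm. Using a global circuit for synchronization, all amoebots always know which feature is currently being processed, so any amoebot that gets marked in a phase immediately knows which node, edge or face it represents. Because \(p\) is a valid placement, every node of \(V(k \cdot S^{(r)} + p)\) is occupied, so all segments and regions traversed below really exist in \(A\).

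For the nodes and edges I would first observe that the \(1\)-skeleton \(G_S = (V(S), \{\text{edges of } S\})\) is connected: since \(S\) is a connected subset of \(\Reals^2\) and every edge or face carries its incident vertices, any two touching parts of \(S\) share a common grid node. Root a spanning tree of \(G_S\) at the origin and process the vertices so that each vertex is handled after its tree-parent. If \(u\) is the already-identified parent of \(v\) and \(\delta \in \Directions\) is the (constant) direction from \(u\) to \(v\), then the scaled edge is the \(\delta\)-segment of length \(k\) from \(p + k u\) to \(p + k v\), which lies entirely in \(A\). I would form the chain along this segment with start \(p + k u\) and apply Lemma~\ref{lem:pasc_cutoff} with \(d = k\): the amoebot at distance \(k\) is the scaled vertex \(p + k v\), and the amoebots at distances \(1, \dots, k-1\) are exactly the interior of this edge. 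Each such walk costs \(\BigO{\log k}\) rounds, and since \(S\) has \(\BigO{1}\) vertices and edges, all vertices and all edge interiors (for tree and non-tree edges alike) are identified in \(\BigO{\log k}\) rounds.

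The faces are the main obstacle, since a scaled face is a triangle of side \(k\) containing \(\Theta(k^2)\) amoebots that I want to mark in \(\BigO{1}\) rounds. Here I would use that a face contains its three bounding edges, so after the previous step the triangle's boundary is already marked. Fix the triangle's bottom-edge direction \(d_1\) as the row direction; call the edge leaving the bottom corner in the second direction the \emph{left edge} and the remaining edge the \emph{hypotenuse}. By convexity, the intersection of the triangle with any line parallel to \(d_1\) is a single segment running from the left edge to the hypotenuse. I would therefore establish circuits along all \(d_1\)-lines that are broken on the \(-d_1\) side of every left-edge amoebot and on the \(+d_1\) side of every hypotenuse amoebot; then each triangle row becomes its own isolated circuit segment. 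A single beep from the left-edge amoebots fills exactly these rows, and every receiving amoebot marks itself as belonging to this face. Processing the \(\BigO{1}\) faces one at a time rules out interference between rows of different faces on shared grid lines, so all faces are filled in \(\BigO{1}\) rounds.

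Putting the three steps together yields the claimed bound of \(\BigO{\log k}\) rounds, dominated by the PASC-cutoff walks for the vertices and edges, while the face fill and all per-phase bookkeeping are \(\BigO{1}\). The step I expect to require the most care is the face fill: one must argue that breaking the \(d_1\)-circuits exactly at the (already marked) left edge and hypotenuse confines each beep to a single triangle row, and that filling the constant number of faces sequentially avoids collisions. This is precisely where the convexity of the scaled face and the validity of \(p\) --- which guarantees that all row amoebots exist --- are essential.
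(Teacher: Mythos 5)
Your proposal is correct, and its vertex/edge phase is essentially the paper's proof: the paper likewise orders the edges of \(S\) so that each one starts at an already-identified endpoint (your spanning-tree ordering is exactly such a sequence) and walks each scaled edge with the PASC cutoff of Lemma~\ref{lem:pasc_cutoff} in \(\BigO{\log k}\) rounds, identifying the far vertex and the edge interior. Where you genuinely diverge is the face fill. The paper has every edge representative build \emph{two} partition sets, one facing each incident face, while all remaining amoebots fuse their pins; the marked skeleton then acts as a two-sided wall partitioning the structure into one circuit per enclosed region, and in one reserved round per face the bounding edge amoebots beep inward, flooding exactly that region. Your version instead sweeps each triangle row by row: circuits along lines parallel to one side, broken at the two transversal sides, with the left-edge amoebots beeping across. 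Both run in \(\BigO{1}\) rounds per face and both hinge on the boundary being marked first. The paper's region flood sets up a single pin configuration serving all faces and does not use convexity of the face at all, but it implicitly has to ensure beeps cannot leak around the scaled corners, which is what the two-sided wall partition sets accomplish. Your row construction sidesteps any leakage argument, since each beep is confined to a one-dimensional segment with marked breakpoints at both ends, but it relies on the convexity of the triangular face (each row meets it in a single segment) and on processing the \(\BigO{1}\) faces strictly sequentially. One cosmetic fix: only amoebots not yet assigned to a node or edge should adopt the face label, since the hypotenuse amoebots also receive the row beep; the paper makes the same restriction explicitly.
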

\begin{proof}
	Let \(S\) be arbitrary and consider an amoebot structure \(A\) that stores a scale \(1 \leq k \leq \kmax(S, A)\) in a binary counter.
	Suppose a valid placement \(p \in \VP(k \cdot S^{(r)}, A)\) has been elected and all amoebots know \(S\) and \(r\).
	
	Now, let \(V\) be the set of grid nodes covered by \(S\), let \(E\) be the set of edges and \(F\) the set of faces.
	The number of elements \(\Abs{V} + \Abs{E} + \Abs{F}\) is constant, so each amoebot can store the information which of the parts it belongs to, if any.
	Consider a sequence of directed edges \(E' = (e_1, \ldots, e_m)\) such that every edge \(e \in E\) is represented by a directed edge in \(E'\) and each \(e_i = (u, v)\) starts at a node \(u\) that is either the origin or has already been reached by \(e_j = (w, u)\) with \(j < i\).
	Such a sequence always exists because \(S\) is connected.
	
	We construct the set \(V(k \cdot S^{(r)} + p)\) by traversing the edges in \(E'\).
	For simplicity, let \(r = 0\); the other rotations are handled analogously.
	First, \(p\) categorizes itself as the origin node in \(V\).
	Consider the next edge \(e = (u, v)\) in \(E'\) and assume that the amoebot \(q = p + k \cdot u\) has already identified itself as \(u\).
	Let \(d\) be the direction from \(u\) to \(v\), then all amoebots \(q + i \cdot \UVec{d}\) for \(1 \leq i \leq k\) must exist because \(p\) is a valid placement of \(k \cdot S^{(r)}\).
	The amoebots establish maximal segments on the axis parallel to \(d\) and \(q\) beeps on a circuit facing direction \(d\) to alert this part of the segment.
	These alerted amoebots now run the \PASC{} algorithm with \(q\) as the start point while \(k\) is transmitted on the global circuit.
	Each amoebot \(q_i = q + i \cdot \UVec{d}\) for \(1 \leq i \leq m\) receives the bits of \(i\) and compares \(i\) to \(k\).
	The amoebots \(q_i\) with \(i < k\) categorize themselves as representatives of the edge \(e\) and the unique amoebot with \(i = k\) categorizes itself as \(v\).
	This takes \(\BigO{\log k}\) rounds (Lemma~\ref{lem:pasc_cutoff}) and covers all amoebots representing the edge \(e\).
	We repeat this for all edges in \(E'\).
	Due to the construction of \(E'\) and since \(p\) is a valid placement, the procedure covers all edges and correctly identifies the representatives of all nodes and edges of \(S\).
	
	Finally, to identify the amoebots in the faces, we construct a circuit for each face.
	Each amoebot representing an edge of \(S\) establishes two partition sets, one for each face incident to the edge (regardless of whether it is contained in \(S\)).
	Each of these partition sets connects the pins facing two adjacent neighbors of the amoebot.
	All other amoebots connect all of their pins in one partition set as if to establish a global circuit.
	Now, in the following \(\Abs{F}\) rounds, the edge amoebots beep on the partition sets facing the incident faces in \(F\), reserving one round for each face.
	Because they know which edge they belong to, the amoebots know in which round to beep.
	The amoebots not assigned to any elements of \(S\) so far identify themselves with the corresponding face if they receive this beep.
	Since \(\Abs{F}\) is constant, this takes only a constant number of rounds.
\end{proof}

Note that after running any shape containment algorithm that identifies a scale and a set \(C \subseteq A\) of valid placements, the amoebots can run a leader election within \(\BigO{\log \Abs{C}}\) rounds, \whp{}~\cite{feldmann2022coordinating}, and then construct the shape for the selected placement using Theorem~\ref{theo:shape_construction}.


\section{Conclusion and Future Work}
\label{sec:conclusion}

In this paper, we introduced the shape containment problem for the amoebot model of programmable matter and presented first sublinear solutions using reconfigurable circuits.
We showed that for some shapes, there is a lower bound of \(\BigOmega{\sqrt{n}}\) rounds due to the arbitrary distribution of valid and invalid placements, even if the maximum scale is already known.
Motivated by fast methods of transferring information using circuits, we constructed the class of \emph{snowflake} shapes that can be solved in sublinear time.
For the subset of shapes that are \emph{star convex}, we even showed how to solve the problem in polylogarithmic time and proved that binary search is not generally applicable to other shapes.

It would be interesting to explore whether the lower bound can be improved when considering the whole problem, where the scale factor is not known.
Naturally, efficient solutions for arbitrary shapes or other shape classes are of interest as well.
To support shape formation algorithms, the related problems of finding the smallest scale at which a given shape contains the amoebot structure and finding scales and placements with maximal overlap and minimal difference could be investigated.
To expand the set of possible applications further, one could examine other, non-uniform scaling behaviors, perhaps allowing the shapes to maintain fine details at larger scales, similar to fractal shapes.


\bibliography{bibliography.bib}

\end{document}